\newtheorem{observation}[theorem]{Observation}
\newtheorem{proposition}[theorem]{Proposition}
\newcommand{\pO}{{\sf Player 1}}
\newcommand{\pT}{{\sf Player 2}}
\newcommand{\AO}{A_1}
\newcommand{\AT}{A_2}
\newcommand{\aO}{a}
\newcommand{\aT}{b}
\newcommand{\Col}{C}
\newcommand{\col}{c}
\newcommand{\tr}{\mathrm{col}}
\newcommand{\hS}{h}
\newcommand{\hAS}{H}
\newcommand{\hOS}{\hS}
\newcommand{\aH}{\rho}
\newcommand{\aHO}{\alpha}
\newcommand{\aHT}{\beta}
\newcommand{\aHinf}{\boldsymbol{\aH}}
\newcommand{\aHOinf}{\boldsymbol{\aHO}}
\newcommand{\aHTinf}{\boldsymbol{\aHT}}
\newcommand{\colH}{\gamma}
\newcommand{\colHinf}{\boldsymbol{\colH}}
\newcommand{\sO}{s}
\newcommand{\sT}{\tau}
\newcommand{\R}{\mathbb{R}}
\newcommand{\N}{\mathbb{N}}
\newcommand{\p}{\mathbb{P}}
\newcommand{\e}{\epsilon}
\title{Concurrent games and semi-random determinacy}
\titlerunning{Concurrent games and semi-random determinacy}
\author{St{\'e}phane Le Roux}{Darmstadt Technical University, department of mathematics, Darmstadt, Germany}{leroux@mathematik.tu-darmstadt.de}{}{}
\authorrunning{S. Le Roux} 
\subjclass{Dummy classification -- please refer to \url{http://www.acm.org/about/class/ccs98-html}}
\keywords{Two-player win/lose, graph, infinite duration, abstract winning condition}
\begin{document}

\maketitle

\begin{abstract}
Consider concurrent, infinite duration, two-player win/lose games played on graphs. If the winning condition satisfies some simple requirement, the existence of Player 1 winning (finite-memory) strategies is equivalent to the existence of winning (finite-memory) strategies in finitely many derived one-player games. Several classical winning conditions satisfy this simple requirement.

Under an additional requirement on the winning condition, the non-existence of Player 1 winning strategies from all vertices is equivalent to the existence of Player 2 stochastic strategies winning almost surely from all vertices. Only few classical winning conditions satisfy this additional requirement, but a fairness variant of omega-regular languages does.

\end{abstract}

\section{Introduction}

Computer science models systems interacting concurrently with their environment via infinite duration two-player win/lose games played on graphs: a play starts at a \emph{state} of the graph, where the players \emph{concurrently} choose one \emph{action} each and thus induce the next state, and so on for infinitely many rounds. The \emph{winning condition} is a given subset $W$ of the infinite sequences of states, and \pO\/ wins the play iff the sequence of visited states belongs to $W$. A \emph{strategy} of a player prescribes one action depending on what has been played so far, and a \emph{winning strategy} is a strategy ensuring victory regardless of the opponent strategy.

There are games where neither of the players has a winning strategy, but Borel determinacy~\cite{Martin75} guarantees the existence of a winning strategy in games where the players play alternately and the winning condition is a Borel set. Under Borel condition again, Blackwell determinacy~\cite{Martin98} guarantees a weaker conclusion when the players play concurrently: there exists  a value $v \in [0,1]$ such that for all $\e > 0$ the players have stochastic strategies guaranteeing victory with probability $v-\e$ and $1 - v - \e$, respectively.   

In the special case of concurrent games played on finite graphs with $\omega$-regular winning conditions, \cite{AH00} designed algorithms to decide the existence of (stochastic) strategies that are winning, winning with probability one, and winning with probability $1-\e$ for all $\e > 0$. \cite{AH00} also mentions a three-state game where only the latter exist, which exemplifies the complexity of the concurrent $\omega$-regular games on finite graphs. Then \cite{Chatterjee07} studied concurrent prefix independent winning conditions, which is strictly more general than the $\omega$-regular conditions, and \cite{GH10} further improved upon some results. Some of these results were extended recently to multi-player multi-outcome games, see e.g. \cite{BBMU12}, \cite{GMPRW17}.

{\bf The new games}
This article studies slightly different games: when the players concurrently choose one action each, it also produces a \emph{color}; the winning condition is now a given subset $W$ of the infinite sequences of colors; and \pO\/ wins the play iff the produced sequence of colors belongs to $W$. There are two differences between the classical games and the new games. First, the winning condition does not involve the visited states but the transitions instead; second it does so indirectly, via colors labeling the transitions. E.g. in the game on the left-hand side of Figure~\ref{fig:game}, starting at $q_0$, the action sequence $(a_1,b_1)(a_1,b_2)(a_1,b_1)$ yields the state sequence $q_0q_0q_1q_0$ and the color sequence $002$.
\begin{figure}
  \centering
 \[\begin{array}{rl@{\hspace{1cm}}rl@{\hspace{1cm}}c}
  q_0 
  &
\begin{array}{c|c|c|}
   \multicolumn{1}{c}{}&
	  \multicolumn{1}{c}{b_1}&
	  \multicolumn{1}{c}{b_2}\\
	\cline{2-3}
	a_1 & 0,q_0 & 0,q_1\\
	\cline{2-3}
	a_2 & 1,q_0 & 0,q_0\\
	\cline{2-3}	
\end{array}
 &
 q_1
 & 
  \begin{array}{c|c|c|}
   \multicolumn{1}{c}{}&
	  \multicolumn{1}{c}{b_1}&
	  \multicolumn{1}{c}{b_2}\\
	\cline{2-3}
	a_1 & 2,q_0 & 1,q_0\\
	\cline{2-3}
	a_2 & 2,q_1 & 2,q_1\\
	\cline{2-3}	
\end{array}
&
\begin{tikzpicture}
\node[state, initial] (q0) {$q_0$};
\node[right of=q0] (q2) {};
\node[state, right of=q2] (q1) {$q_1$};
\draw (q0) edge[loop above] node{0} (q0)
(q1) edge[->,above] node{1} (q0)
(q1) edge[loop above] node{2} (q1);
\end{tikzpicture}
 \end{array}\] 
  \caption{To the left, a concurrent game with states $q_0$, $q_1$, colors $0,1,2$, and two actions per player. To the right, a one-player game derived by using the delayed response $[(0,q_0)(0,q_0)];[(1,q_0)(2,q_1)]$}\label{fig:game}
\end{figure}
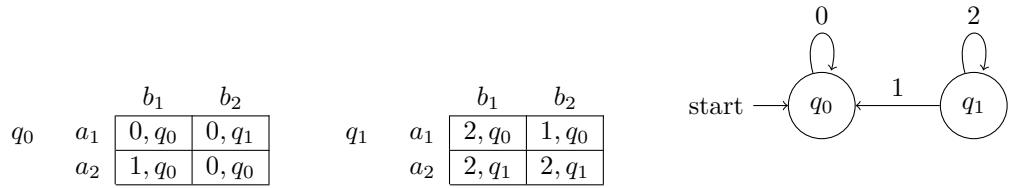

There are several reasons why these new games are interesting.
\begin{itemize}
\item The classical games can be encoded easily into the new ones by using state names as colors. Variants such as the games with colored states, or the colorless games with winning condition on the transitions can also be encoded easily into the new games.

\item The converse encoding may increase the state space (to infinity for games with infinitely many actions). Note that the transition-versus-state issue was already studied in the turned-based setting in \cite{CN06}. Likewise, colorless games are encoded easily in games with colors without size increase, and colors usually lead to more succinct winning conditions.

\item Colors are widely used in turn-based games, and for all games they help to study the winning conditions independently from the game structure, and thus to approximate or even characterize nice winning conditions for classes of games (usually simple to check) rather than for single games (usually more accurate but harder to check). This is exemplified by the difference between Theorems 5 and 7 in \cite{LP18}.

\item Whereas classical one-state games are trivial, the new one-state games are fairly complex and constitute a nice intermediate object towards the understanding of the more complex general games. Likewise, some one-state (aka stateless) objects from the literature are interesting in their own right: \cite{AFKS16} studied one-state multi-objective Markov decision processes; vector addition systems (VAS, \cite{KM69}) are still studied despite the vector addition systems with states (VASS, \cite{HP79}); the Minkowski games \cite{LPR17} defined with finite sets are a special case of the one-state games from this article.   
\end{itemize}

{\bf The main results}
\begin{itemize}
\item If $W$ is closed under \emph{interleaving} and \emph{prefix removal}, and if states and colors are finitely many, the existence of a \pO\/ winning (finite-memory) strategy is equivalent to the existence of winning (finite-memory) strategies in finitely many derived one-player games.

\item If, in addition, $W$ is \emph{factor-prefix complete} and there are finitely many actions, either \pO\/ has a winning strategy from one state, or every \pT\/ constant (stronger than positional!), positive, stochastic strategy is winning almost surely from all vertices. This is \emph{semi-random determinacy}.

\item One-state games enjoy a stronger conclusion under somewhat weaker assumptions: if the winning condition is \emph{factor-set complete} and closed under interleaving, if \pT\/ has finitely many actions, either \pO\/ has a winning strategy, or every \pT\/ constant, positive stochastic strategy is winning almost surely.
\end{itemize}
The finitary flavor of the above characterizations yields decidability and memory sufficiency, in the rough range of double exponentials in the number of states times the number of colors.

In the context of semi-random determinacy, a neutral, random \pT\/ is therefore as bad for \pO\/ as a hostile environment. Also, the victory is clear-cut in the above results: no need for approximate optimal strategies, no need for the notion of value, etc. This is due to the assumptions, and it is legitimate to wonder how restrictive they are.

Several classical winning conditions from computer science are closed under interleaving, see Section~\ref{sect:appli}. The Muller condition is not, but the parity condition is, so the first characterization result extends to the concurrent Muller games via the Last Appearance Record (LAR), as done in \cite{Thomas97}. So, closeness under interleaving is not as restrictive as it may seem.

Fewer classical winning conditions are factor-prefix complete (defined in Section~\ref{subsect:main-results2}), but the boundedness condition from~\cite{LPR17} and a variant of the $\omega$-regular languages are both closed under interleaving and factor-prefix complete. The variant is as follows: each produced color requests some combinations of colors to occur in the future. In winning plays, the number of currently unsatisfied requests should be uniformly bounded over time. It may be relevant even as a business model: at every time unit the system can pay penalties for every currently unsatisfied request, which may be covered by greater, albeit bounded, instantaneous income.

The above variant relates to the notion of \emph{fairness}, which requires that co-finitely many requests are eventually satisfied. The \emph{finitary fairness}~\cite{AH98} additionally requires uniformly bounded response time. This idea was use in \cite{DJP03} to study temporal logic, and in \cite{CHH09} to study finitary parity games. Requiring uniformly bounded response time (or variants thereof) to study games has been further used later, e.g. in \cite{BHR16}. However, these notions of fairness do not enjoy closeness under interleaving and factor-prefix completeness. (Details in Section~\ref{sect:appli}.)

{\bf Related works}
The semi-random determinacy implies the \emph{bounded limit-one property} from \cite{AH00} for the new games: if one state has positive value, one state has value one.

Corollary~\ref{cor:half-pos-one} generalizes the nice Theorem 4 from \cite{Kopczynski06}. Note that the convexity of winning conditions defined in \cite{Kopczynski06} is essentially the same as the interleaving closeness defined here.  

This article also shares similarities with \cite{GZ05}: both use abstract winning conditions, and both characterize the existence of winning strategies in two-player games by the existence of winning strategies in finitely many derived one-player games. Several articles adopted a similar approach: \cite{SLR13} and \cite{SLR14} reduce multi-player multi-outcome Borel games to simpler two-player win/lose Borel games, and characterize the preferences and structures that guarantee the existence of Nash equilibrium in infinite tree-games; \cite{SLR15} does the same to characterize the preferences that guarantee the existence of subgame perfect equilibrium (at low levels of the Borel hierarchy); \cite{LP17} does the same to almost characterize the existence of finite-memory Nash equilibrium in games on finite graphs; \cite{LP14} reduces one-shot concurrent two-player multi-outcome games to simpler one-shot concurrent two-player win/lose games, with applications to generalized Muller games and generalized ``parity'' games. 

One of the benefits of abstraction is that it leads to more general results: e.g. \cite{LP17} noted that the lexicographic product of mean-payoff and reachability objectives cannot be encoded into real-valued payoffs, and \cite{LP18} proved it.

{\bf Structure of the article}
Section~\ref{sect:def} gives basic definitions. Section~\ref{sect:main-results} presents the main results and additional definitions. Section~\ref{sect:proofs} discusses the key elements of the proofs. Section~\ref{sect:appli} presents applications.

\section{Definitions}\label{sect:def}

The folklore Observation~\ref{obs:inf-ext} below will be used extensively to lift properties from finite words to infinite words. It will be first explicitly invoked, and then only implicitly used.

\begin{observation}\label{obs:inf-ext}
Let $f:S^*\to T^*$ be such that $u \sqsubseteq v \Rightarrow f(u) \sqsubseteq f(v)$. Then $f$ can be uniquely extended to $S^* \cup S^\omega \to T^* \cup T^\omega$ such that $f(\rho_{\leq n}) \sqsubseteq f(\rho)$ for all $n \in N$ and $\rho \in S^\omega$.
\end{observation}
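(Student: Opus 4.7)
Fix $\rho \in S^\omega$. The finite prefixes form an ascending chain $\rho_{\leq 0} \sqsubseteq \rho_{\leq 1} \sqsubseteq \cdots$ in $S^*$, so the hypothesis on $f$ immediately yields that $(f(\rho_{\leq n}))_{n \in \N}$ is an ascending chain in $(T^*, \sqsubseteq)$. The whole idea is to define $f(\rho)$ as the supremum of this chain in the prefix-ordered complete partial order $(T^* \cup T^\omega, \sqsubseteq)$, which always exists, and to check that this is the unique extension compatible with the required property.

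Concretely, I would split on whether the sequence of lengths $\bigl(|f(\rho_{\leq n})|\bigr)_n$ is bounded. If it is, the chain stabilizes at some $w \in T^*$, and I set $f(\rho) := w$. Otherwise, I define $f(\rho) \in T^\omega$ letter by letter: the $k$-th letter of $f(\rho)$ equals the $k$-th letter of $f(\rho_{\leq n})$ for any $n$ such that $|f(\rho_{\leq n})| > k$. This is well-defined because the chain is prefix-ordered, so any two sufficiently long $f(\rho_{\leq n})$ and $f(\rho_{\leq m})$ agree on their first $k+1$ letters. In both cases the property $f(\rho_{\leq n}) \sqsubseteq f(\rho)$ for all $n$ holds by construction.

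For uniqueness, the constraint $f(\rho_{\leq n}) \sqsubseteq f(\rho)$ for all $n$ forces $f(\rho)$ to be an upper bound of the chain in $(T^* \cup T^\omega, \sqsubseteq)$. In the unbounded-lengths case such an upper bound must be infinite (no finite word can prefix-extend arbitrarily long words), and there is only one infinite word having all $f(\rho_{\leq n})$ as prefixes, namely the one built above. In the bounded case the chain has a greatest element $w \in T^*$, and the supremum interpretation pins down $f(\rho) = w$ as the canonical extension.

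There is no real obstacle; the argument is essentially bookkeeping. The only point demanding a line of justification is the well-definedness of the letter-by-letter construction in the unbounded case, which follows immediately from the chain being prefix-ordered. The rest is a direct appeal to the supremum in the prefix CPO $(T^* \cup T^\omega, \sqsubseteq)$.
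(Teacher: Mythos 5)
Your argument is correct, and it is exactly the standard supremum-in-the-prefix-order construction that the paper implicitly relies on: Observation~\ref{obs:inf-ext} is stated as folklore and the paper supplies no proof of its own. Your caveat that the condition $f(\rho_{\leq n}) \sqsubseteq f(\rho)$ alone does not pin down $f(\rho)$ when the lengths $|f(\rho_{\leq n})|$ are bounded (uniqueness then coming from taking the supremum, i.e.\ the least upper bound, rather than an arbitrary upper bound) is well taken; in every use the paper makes of the observation the function appends one letter per step, so the lengths are unbounded and the uniqueness claim is exact as stated.
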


{\bf Games}
A \emph{game (with colors and states)} is a tuple $\langle  \AO, \AT, Q, q_0, \delta, \Col, \tr, W\rangle$ such that 
\begin{itemize}
\item $\AO$ and $\AT$ are non-empty sets (of actions for \pO\/  and \pT\/),
\item $Q$ is a non-empty set (of states),
\item $q_0 \in Q$ (is the initial state),
\item $\delta: Q \times \AO \times \AT \to Q$ (is the state update function).
\item $\Col$ is a non-empty set (of colors),
\item $\tr : Q \times  \AO \times \AT \to \Col$ (is a color trace),
\item $W \subseteq \Col^\omega$ (is the winning condition for \pO\/)
\end{itemize}

{\bf Histories} The \emph{full histories} (\emph{full runs}) of such a game are the finite (infinite) words over $\AO \times \AT$, the \emph{\pT\/ histories} (\emph{\pT\/ runs}) are the finite (infinite) words over $\AT$, and the \emph{\pO\/ histories} (\emph{\pO\/ runs}) are the finite (infinite) words over $\AO$.

{\bf Strategies} A \pO\/ strategy is a function from $\AT^*$ to $\AO$. Informally, it requires \pO\/ to remember exactly how \pT\/ has played so far, and it tells \pO\/ how to play.

{\bf Induced histories} The function $\hOS$ is defined inductively below. As arguments it expects a strategy and a \pT\/ history in $\AT^*$, and it returns a full history: the very full history that, morally, should happen if \pO\/ followed the given strategy while \pT\/ played the given \pT\/ history. $\hOS(\sO,\e) := \e$ and $\hOS(\sO,\aHT\cdot \aT) := \hOS(\sO,\aHT) \cdot (\sO(\aHT),\aT)$.

By Observation~\ref{obs:inf-ext} the function $\hOS$ is extended to expect opponents runs in $\AT^\omega$ and return full runs: $\hOS(\sO,\aHTinf)$ is the only action run whose prefixes are the $\hOS(\sO,\aHTinf_{\leq n})$ for $n \in \N$.

{\bf Winning strategies} A \pO\/ strategy $\sO$ is winning if $\hOS(\sO, \aHTinf) \in W$ for all $\aHTinf \in \AT^\omega$. If there is a \pO\/ winning strategy in a game, one says that \pO\/ wins the game.

{\bf Extending the update and trace functions}
The state update function $\delta$ is extended to $\Delta:(\AO \times \AT)^* \to Q$ inductively: $\Delta(\e) := q_0$ and $\Delta(\aH \cdot (\aO,\aT)) :=\delta(\Delta(\aH), \aO,\aT)$. Using $\Delta$, the trace function $\tr$ is naturally lifted to full histories by induction: $\tr(\e) := \e$ and $\tr(\aH \cdot (\aO,\aT)) := \tr(\aH) \cdot \tr(\Delta(\aH), \aO,\aT)$. The trace is further extended to full runs  by Observation~\ref{obs:inf-ext}. When considering several games, indices may be added to the corresponding $\Delta$ and $\tr$. 

{\bf Memory}
A \pO\/ strategy $\sO$ is said to use memory $M$, or memory size $\log_2|M|$, if there exist a set $M$ and $m_0 \in M$, and two functions $\sigma: Q \times M \to \AO$ and $\mu: Q \times M \times \AT \to M$ such that $\sO(\aHT) = \sigma(\Delta \circ \hOS(\sO,\aHT),m(\aHT))$, where $m$ is defined inductively by $m(\e) := m_0$ and $m(\aHT\aT) := \mu(\Delta\circ \hOS(\sO,\aHT),m(\aHT),\aT)$. If $M$ is finite, $\sO$ is called a finite-memory strategy. Note that every \pO\/ strategy uses memory $\AT^\omega$.

{\bf One-player games}
Intuitively, a \emph{one-player game (with colors and states)} amounts to a game where \pT\/ has only one strategy available, i.e. $|\AT| = 1$. Formally, it is a tuple $\langle  \AO, Q, q_0, \delta, \Col, \tr, W\rangle$ such that $\AO$, $Q$, and $\Col$ are non-empty sets, $q_0 \in Q$, $\delta: Q \times \AO \to Q$, $\tr : Q \times \AO \to \Col$, and $W \subseteq \Col^\omega$. In this context, the \emph{full histories} (\emph{full runs}) of such a game are the finite (infinite) words over $\AO$, and the \emph{\pT\/ histories} of \pO\/ are the natural numbers (telling how many rounds have been played). There is only one \emph{\pT\/ run}, namely $\omega$. Then, a \pO\/ strategy is a function from $\N$ to $\AO$, and the notation for the induced full histories is overloaded: $\hOS(\sO,0) := \e$ and $\hOS(\sO,n+1) := \hOS(\sO,n) \cdot \sO(n)$. By Observation~\ref{obs:inf-ext} the function $\hOS$ is (again) extended: $\hOS(\sO,\omega)$ is the only action run whose prefixes are the $\hOS(\sO,n)$ for $n \in \N$. A \pO\/ strategy $\sO$ is winning if $\tr \circ \hOS(\sO,\omega) \in W$.


{\bf Prefix removal} A set of infinite sequences is closed under prefix removal if the tails of the sequences from the set are again in the set. Formally, $W \subseteq \Col^\omega$ is closed under prefix removal if the following holds: $\forall (\colH,\colHinf) \in \Col^* \times \Col^\omega,\,\colH \cdot \colHinf \in W\,\Rightarrow\, \colHinf \in W$. Note that closeness under prefix removal is weaker than the prefix independence assumed in~\cite{Chatterjee07}, \cite{GH10}, and \cite{Kopczynski06}.

{\bf Interleaving} Interleaving two infinite sequences consists in enumerating sequentially (part of) the two sequences to produce a new infinite sequence. For example, interleaving $(2n)_{n\in \N}$ and $(2n+1)_{n\in \N}$ can produce the sequences $(n)_{n\in \N}$ (perfect alternation), $1 \cdot 0 \cdot 3 \cdot 5 \cdot 2 \cdot 7 \cdot 4 \cdot 6 \cdot (n+8)_{n\in \N}$, and $(2n)_{n\in \N}$ (by enumerating the first sequence only), but not the sequences $(4n)_{n\in \N}$ or $0 \cdot 1 \cdot 4 \cdot 3 \dots$.

{\bf Delayed response}
Consider a game $g = \langle  \AO, \AT, Q, q_0, \delta, \Col, \tr, W\rangle$ with finite $Q$ and $\Col$. For every $q \in Q$ let $E_1^q,\dots,E_{k_q}^q$ be the elements of $\{(\tr,\delta)(q,\aO,\AT)\,\mid\, \aO \in \AO\}$, where $(\tr,\delta)(q,\aO,\AT) := \{(\tr(q,\aO,\aT),\delta(q,\aO,\aT)) \,\mid\, \aT \in \AT\}$ for all $\aO \in \AO$. The elements of  $\otimes_{q\in Q, i \leq k_q}E_i^q$ are called the \pT\/ \emph{delayed responses}. Intuitively, a \pT\/ delayed response amounts to a \pT\/ positional strategy in (and only in) a sequentialized version of the game. In every round of this version, \pO\/ chooses an action first, then \pT\/ chooses an action (or more precisely some color and state among the pairs he could induce by choosing an action). E.g. $[(0,q_0)(0,q_0)];[(1,q_0)(2,q_1)]$ is a delayed response for Figure~\ref{fig:game}. It means that at state $q_0$, \pT\/ selects $(0,q_0)$ for both actions of \pO\/, and at state $q_1$ it selects $(1,q_1)$ if \pO\/ chooses action $a_1$. Note that delayed responses are not \pT\/ (positional) strategies in the concurrent game, e.g. as $[(0,q_0)(0,q_0)]$ is not achievable in any column.

{\bf Derived one-player games}
Let $t$ be  a \pT\/ delayed response. The  one-player game $g(t) := \langle  \AO, Q, q_0, \delta_t, \Col, \tr_t, W\rangle$ is defined by $(\tr_t,\delta_t)(q,\aO) := t_{q,(\tr,\delta)(q,\aO,\AT)}$, the projection of $t$ on the $(q,E_i^q)$-component such that $E_i^q = (\tr,\delta)(q,\aO,\AT)$. Intuitively, $g(t)$ is the game obtained by letting \pT\/ fix his strategy (to realize) $t$ in the sequentialized version of $g$. For example, the game on the left-hand side of Figure~\ref{fig:game} applied to the delayed response $[(0,q_0)(0,q_0)];[(1,q_0)(2,q_1)]$ yields the game on the right-hand side of Figure~\ref{fig:game}.

\section{Main results}\label{sect:main-results}

Section~\ref{subsect:main-results1} characterizes the existence of \pO\/ winning strategies and gives a complexity result.
Section~\ref{subsect:main-results2} defines additional concepts and uses the above characterization to characterize  the existence of \pT\/ everywhere-winning stochastic strategies.
Section~\ref{subsect:main-results3} studies the special case of one-state games and presents the semi-random determinacy.

\subsection{Existence of \pO\/ winning strategies}\label{subsect:main-results1}

Theorem~\ref{thm:interleaving-prefix removal-eq} below characterizes the existence of \pO\/ winning strategies in a game via the existence of winning strategies in finitely many derived one-player games. Theorem~\ref{thm:interleaving-win-all-eq} afterwards drops the assumption on closeness under prefix removal from Theorem~\ref{thm:interleaving-prefix removal-eq},  but at the cost of a universal quantification over the starting state of the game. In Theorems~\ref{thm:interleaving-prefix removal-eq} and \ref{thm:interleaving-win-all-eq}, the finiteness and the closeness  assumptions are used only to prove the \ref{thm:interleaving-win-all-eq2}$\,\Rightarrow\,$\ref{thm:interleaving-win-all-eq1} implications.
\begin{theorem}\label{thm:interleaving-prefix removal-eq}
Consider a game $g = \langle  \AO, \AT, Q, q_0, \delta, \Col, \tr, W\rangle$. If $Q$ and $\Col$ are finite, and $W$ is closed under interleaving and prefix removal, the following are equivalent.
\begin{enumerate}
\item\label{thm:interleaving-prefix removal-eq1} \pO\/ wins $g$.
\item\label{thm:interleaving-prefix removal-eq2} \pO\/ wins $g(t)$ for all delayed responses $t$.
\end{enumerate}
If $\AO$ is finite and \pO\/ wins, she can do it with memory size $O(f(|\AO|,|Q|,|\Col|)  \cdot (|\Col \times Q|)^{|Q|2^{|\Col \times Q|}})$, where $f(|\AO|,|Q|,|\Col|)$ is a sufficient memory size to win the one-player games using $\AO$, $Q$ and $\Col$.
\end{theorem}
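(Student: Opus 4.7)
\textbf{The easy direction (\ref{thm:interleaving-prefix removal-eq1})$\Rightarrow$(\ref{thm:interleaving-prefix removal-eq2})} should go through by direct simulation, using neither finiteness nor the closure of $W$. Given a \pO\/ winning strategy $\sO$ in $g$ and a delayed response $t$, I would build a \pO\/ strategy $\sO'$ in $g(t)$ by simultaneously maintaining a simulated \pT\/ history $\aHT \in \AT^*$: at each round \pO\/ plays $\aO := \sO(\aHT)$; the pair $t_{q,(\tr,\delta)(q,\aO,\AT)}$ that $g(t)$ produces is, by construction, an element of $\{(\tr(q,\aO,\aT),\delta(q,\aO,\aT)) : \aT \in \AT\}$, so some $\aT \in \AT$ realizes it in $g$, and I append this $\aT$ to $\aHT$. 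The color sequence that $\sO'$ induces in $g(t)$ then coincides with $\tr \circ \hOS(\sO,\aHTinf) \in W$, where $\aHTinf$ is the simulated \pT\/ run.

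\textbf{The hard direction (\ref{thm:interleaving-prefix removal-eq2})$\Rightarrow$(\ref{thm:interleaving-prefix removal-eq1})} is where the real work lies. Finiteness of $Q$ and $\Col$ bounds the set $\mathcal{T}$ of delayed responses by $(|\Col \times Q|)^{|Q|\cdot 2^{|\Col \times Q|}}$, since per state we pick one element of size at most $|\Col \times Q|$ from each of at most $2^{|\Col \times Q|}$ sets $E_i^q$. For each $t \in \mathcal{T}$, fix a winning \pO\/ strategy $\sigma_t$ in $g(t)$ of memory at most $f(|\AO|,|Q|,|\Col|)$. The plan is to combine the $\sigma_t$'s into a \pO\/ strategy $\sO^\star$ in $g$ via a switching scheme: the memory of $\sO^\star$ carries a current $t \in \mathcal{T}$ together with the internal memory state of $\sigma_t$; in each round \pO\/ plays according to $\sigma_t$; and upon witnessing \pT\/ behavior that contradicts $t$ at some previously visited $(q,i)$ (i.e.\ the realized (color, state) disagrees with $t_{q,i}$), $\sO^\star$ rotates $t$ to the next candidate in $\mathcal{T}$. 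The product memory size $|\mathcal{T}| \cdot f(|\AO|,|Q|,|\Col|)$ yields the stated bound.

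\textbf{Correctness and the main obstacle.} I would argue that $\sO^\star$ wins $g$ by case-splitting on whether the current $t$ eventually stabilizes. If $\sO^\star$ settles on some $t^\star$, then from that point on every \pT\/ move agrees with $t^\star$ at its visited $(q,i)$, so the color tail is the trace of a play of $\sigma_{t^\star}$ in $g(t^\star)$; I would invoke closure of $W$ under prefix removal (combined with reducing to a sub-game or forcing a virtual reset at the switch) to conclude that the full color run lies in $W$. If $\sO^\star$ never stabilizes, then by finiteness of $\mathcal{T}$ some $t$ is re-entered infinitely often, producing arbitrarily long color segments that are prefixes of $\sigma_t$-plays in $g(t)$; closure of $W$ under interleaving then assembles the overall color run from these segments and places it in $W$. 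The main obstacle I foresee is making the two cases airtight: on the stabilization side the tail corresponds to $\sigma_{t^\star}$ resumed at a possibly non-initial state, which calls either for an inductive appeal to the theorem on a sub-game or for a careful reset of the outer memory; on the non-stabilization side the segments must be long enough to correspond to honest prefixes of single winning plays in the respective $g(t)$, which probably requires \pO\/ to advance $\sigma_t$ by at least one round before rotating, and a clean definition of what ``interleaving of prefixes of winning runs'' means in this color-labeled setting.
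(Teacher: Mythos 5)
Your easy direction is fine and is essentially the paper's argument (simulate a Player~2 run realizing $t$ and copy $\sO$). The hard direction, however, has a genuine gap: the detect-and-rotate scheme does not produce the interleaving decomposition you need. First, ``Player~2's behaviour contradicts $t$'' is only meaningful relative to the action you just played, and in a round where a contradiction is observed the produced colour is the expected next step of \emph{no} $\sigma_t$-play; after rotating, $\sigma_{t'}$ is a winning strategy of $g(t')$ \emph{from $q_0$}, while the real play sits at an arbitrary state $q$, so the simulated one-player plays and the actual run come apart and never realign (resuming $\sigma_{t'}$ at a mismatched state is not a play of $g(t')$ at all, and hypothesis~(\ref{thm:interleaving-prefix removal-eq2}) gives you nothing about games started at $q\neq q_0$ --- prefix removal of $W$ lets you drop a prefix of a sequence already known to be in $W$, it does not turn a restarted strategy into a winning one). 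Second, in the non-stabilizing case you only get arbitrarily long \emph{finite prefixes} of winning runs, and closure under interleaving applies to complete infinite members of $W$, not to concatenations of finite prefixes: for the boundedness condition (one of the paper's target examples), gluing longer and longer prefixes of bounded-sum runs can accumulate an unbounded partial sum, so that step fails outright.

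The paper's proof goes a different way precisely to avoid these two problems. It inducts on the number of states at which Player~2 is actually involved, fixes one such state $q_1$, and combines winning strategies of the $q_1$-derived games $g_{q_0}(q_1,\overline{e})$ and $g_{q_1}(q_1,\overline{e})$ (the latter obtained from the former via the substrategy lemma, which is where prefix removal is used, or via the ``avoids $q_1$'' case). The key missing idea is Lemma~\ref{lem:fpi}: at $q_1$, Player~1 can choose an action $\aO$ such that \emph{every} possible Player~2 reply yields exactly the colour/state expected by the fixed winning strategy of \emph{some} derived game. Hence no deviation is ever observed; Player~2's move merely schedules which derived game advances this round, every pause and resumption happens at $q_1$ (so states always match), and the global colour run is a genuine interleaving of \emph{complete} winning runs, to which interleaving closure applies. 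If you want to salvage your write-up, you essentially have to replace the rotation-on-deviation mechanism by this ``one concurrent state at a time plus Lemma~\ref{lem:fpi}'' combination, with the memory bound then coming from nesting that construction over the states involving Player~2.
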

\begin{theorem}\label{thm:interleaving-win-all-eq}
Consider games $g_{q} = \langle  \AO, \AT, Q, q, \delta, \Col, \tr, W\rangle$ parametrized by $q \in Q$. If $Q$ and $\Col$ are finite and $W$ is interleaving-closed, the following are equivalent.
\begin{enumerate}
\item\label{thm:interleaving-win-all-eq1} \pO\/ wins $g_{q}$ for all $q \in Q$.
\item\label{thm:interleaving-win-all-eq2} \pO\/ wins $g_{q}(t)$ for all $q \in Q$ and delayed responses $t$.
\end{enumerate}
If the above holds, \pO\/ wins every $g_q$ with memory size as in Theorem~\ref{thm:interleaving-prefix removal-eq}.
\end{theorem}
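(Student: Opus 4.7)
The implication $(1) \Rightarrow (2)$ uses none of the hypotheses. Fix $q \in Q$ and a delayed response $t$, and let $\sO$ be a \pO\/ winning strategy in $g_q$. Construct a strategy $\hat\sO$ in the one-player game $g_q(t)$ by maintaining a virtual \pT\/ history $\aHT$, initially empty. At each round, $\hat\sO$ plays $\sO(\aHT)$; the delayed response $t$ then picks some pair $(\col,q') \in \Col \times Q$, and one chooses any $\aT \in \AT$ producing $(\col,q')$ from the current state in $g_q$ and appends it to $\aHT$. The color sequence produced in $g_q(t)$ is exactly the one $\sO$ would produce in $g_q$ against the virtual \pT\/ run, and hence belongs to $W$.

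For the substantial direction $(2) \Rightarrow (1)$ the plan is to reduce to Theorem~\ref{thm:interleaving-prefix removal-eq} after strengthening the winning condition. Set $W^{\ast}$ to be the set of $\colHinf \in \Col^\omega$ all of whose tails lie in $W$. By construction $W^{\ast}$ is closed under prefix removal, and it is closed under interleaving because any suffix of an interleaving of two runs can be re-expressed as an interleaving of two tails of those runs, so interleaving closure of $W$ propagates to $W^{\ast}$. Applying Theorem~\ref{thm:interleaving-prefix removal-eq} to each $(g_q, W^{\ast})$ and noting $W^{\ast} \subseteq W$, it suffices to promote hypothesis~(2) from $W$ to $W^{\ast}$: for every $q' \in Q$ and every delayed response $t$, \pO\/ wins $g_{q'}(t)$ with the stronger objective $W^{\ast}$.

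This promotion is where the main obstacle lies. The idea is to exploit the family $\{\sO_{q'',t}\}_{q'' \in Q}$ given by~(2) in a coordinated way: in $g_{q'}(t)$ one plays according to $\sO_{q',t}$ for some initial stretch, then whenever the play revisits some state $q''$ one restarts using a fresh copy of $\sO_{q'',t}$. Because $Q$ is finite, pigeonhole forces some state to be revisited infinitely often, so each tail of the resulting color sequence can be written as an interleaving of tails produced by winning strategies for~$W$, and hence lies in $W$ by interleaving closure. Feeding this back into Theorem~\ref{thm:interleaving-prefix removal-eq} for $(g_q, W^{\ast})$ yields \pO\/ winning strategies in every $g_q$; since $|\AO|,|Q|,|\Col|$ are unchanged by the passage from $W$ to $W^{\ast}$, the memory bound inherited from Theorem~\ref{thm:interleaving-prefix removal-eq} matches the one in the statement.
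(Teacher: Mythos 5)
Your direction \ref{thm:interleaving-win-all-eq1}$\,\Rightarrow\,$\ref{thm:interleaving-win-all-eq2} is fine and matches the paper's Lemma~\ref{lem:win-implies-win-delayed-response}. The problem is the reduction of \ref{thm:interleaving-win-all-eq2}$\,\Rightarrow\,$\ref{thm:interleaving-win-all-eq1} to Theorem~\ref{thm:interleaving-prefix removal-eq} via $W^{\ast}:=\{\colHinf : \text{all tails of }\colHinf\text{ are in }W\}$. While $W^{\ast}$ is indeed closed under interleaving and prefix removal, the ``promotion'' step you rely on --- that hypothesis~\ref{thm:interleaving-win-all-eq2} for $W$ implies that \pO\/ wins every $g_{q}(t)$ for the objective $W^{\ast}$ --- is not just unproven, it is false. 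Take a single state with two self-loops colored $(+1,1)$ and $(-1,0)$, and let $W$ be the energy-parity condition, which is interleaving-closed (Observation~\ref{obs:concrete-close-complete}) but not closed under prefix removal. Alternating the two loops wins $W$ (so hypothesis~\ref{thm:interleaving-win-all-eq2} holds, and the theorem's conclusion holds), yet no run at all lies in $W^{\ast}$: a run using the $(-1,0)$ loop has a tail whose first partial sum is negative, and a run eventually avoiding it has a tail violating parity. So the intermediate statement you feed into Theorem~\ref{thm:interleaving-prefix removal-eq} can fail even when the theorem you are proving is true; the whole route through $W^{\ast}$ cannot be repaired by a better argument for the promotion.

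The restart argument you sketch for the promotion also fails on its own terms, for two reasons. First, restarting ``a fresh copy'' at each revisit produces a concatenation of infinitely many finite blocks, each a finite prefix of some winning run; closeness under interleaving only covers interleavings of finitely many sequences of $W$ (of which at least one is consumed entirely), so it says nothing about such infinite concatenations of prefixes --- e.g.\ reachability of a color is interleaving-closed, yet concatenating prefixes that each stop before the target color yields a losing run. Second, you invoke ``tails produced by winning strategies for $W$ \dots lie in $W$ by interleaving closure'', but tails of $W$-runs need not be in $W$ precisely because Theorem~\ref{thm:interleaving-win-all-eq} drops closeness under prefix removal; if that closeness were available, Theorem~\ref{thm:interleaving-prefix removal-eq} would apply directly and no detour would be needed. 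The paper instead proves \ref{thm:interleaving-win-all-eq2}$\,\Rightarrow\,$\ref{thm:interleaving-win-all-eq1} by induction on the number of states involving \pT\/: it fixes one such state $q_1$, obtains winning strategies in the partially sequentialized games $g_{q_0}(q_1,\overline{e})$ and $g_{q_1}(q_1,\overline{e})$ (the latter directly from the hypothesis at starting state $q_1$, which is how prefix removal is avoided), and combines them with Lemma~\ref{lem:win-delayed-response-implies-win}, where the compound strategy \emph{resumes} finitely many fixed infinite plays rather than restarting fresh copies, so that the produced run is a genuine finitary interleaving of runs in $W$, and Lemma~\ref{lem:fpi} supplies the action to play at $q_1$.
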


In games that are (or encode) turn-based games, the delayed responses are \pT\/ positional strategies. So, restricting Theorems~\ref{thm:interleaving-prefix removal-eq} and \ref {thm:interleaving-win-all-eq} to turn-based games yields Corollaries~\ref{cor:half-pos-one} and \ref{cor:half-pos-all}, respectively. Note that Corollary~\ref{cor:half-pos-one} generalizes Theorem 4 from \cite{Kopczynski06} by only assuming closeness under prefix removal instead of prefix independence. This is significant since the safety condition is closed under interleaving and prefix removal, but is not prefix independent.
\begin{corollary}\label{cor:half-pos-one}
Consider a game $g = \langle  \AO, \AT, Q, q_0, \delta, \Col, \tr, W\rangle$ encoding a turn-based game. If $Q$ and $\Col$ are finite, and $W$ is closed under interleaving and prefix removal, either \pO\/ has a winning strategy or \pT\/ has a positional winning strategy.
\end{corollary}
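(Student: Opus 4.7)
The plan is to invoke Theorem~\ref{thm:interleaving-prefix removal-eq} and then translate its conclusion through the turn-based structure. The hypotheses of the corollary (finite $Q$ and $\Col$, and $W$ closed under interleaving and prefix removal) match the hypotheses of Theorem~\ref{thm:interleaving-prefix removal-eq} exactly, so the theorem is immediately applicable and does the bulk of the work.

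Contraposing the direction \ref{thm:interleaving-prefix removal-eq2}$\,\Rightarrow\,$\ref{thm:interleaving-prefix removal-eq1}, if \pO\/ has no winning strategy in $g$ then some delayed response $t$ satisfies that \pO\/ does not win $g(t)$. It then remains to argue that this $t$ corresponds to a \pT\/ positional winning strategy in $g$, as asserted in the sentence preceding the corollary. The key observation is that in a (game encoding a) turn-based game, at every \pO\/-controlled state $q$ each set $(\tr,\delta)(q,\aO,\AT)$ is a singleton (since \pT\/ has no genuine choice there), whereas at every \pT\/-controlled state $q$ the set $(\tr,\delta)(q,\aO,\AT)$ is independent of $\aO$ and equals precisely the set of (color, next-state) pairs that \pT\/ can realise at $q$. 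Selecting one such pair per state is, by inspection of the definition, exactly a \pT\/ positional strategy $\sT$; and the forced choices at \pO\/-states contribute nothing, so the data of a delayed response reduces to the data of $\sT$.

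By construction $g(t)$ is then the one-player game obtained from $g$ by freezing \pT\/ to $\sT$. Hence ``\pO\/ does not win $g(t)$'' unpacks to: for every \pO\/ strategy $\sO$, the run induced against $\sT$ in $g$ produces a color trace outside $W$. Equivalently, $\sT$ is a \pT\/ positional winning strategy in $g$, which is the desired conclusion. The only real obstacle is the delayed-response/positional-strategy correspondence, and this is just a matter of unwinding definitions: turn-based play removes concurrency, so the sequentialisation underlying the notion of delayed response collapses to an ordinary positional choice for the unique actor at each state.
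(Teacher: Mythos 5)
Your proposal is correct and follows essentially the same route as the paper, which obtains the corollary by restricting Theorem~\ref{thm:interleaving-prefix removal-eq} to turn-based games and noting that delayed responses there are exactly \pT\/ positional strategies (with the play against such a strategy matching the derived one-player game, as in Lemma~\ref{lem:2p-t-1p}). Your unpacking of the delayed-response/positional-strategy correspondence and of ``\pO\/ does not win $g(t)$'' is precisely the intended argument.
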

\begin{corollary}\label{cor:half-pos-all}
Consider games $g_{q} = \langle  \AO, \AT, Q, q, \delta, \Col, \tr, W\rangle$ parametrized by $q \in Q$ and encoding a turn-based games. If $Q$ and $\Col$ are finite, and $W$ is closed under interleaving, either \pO\/ wins all $g_q$, or \pT\/ has a positional winning strategy for some $g_q$.
\end{corollary}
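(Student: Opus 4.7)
The plan is to derive Corollary~\ref{cor:half-pos-all} as a direct consequence of Theorem~\ref{thm:interleaving-win-all-eq} applied in its contrapositive form. Assume \pO\/ does \emph{not} win some $g_q$. Then by the contrapositive of the implication \ref{thm:interleaving-win-all-eq2}$\,\Rightarrow\,$\ref{thm:interleaving-win-all-eq1} of Theorem~\ref{thm:interleaving-win-all-eq} (whose hypotheses are exactly the ones stated here: finiteness of $Q$ and $\Col$, closure of $W$ under interleaving) there exist some $q^\ast \in Q$ and some delayed response $t$ such that \pO\/ does not win the one-player game $g_{q^\ast}(t)$. The remaining task is to translate this failure of \pO\/ in a derived one-player game into a positional winning strategy for \pT\/ in the concurrent game $g_{q^\ast}$.

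The key observation is the one anticipated in the paragraph preceding the corollary: in a game that encodes a turn-based game, a delayed response $t$ is nothing but a \pT\/ positional strategy. Indeed, at every state $q$ controlled by \pO\/, the action of \pT\/ has no effect, so the set $(\tr,\delta)(q,\aO,\AT)$ is a singleton for every $\aO$, and the $E_i^q$-components of $t$ carry no information. At every state $q$ controlled by \pT\/, the action of \pO\/ has no effect, so there is a single class $E_1^q = (\tr,\delta)(q,\aO,\AT)$ (the same for every $\aO$), and choosing one element of this set is exactly choosing one \pT\/ action (up to equivalent actions that produce the same color and next state). So the delayed response $t$ canonically defines a positional \pT\/ strategy $\sT$ in $g_{q^\ast}$.

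It then remains to show that $\sT$ is winning in $g_{q^\ast}$. By construction of the derived game $g_{q^\ast}(t)$, for every \pO\/ strategy $\sO$ in the concurrent game $g_{q^\ast}$ (which, on the turn-based game, is the same object as a \pO\/ strategy in $g_{q^\ast}(t)$), the trace of the run $\hOS(\sO,\sT^\omega)$ produced against $\sT$ in $g_{q^\ast}$ coincides with the trace of the corresponding run in $g_{q^\ast}(t)$. Since \pO\/ does not win $g_{q^\ast}(t)$, every such trace lies outside $W$, which is precisely saying that $\sT$ is a \pT\/ positional winning strategy from $q^\ast$.

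The only mildly delicate step is the second one: correctly identifying delayed responses with positional \pT\/ strategies on the nose in the turn-based encoding, which requires being careful about the grouping of \pO\/ actions by the equivalence classes $\{(\tr,\delta)(q,\aO,\AT)\}$. Once this bookkeeping is done, the rest is the two-line argument above.
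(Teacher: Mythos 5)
Your proposal is correct and follows essentially the same route as the paper, which obtains Corollary~\ref{cor:half-pos-all} by restricting Theorem~\ref{thm:interleaving-win-all-eq} to turn-based encodings and observing that there delayed responses are exactly \pT\/ positional strategies. Your extra detail (contrapositive of \ref{thm:interleaving-win-all-eq2}$\,\Rightarrow\,$\ref{thm:interleaving-win-all-eq1}, realizability of $t$ by a positional $\sT$, and the trace-transfer argument, which is the content of Lemma~\ref{lem:2p-t-1p} with $F(q,\aO)$ a singleton) just spells out what the paper leaves implicit.
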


The characterizations from Theorems~\ref{thm:interleaving-prefix removal-eq} and \ref {thm:interleaving-win-all-eq} yields decidability results and rough algorithmic complexity estimates in Corollary~\ref{cor:interleaving-win-one-eq} below. Note that checking all the possible strategies using memory size given by Theorems~\ref{thm:interleaving-prefix removal-eq} and \ref {thm:interleaving-win-all-eq} would be slower than Corollary~\ref{cor:interleaving-win-one-eq}.
\begin{corollary}\label{cor:interleaving-win-one-eq}
Let $\mathcal{\Col} \neq \emptyset$, let $W \subseteq \mathcal{\Col}^\omega$ be closed under interleaving  and prefix removal (resp. by interleaving), and let $f: \N^3 \to \N$ be such that for all finite $\Col \subseteq \mathcal{\Col}$ and all one-player games $\langle  \AO, Q, q_0, \delta, \Col, \tr, W\rangle$, it takes at most $f(|\AO|, |Q|, |\Col|)$ computation steps to decide the existence of a (finite-memory) winning strategy in the game. Then for all finite games $g_{q_0} = \langle \AO, \AT, Q, q_0, \delta, \Col, \tr, W\rangle$ it takes at most 
$$
f(|\AO|,|Q|,|\Col|) \cdot (|\Col \times Q|)^{|Q|2^{|\Col \times Q|}} +|Q||\AO||\AT| 
$$
computation steps to decide whether \pO\/ wins $g_{q_0}$ (with finite memory).

(resp. $|Q| \cdot f(|\AO|,|Q|,|\Col|) \cdot (|\Col \times Q|)^{|Q|2^{|\Col \times Q|}} +|Q||\AO||\AT|$ computation steps to decide whether \pO\/ wins $g_{q}$ (with finite memory) for all $q \in Q$.)
\end{corollary}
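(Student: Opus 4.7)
The plan is to invoke Theorem~\ref{thm:interleaving-prefix removal-eq} for the main statement (and Theorem~\ref{thm:interleaving-win-all-eq} for the parenthetical one), reducing each two-player decision problem to the conjunction of the corresponding one-player decision problems ranging over the \pT\/ delayed responses. Since both theorems characterize winning by a universal quantifier over delayed responses, the algorithm is: enumerate all delayed responses $t$; for each one, build the one-player game $g(t)$ and call the assumed $f$-time procedure on it; answer affirmatively iff every call succeeds.

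First I would preprocess the transition table by computing and storing, for every triple $(q,\aO,\aT)$, the pair $(\tr(q,\aO,\aT),\delta(q,\aO,\aT))$. This costs $|Q||\AO||\AT|$ steps, matches the additive term in the bound, and yields, for each $q \in Q$, the sets $\{(\tr,\delta)(q,\aO,\AT) \mid \aO \in \AO\} = \{E_1^q,\dots,E_{k_q}^q\}$ entering the definition of delayed responses. Since each $E_i^q$ is a subset of $\Col \times Q$, one has $k_q \leq 2^{|\Col \times Q|}$ and $|E_i^q| \leq |\Col \times Q|$; therefore the number of delayed responses is bounded by
\[
\prod_{q \in Q}\prod_{i \leq k_q} |E_i^q| \;\leq\; (|\Col \times Q|)^{|Q|\cdot 2^{|\Col \times Q|}}.
\]
For each such $t$, extracting the transitions of $g(t)$ is a constant-time lookup into $t$ using the precomputed table, so the cost per delayed response is dominated by the one-player call, namely $f(|\AO|,|Q|,|\Col|)$. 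Multiplying the two factors and adding the preprocessing gives exactly the stated bound.

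For the parenthetical version I would reuse the same enumeration but, for each delayed response, call the one-player procedure once per starting state $q \in Q$ (all $|Q|$ resulting one-player games share the same $\delta_t,\tr_t$, differing only in their initial state). By Theorem~\ref{thm:interleaving-win-all-eq}, \pO\/ wins every $g_q$ iff she wins $g(t)$ from every $q$ and every $t$, which yields the extra factor of $|Q|$ in the bound.

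The main potential obstacle is purely bookkeeping: one must check that the enumeration of delayed responses and the construction of each $g(t)$ do not incur a hidden overhead larger than the factors already present, and that the one-player games $g(t)$ have parameters $|\AO|,|Q|,|\Col|$ (not anything larger) so that $f(|\AO|,|Q|,|\Col|)$ is indeed an upper bound per call. Both follow directly from the definitions in Section~\ref{sect:def}, so no deeper argument is required beyond the two theorems already proved.
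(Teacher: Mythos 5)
Your proposal is correct and follows essentially the same route as the paper: invoke Theorem~\ref{thm:interleaving-prefix removal-eq} (resp.\ Theorem~\ref{thm:interleaving-win-all-eq}), bound the number of delayed responses by $(|\Col \times Q|)^{|Q|2^{|\Col \times Q|}}$ via $E_i^q \subseteq \Col \times Q$ and $k_q \leq 2^{|\Col \times Q|}$, charge one $f(|\AO|,|Q|,|\Col|)$-call per delayed response, and account for the additive $|Q||\AO||\AT|$ term by a single pass over the transition table (the paper frames this pass as a ``cleaning'' step grouping actions with equal $(\tr,\delta)(q,\aO,\AT)$, which is the same bookkeeping as your preprocessing).
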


\subsection{Existence of \pT\/ almost surely winning random strategies}\label{subsect:main-results2}

Consider a game $\langle  \AO, \AT, Q, q_0, \delta, \Col, \tr, W\rangle$.

{\bf Probability distribution} A probability distribution on a finite set $E$ is a function $f: E \to [0,1]$ such that $\sum_{e \in E}f(e) = 1$. Let us call $D(E)$ the set of the probability distributions on $E$.

{\bf Stochastic strategies}
A \pT\/ stochastic strategy is a function $\sT: (\AO \times \AT)^* \to D(\AT)$.

{\bf Induced stochastic histories} The function $\hAS$ is defined inductively below. As arguments it expects a \pT\/ stochastic strategy $\sT$ and a \pO\/ history $\aHO \in \AO^*$, and it returns a probability distribution on $(\AO \times \AT)^{|\aHO|}$. Informally, it tells the probability of a full history if \pO\/ plays $\aHO$ and \pT\/ follows $\sT$ for $|\aHO|$ many rounds. Formally, $\hAS(\sT, \e)(\e) := 1$, and $\hAS(\sT,\aHO \cdot \aO) (\aH \cdot (\aO',\aT)) := 0$ if $\aO' \neq \aO$, and $\hAS(\sT,\aHO \cdot \aO) (\aH \cdot (\aO,\aT)) := \hAS(\sT,\aHO)(\aH)  \cdot \sT(\aH)(\aT)$.

{\bf Induced probability measure} The function $\hAS$ is extended to expect \pO\/ runs: $\forall \aHOinf \in \AO^\omega,\,\hAS(\sT,\aHOinf) (\aH) :=\hAS(\sT,\aHOinf_{<|\aH|}) (\aH)$. Thus, every pair $(\sT,\aHOinf) \in D(\AT)^{(\AO \times \AT)^*} \times \AO^\omega$ induces a probability measure $\lambda(\sT,\aHOinf)$ on $(\AO \times \AT)^\omega$.

{\bf Almost surely winning stochastic strategies}
A \pT\/ stochastic strategy $\sT$ is said to be winning almost surely if $\lambda(\sT,\aHOinf)(\tr^{-1}(W)) = 0$ for all $\aHOinf \in \AO^\omega$.

{\bf Factor-prefix completeness}
Informally, $W$ is factor-prefix complete if the following holds: if the prefixes of an infinite sequence occur as factors arbitrarily far in the tail of a second sequence in $W$, the first sequence is also in $W$. (A factor, aka substring, is a subsequence of consecutive elements.) 
Formally, $W \subseteq \Col^\omega$ is factor-prefix complete if the following holds: $\forall \colHinf \in \Col^\omega,(\exists \colHinf’ \in W, \forall n,m \in \N,\exists k \in \N,\colHinf_{\leq n} = \colHinf’_{m+k} \dots \colHinf’_{m+k+n})\Rightarrow\colHinf \in W$.


In Theorem~\ref{thm:no-win-all-random-win} below, a distribution is said to be positive if it assigns only positive masses. A (stochastic) strategy is said to be constant if it is a constant function, i.e. it returns always the same distribution, which is stronger than being Markovian, memoriless, or positional.
\begin{theorem}[semi-random determinacy]\label{thm:no-win-all-random-win}
Consider games $g_{q} = \langle  \AO, \AT, Q, q, \delta, \Col, \tr, W\rangle$ parametrized by $q \in Q$. If $\AO$ and $\AT$ are finite, if $W$ is factor-prefix complete and closed under interleaving and prefix removal, the following are equivalent.
\begin{enumerate}
\item\label{thm:no-win-all-random-win1} for all $q \in Q$, \pO\/ has no winning strategy in $g_q$.
\item\label{thm:no-win-all-random-win2} for all $q \in Q$, \pT\/ has a constant, positive, stochastic strategy winning $g_q$ almost surely.
\item\label{thm:no-win-all-random-win3} for all $q \in Q$ every \pT\/ stochastic strategy involving probabilities bounded away from $0$ (i.e. with positive infimum) wins $g_q$ almost surely.
\end{enumerate}
\end{theorem}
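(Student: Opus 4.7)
The plan is to prove the cycle $(1)\Rightarrow(3)\Rightarrow(2)\Rightarrow(1)$. The short implications go as follows. For $(3)\Rightarrow(2)$: a constant positive distribution on the finite $\AT$ is bounded below by its minimum mass, so a constant positive \pT\/ strategy is an instance of the bounded-away-from-$0$ strategies of $(3)$. For $(2)\Rightarrow(1)$, suppose \pO\/ had a winning strategy in $g_{q}$; then for every realisation of a constant positive $\sT^{*}$ the color sequence would lie in $W$, which (after unpacking the open-loop formulation via a finite-memory representation of the \pO\/ strategy and a recurrent-class argument in the finite Markov chain on (game state, memory) driven by $\sT^{*}$) yields an open-loop \pO\/ action run producing a sequence in $W$ with positive probability, refuting almost-sure victory.

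The core of the proof is $(1)\Rightarrow(3)$. By Theorem~\ref{thm:interleaving-prefix removal-eq} applied to each $g_{q}$ separately, $(1)$ provides for every $q\in Q$ a delayed response $t_{q}$ such that the color sequence of $g_{q}(t_{q})$ produced by any $\aHOinf'\in \AO^{\omega}$ lies outside $W$. Fix $q\in Q$, $\aHOinf\in \AO^{\omega}$, and $\sT$ with $\sT(\cdot)(b)\geq\eta>0$ for every $b$, and let $(q_{n},\colH_{n})_{n}$ be the random state-and-color sequence under $(\aHOinf,\sT)$ in $g_{q}$. For $q^{*}\in Q$ and $N\in\N$, call round $M$ an \emph{$N$-match at $q^{*}$} if $q_{M}=q^{*}$ and, for $i<N$, the \pT\/ action $b_{M+i}$ realises the pair $t_{q^{*}}(q_{M+i},\aHO_{M+i})$. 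Since at least one such $b$ exists per round (by definition of delayed response) and each has probability $\geq\eta$ under $\sT$, the conditional probability of an $N$-match at $q^{*}$ at round $M$, given the past and $q_{M}=q^{*}$, is at least $\eta^{N}$. The conditional Borel--Cantelli lemma then yields, almost surely on $\{q^{*}\text{ visited i.o.}\}$, infinitely many $N$-matches at $q^{*}$ for every $N$ (by countable intersection).

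At each such match the action prefix $\aHO_{M}\cdots\aHO_{M+N-1}$ belongs to the finite set $\AO^{N}$, so pigeonhole plus K\"onig's lemma (the tree of action prefixes used in infinitely many matches is finitely branching and nonempty at every level) produces $\aHOinf^{*}\in \AO^{\omega}$ such that every prefix $\aHOinf^{*}_{\leq N-1}$ is the action prefix of infinitely many $N$-matches at $q^{*}$. Let $\colHinf^{*}$ be the color sequence of $g_{q^{*}}(t_{q^{*}})$ when \pO\/ plays $\aHOinf^{*}$; by the choice of $t_{q^{*}}$, $\colHinf^{*}\notin W$. Yet $\colHinf^{*}_{\leq N-1}$ equals the factor $\colH_{M}\cdots\colH_{M+N-1}$ at every matching round $M$, so it appears arbitrarily late as a factor of $\colHinf$. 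Applying factor-prefix completeness with witness $\colHinf':=\colHinf$ then forces ``$\colHinf\in W\Rightarrow\colHinf^{*}\in W$'', contradicting $\colHinf^{*}\notin W$; hence $\{\colHinf\in W\}\cap\{q^{*}\text{ visited i.o.}\}$ has measure zero. Since $Q$ is finite and some state is visited i.o.\ in every play, unioning over $q^{*}\in Q$ gives $\lambda(\sT,\aHOinf)(\tr^{-1}(W))=0$, as desired. The main obstacle is coordinating three ingredients at once: the uniform lower bound $\eta$ powers Borel--Cantelli, the finiteness of $\AO$ and $Q$ powers the K\"onig extraction and the choice of a recurring $q^{*}$, and factor-prefix completeness is what upgrades ``prefixes of $\colHinf^{*}$ appear as late factors of $\colHinf$'' to ``$\colHinf^{*}\in W$''. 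Weakening any of these three assumptions appears to break the argument.
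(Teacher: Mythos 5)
Your core implication $(1)\Rightarrow(3)$ is essentially the paper's own proof: contraposition of Theorem~\ref{thm:interleaving-prefix removal-eq} gives the losing delayed responses $t_q$, each visit of a recurrent state $q^*$ yields probability at least $\eta^N$ of realizing $t_{q^*}$ for $N$ rounds, a K\"onig extraction produces a run of $g_{q^*}(t_{q^*})$ whose trace is outside $W$ but whose prefixes occur as factors arbitrarily far in the tail of the actual trace, and factor-prefix completeness transfers ``not in $W$'' back to the play; you are in fact slightly more careful than the paper about the ``arbitrarily far in the tail'' requirement. $(3)\Rightarrow(2)$ is fine.

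The gap is in $(2)\Rightarrow(1)$. The parenthetical ``finite-memory representation plus recurrent-class argument yields an open-loop run attaining $W$ with positive probability'' is not only unjustified (Theorem~\ref{thm:interleaving-prefix removal-eq} gives finite memory only when the derived one-player games admit finite-memory winning strategies, which is not guaranteed for abstract $W$), it is false as a general claim. Take $Q=\{q_0,q_1\}$, $\AO=\AT=\{0,1\}$, $\delta(q_i,\aO,\aT):=q_{\aT}$, $\Col=\{0,1\}$, $\tr(q_i,\aO,\aT):=0$ if $\aO=i$ and $1$ otherwise, and $W:=\{0^\omega\}$; this $W$ is closed under interleaving and prefix removal and is factor-prefix complete. \pO\ wins every $g_{q_i}$ with the memory-two strategy ``play $i$ first, then repeat \pT's previous action'', yet for every fixed $\aHOinf\in\AO^\omega$ and the uniform constant $\sT$ the colors at rounds $n\geq 1$ are $0$ independently with probability $\tfrac12$, so $\lambda(\sT,\aHOinf)(\tr^{-1}(W))=0$: no open-loop run gives $W$ positive probability, even though the winning strategy is finite-memory and the induced chain on (state, memory) is a perfectly nice finite Markov chain. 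So under the literal open-loop reading of ``winning almost surely'' your cycle cannot be closed by any derandomization. The implication is meant to be closed differently (and this is consistent with how the paper argues, e.g.\ in the proof of Theorem~\ref{thm:stateless-semi-rand-det}, where \pO\ draws her action from an arbitrary distribution at every stage): almost-sure victory is to be understood against \pO's actual, adaptive behaviour, and then $(2)\Rightarrow(1)$ is immediate without any of your machinery — if $\sO$ were a \pO\ winning strategy in some $g_q$, every play in which \pO\ follows $\sO$ has its trace in $W$, so under any \pT\ stochastic strategy the probability of $\tr^{-1}(W)$ is one, not zero. Note that the paper's displayed proof covers only $(1)\Rightarrow(3)$ and treats the remaining implications as immediate in exactly this sense; your attempted bridge from an adaptive winning strategy to a single oblivious action run is precisely the step that fails.
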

So in the setting of Theorem~\ref{thm:no-win-all-random-win}, either \pO\/ has a winning strategy for some $g_q$, or every constant, positive strategy is winning almost surely, hence the determinacy. Also note that semi-random determinacy implies the \emph{bounded limit-one property} from \cite{AH00} for the new games: if one state has positive value, one state has value one.

\subsection{The special case of stateless (i.e. one-state) games}\label{subsect:main-results3}

{\bf Stateless games}
Intuitively, a \emph{stateless game (with colors)} amounts to a game with only one state, i.e. $|Q| = 1$. Formally, it is a tuple $\langle  \AO,\AT, \Col, \tr, W\rangle$ such that $\AO$, $\AT$, and $\Col$ are non-empty sets, $\tr : \AO \times \AT \to \Col$ (as opposed to $\tr : Q \times \AO \times \AT \to \Col$ in the general case), and $W \subseteq \Col^\omega$. Histories, runs, strategies, and induced histories are defined as in the general case . It is easier to extend the trace in this context: $\tr(\e) := \e$ and $\tr(\aH \cdot (\aO,\aT)) =\tr(\aH) \cdot \tr(\aO,\aT)$.

Restricting Theorem~\ref{thm:interleaving-win-all-eq} to stateless games yields a simpler Corollary~\ref{cor:eq-win-abs} below. (Note that restricting Theorem~\ref{thm:interleaving-prefix removal-eq} would yield a weaker variant of Corollary~\ref{cor:eq-win-abs}, i.e. additionally assuming closeness under prefix removal.) Memory size and algorithmic complexity estimates could be obtained essentially by replacing $|Q|$ with $1$ in Theorem~\ref{thm:interleaving-win-all-eq} and Corollary~\ref{cor:interleaving-win-one-eq}.
\begin{corollary}\label{cor:eq-win-abs}
Consider a game $\langle \AO, \AT, \Col, \tr, W\rangle$ with finite $\Col$ and interleaving-closed $W$. Let $\Col_1,\dots,\Col_k$ be the elements of $\{\tr(\aO,\AT)\,\mid\, \aO \in \AO\}$. The following are equivalents.
\begin{enumerate}
\item\label{cor:eq-win-abs1} \pO\/ has a winning strategy (resp. finite-memory winning strategy).
\item\label{cor:eq-win-abs2} $\forall (\col_1,\dots,\col_k) \in \Col_1 \times \dots \times \Col_k,\, W \cap \{\col_1,\dots,\col_k\}^\omega \neq \emptyset$ (resp. $W \cap \{\col_1,\dots,\col_k\}^\omega \cap \mathrm{reg}_\Col \neq \emptyset$),
\end{enumerate}
where $\mathrm{reg}_\Col$ are the regular infinite sequences over $\Col$.
\end{corollary}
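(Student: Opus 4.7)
The plan is to specialize Theorem~\ref{thm:interleaving-win-all-eq} to the stateless setting, where the quantification over $q \in Q$ trivializes and the abstract objects become concrete. First I would unfold the definition of \pT\/ delayed responses when $|Q|=1$: the family $\{(\tr,\delta)(q_0,\aO,\AT) \mid \aO \in \AO\}$ collapses (after dropping the trivial state component) to $\{\Col_1, \ldots, \Col_k\}$, so the delayed responses are in bijection with the tuples $(\col_1, \ldots, \col_k) \in \Col_1 \times \cdots \times \Col_k$.

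Next I would identify the corresponding derived one-player games. For $t = (\col_1, \ldots, \col_k)$, the game $g(t)$ is stateless and, by direct unfolding, every action $\aO \in \AO$ with $\tr(\aO,\AT) = \Col_j$ produces the color $\col_j$. Hence the set of color sequences \pO\/ can realize in $g(t)$ is exactly $\{\col_1, \ldots, \col_k\}^\omega$. Consequently \pO\/ has a winning strategy in $g(t)$ iff $W \cap \{\col_1, \ldots, \col_k\}^\omega \neq \emptyset$, and since any finite-memory strategy in a stateless one-player game produces an ultimately periodic color sequence (the memory state must eventually repeat), she has a finite-memory winning strategy in $g(t)$ iff $W \cap \{\col_1, \ldots, \col_k\}^\omega \cap \mathrm{reg}_\Col \neq \emptyset$. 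Invoking Theorem~\ref{thm:interleaving-win-all-eq} (whose hypotheses hold since $W$ is interleaving-closed and $Q, \Col$ are finite) then yields that \pO\/ wins $g$ iff she wins $g(t)$ for every delayed response $t$, which by the previous observation is precisely condition~\ref{cor:eq-win-abs2}.

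For the finite-memory variant, the backward direction follows from the memory-bound clause of Theorem~\ref{thm:interleaving-prefix removal-eq}: since there are only finitely many delayed responses, taking $f$ to be the maximum memory needed among the winning strategies in the $g(t)$'s yields a finite bound for $g$. The forward direction is a short simulation argument: given \pO\/'s finite-memory winning strategy $\sO$ in $g$ and a target delayed response $t$, \pO\/ plays in $g(t)$ by mimicking $\sO$ internally against the \pT\/ counter-strategy that, whenever $\sO$ prescribes action $\aO$ with $\tr(\aO,\AT) = \Col_j$, picks some $\aT \in \AT$ with $\tr(\aO,\aT) = \col_j$; such $\aT$ exists because $\col_j \in \Col_j = \tr(\aO,\AT)$, and the construction uses no memory beyond that of $\sO$. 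The main obstacle is simply the careful translation of the abstract delayed-response machinery into concrete tuples of colors, together with this simulation for the finite-memory forward direction; the substantive work has already been done in Theorem~\ref{thm:interleaving-win-all-eq}.
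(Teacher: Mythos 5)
Your proposal is correct and takes essentially the same route as the paper, which obtains Corollary~\ref{cor:eq-win-abs} precisely by restricting Theorem~\ref{thm:interleaving-win-all-eq} to stateless games, where the delayed responses become the tuples $(\col_1,\dots,\col_k) \in \Col_1 \times \dots \times \Col_k$ and the derived one-player games are winnable (resp.\ winnable with finite memory) iff $W \cap \{\col_1,\dots,\col_k\}^\omega \neq \emptyset$ (resp.\ $W \cap \{\col_1,\dots,\col_k\}^\omega \cap \mathrm{reg}_\Col \neq \emptyset$), exactly as you argue; your simulation for the finite-memory forward direction is the paper's Lemma~\ref{lem:win-implies-win-delayed-response}. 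One cosmetic point: for the finite-memory backward direction you should invoke the memory clause as it appears in Theorem~\ref{thm:interleaving-win-all-eq} (which assumes only interleaving-closedness) rather than Theorem~\ref{thm:interleaving-prefix removal-eq}, whose statement additionally assumes closure under prefix removal.
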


Restricting Theorem~\ref{thm:no-win-all-random-win} to stateless games cancels the universal quantification over states, but an even stronger version can be obtained: finiteness of $\AO$ and prefix removal closeness are dropped, and the assumption on factor-prefix completeness is weaken into factor-set completeness, as below.

{\bf Factor-set completeness}
A language of infinite sequences is called \emph{factor-set complete} if the following holds: if a sequence in the language has factors of unbounded length over some $\Col_0$, the language has a sequence over $\Col_0$. This is formally defined by contraposition: $W \subseteq \Col^\omega$ is factor-set complete if for all $\Col_0 \subseteq \Col$ and for all $\rho \in W$, we have $W \cap \Col_0^\omega = \emptyset \,\Rightarrow\,\forall\rho \in W,\exists m \in N, \forall n \in N, \exists i \in N, i < m \wedge \rho_{n+i} \notin \Col_0$.

\begin{observation}\label{obs:factor-pref-set}
Factor-prefix completeness implies factor-set completeness (finite alphabets).
\end{observation}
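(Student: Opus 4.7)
I work by contraposition: fix $\Col_0 \subseteq \Col$ and $\rho \in W$, and assume the negation of the conclusion, namely that for every $m \in \N$ there is some $n \in \N$ with $\rho_n\rho_{n+1}\cdots\rho_{n+m-1} \in \Col_0^*$. The goal is to exhibit a sequence $\colHinf \in W \cap \Col_0^\omega$, using factor-prefix completeness applied to the witness $\colHinf' := \rho$.

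\textbf{Step 1 (factors of every length appear arbitrarily far).} From the hypothesis I first upgrade to: for all $n,m \in \N$, $\rho$ contains a factor in $\Col_0^*$ of length $n$ starting at some position $\geq m$. Indeed, pick a factor of length $n+m$ in $\Col_0^*$ starting at some position $p$; then its length-$n$ suffix is a factor in $\Col_0^*$ starting at position $p+m \geq m$.

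\textbf{Step 2 (König's lemma to extract an infinite $\Col_0$-sequence).} Since $\Col$ (hence $\Col_0$) is finite, I consider the tree
\[
T := \{ w \in \Col_0^* \mid \forall m \in \N,\ w \text{ occurs in } \rho \text{ at some position } \geq m \}.
\]
Clearly $T$ is prefix-closed, contains $\e$, and is finitely branching. For each length $n$, Step~1 provides, for every $m$, at least one length-$n$ factor of $\rho$ in $\Col_0^n$ starting past position $m$; since $\Col_0^n$ is finite, some specific $w \in \Col_0^n$ appears past $m$ for infinitely many $m$, so $w \in T$. Hence $T$ has nodes at every depth, and by König's lemma admits an infinite branch $\colHinf \in \Col_0^\omega$, every prefix of which occurs in $\rho$ arbitrarily far out.

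\textbf{Step 3 (apply factor-prefix completeness).} For any $n,m \in \N$, the prefix $\colHinf_{\leq n}$ occurs in $\rho$ at some position $\geq m$, say position $m+k$; so $\colHinf_{\leq n} = \rho_{m+k}\cdots\rho_{m+k+n}$. This is exactly the hypothesis of factor-prefix completeness with witness $\rho \in W$, hence $\colHinf \in W$. Since $\colHinf \in \Col_0^\omega$, we conclude $W \cap \Col_0^\omega \neq \emptyset$, completing the contrapositive.

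\textbf{Main obstacle.} The only nontrivial ingredient is the compactness/König argument in Step~2: the hypothesis only gives a separate factor for each length (possibly incompatible with each other), so one must pass to a coherent single sequence. Finiteness of $\Col_0$ is essential here, which is why the finite-alphabet assumption appears in the statement.
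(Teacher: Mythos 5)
Your proof is correct and follows essentially the same route as the paper's: upgrade the unbounded-length $\Col_0$-factors to factors of every length occurring arbitrarily far in the tail, use finiteness of $\Col_0$ plus K\"onig's lemma on a finitely branching tree of such factors to extract an infinite sequence $\colHinf \in \Col_0^\omega$ whose prefixes occur arbitrarily far in $\rho$, and conclude by factor-prefix completeness. The only cosmetic difference is that you take the full tree of arbitrarily-far-occurring $\Col_0$-words (with an explicit pigeonhole step), whereas the paper prefix-closes one chosen word per length; the argument is the same.
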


\begin{theorem}[Stateless semi-random determinacy]\label{thm:stateless-semi-rand-det}
Consider a stateless game $\langle \AO, \AT, \Col, \tr, W\rangle$ with finite $\Col$ and $\AT$. Let us assume that $W$ is interleaving-closed and factor-set complete. Then either \pO\/ has a winning strategy, or every \pT\/ constant, positive, stochastic strategy is winning almost surely.
\end{theorem}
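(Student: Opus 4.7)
The plan is to prove the contrapositive: assuming \pO\/ has no winning strategy, show that every constant, positive, stochastic \pT\/ strategy is winning almost surely. The stateless hypothesis lets us leverage Corollary~\ref{cor:eq-win-abs} directly. Let $\Col_1,\dots,\Col_k$ enumerate the distinct subsets $\tr(\aO,\AT)$ of $\Col$ as $\aO$ ranges over $\AO$; because $\Col$ is finite, these are finitely many (at most $2^{|\Col|}$). Since \pO\/ has no winning strategy, Corollary~\ref{cor:eq-win-abs} supplies a tuple $(\col_1,\dots,\col_k)\in\Col_1\times\dots\times\Col_k$ with $W\cap\Col_0^\omega=\emptyset$, where $\Col_0:=\{\col_1,\dots,\col_k\}$. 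For each $\aO\in\AO$, write $i(\aO)$ for the unique index such that $\tr(\aO,\AT)=\Col_{i(\aO)}$; then $\col_{i(\aO)}\in\tr(\aO,\AT)$ is witnessed by at least one action in $\AT$, and I will refer to it as the \emph{target color} against \pO-action $\aO$.

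Next fix a constant, positive \pT-strategy $\sT$ given by a distribution $d\in D(\AT)$, together with any \pO-run $\aHOinf=\aO_0\aO_1\dots\in\AO^\omega$; by the definition in Section~\ref{subsect:main-results2} it suffices to show $\lambda(\sT,\aHOinf)(\tr^{-1}(W))=0$. Because $\sT$ is constant, the \pT-actions $\aT_0,\aT_1,\dots$ are i.i.d.\ $d$-distributed under $\lambda(\sT,\aHOinf)$, so the events $E_n:=\{\tr(\aO_n,\aT_n)=\col_{i(\aO_n)}\}$ are independent with $\p(E_n)\geq\e_0:=\min_{\aT\in\AT}d(\aT)>0$. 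Fix $L\in\N$ and partition the rounds into consecutive blocks of length $L$; by independence the block events $B_j:=\bigcap_{r=jL}^{jL+L-1}E_r$ are independent with probabilities at least $\e_0^L$. The second Borel--Cantelli lemma then forces infinitely many $B_j$ almost surely, and each occurrence contributes a factor of length $L$ of the color run $\colHinf$ that lies in $\Col_0^L$. Intersecting over the countably many values of $L$, almost surely $\colHinf$ contains factors over $\Col_0$ of unbounded length.

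To close, I invoke factor-set completeness: since $W\cap\Col_0^\omega=\emptyset$, every $\rho\in W$ has factors over $\Col_0$ of bounded length, so $\colHinf\notin W$ almost surely. Thus $\lambda(\sT,\aHOinf)(\tr^{-1}(W))=0$, and since $\aHOinf$ was arbitrary, $\sT$ wins almost surely. The main obstacle I anticipate is not the high-level argument but the bookkeeping around the induced measure $\lambda(\sT,\aHOinf)$ of Section~\ref{subsect:main-results2}: one must verify cleanly that under a constant $\sT$ the coordinate maps $\aT_n$ really are independent with law $d$, and that the countable intersection over $L$ in the Borel--Cantelli step is justified on the product $\sigma$-algebra in which $\tr^{-1}(W)$ is measurable.
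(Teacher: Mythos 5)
Your proposal is correct and follows essentially the same route as the paper's proof: contraposing Corollary~\ref{cor:eq-win-abs} to obtain $\Col_0$ with $W\cap\Col_0^\omega=\emptyset$, bounding below by $\min_{\aT}d(\aT)$ the per-round probability of producing a color in $\Col_0$, deducing that almost surely the color run has arbitrarily long factors over $\Col_0$, and concluding via factor-set completeness. Your block/Borel--Cantelli formalization is just a sharper write-up of the paper's ``from any stage, $n$ times in a row'' argument (the paper's version even allows a randomizing \pO\/, which your fixed-run setting matches to the paper's formal definition of almost-sure winning), so no substantive difference remains.
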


\section{The proofs}\label{sect:proofs}

Theorems~\ref{thm:interleaving-prefix removal-eq} and \ref {thm:interleaving-win-all-eq} characterize a concurrent game by finitely many one-player games. A natural idea would be to split their proof into two parts: first, reduce the problem to turn-based games via the well-known observation that a player has a winning strategy in a concurrent game iff she has one in the sequential version of the game where she plays first; second, use similar techniques as in \cite{Kopczynski06}. For this to work, the sequential versions of the concurrent games must allow for colorless transitions, or a fresh color should be used for the transitions where \pO\/ plays. This raises three issues: first, true colors should occur infinitely often in every run in these turn-based games, which would require a more complex notion of turn-based game; second, the winning condition should be rephrased to take the fresh color into account, and so should its closeness properties; third, it would be much difficult to obtain stronger results for the one-state concurrent games, since the one-state property may be hard to track through the translation into turn-based games. Instead, this article overcomes the concurrency directly thanks to Lemma~\ref{lem:fpi}.

\begin{lemma}\label{lem:fpi}
Let $(X_i)_{i \in I}$ be a family of sets. Then

$\forall f : \prod_{i \in I}X_i  \to I, \exists i \in I, \forall x \in X_i, \exists y \in \prod_{i \in I}X_i, y_i = x \wedge f(y) = i$.
\begin{proof}
Towards a contradiction, let us assume the negation of the claim, i.e. $\exists f : \prod_{i \in I}X_i  \to I, \forall i \in I, \exists x \in X_i, \forall y \in \prod_{i \in I}X_i, y_i \neq x \vee f(y) \neq i$. By collecting one witness $x =: z_i$ for each $i$, one constructs $z \in \prod_{i \in I}X_i$ such that $\forall y \in \prod_{i \in I}X_i, y_i \neq z_i \vee f(z) \neq i$. In particular, taking $y := z$ yields $z_i \neq z_i \vee f(z) \neq i$ for all $i$, which contradicts the type of $f$.
\end{proof}
\end{lemma}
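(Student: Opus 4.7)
The plan is to proceed by contradiction, exploiting the fact that the statement has the shape of a diagonal/fixpoint principle: we are asked that some coordinate $i$ of the product can be ``hit'' at every value by a tuple whose $f$-image is precisely $i$. Negating the conclusion gives, for every $i\in I$, an element $x_i\in X_i$ that is a ``forbidden value'' at coordinate $i$, in the sense that no tuple $y$ simultaneously satisfies $y_i=x_i$ and $f(y)=i$. This is the natural target for a self-referential contradiction.

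Concretely, I would first rewrite the negation in prenex form to expose the choice of witnesses: assume that $f:\prod_{i\in I}X_i\to I$ is such that for each $i\in I$ there exists $x\in X_i$ with the forbidden property above. Next, invoke the axiom of choice to select, for every $i$, one such witness and call it $z_i\in X_i$, thereby assembling a single tuple $z:=(z_i)_{i\in I}\in\prod_{i\in I}X_i$. By construction, $z$ satisfies: for every $i\in I$ and every $y\in\prod_{i\in I}X_i$, either $y_i\neq z_i$ or $f(y)\neq i$.

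The punchline is to apply this universal property of $z$ to $y:=z$ itself and to the particular index $i:=f(z)\in I$. Then $y_i=z_i$ holds trivially, and $f(y)=f(z)=i$ holds by the choice of $i$, which contradicts the disjunction selected by $z$ at that coordinate. This is the classical Lawvere-style diagonal trick that also underlies Cantor's theorem and the Russell paradox.

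The only potentially delicate point — and the one step I would flag — is the appeal to choice to build $z$; the lemma is stated for arbitrary index sets $I$, so some form of AC is genuinely required to collect the witnesses $z_i$ into a function. Otherwise the argument is purely formal and involves no combinatorial content; in particular, no assumptions on cardinalities or on the $X_i$ are needed beyond their being sets, which is why the lemma will be directly applicable to the setting of concurrent moves later in the paper.
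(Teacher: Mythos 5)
Your proof is correct and is essentially the paper's own argument: negate the statement, use choice to assemble the witnesses $z_i$ into a single tuple $z$, and then specialize the resulting universal property to $y:=z$, obtaining a contradiction at the index $f(z)$ (the paper phrases this as contradicting the fact that $f(z)$ must equal some $i\in I$). Your explicit flagging of the appeal to the axiom of choice matches the paper's implicit ``collecting one witness for each $i$'' step, so there is no substantive difference.
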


Consider the one-state game $g$ in Figure~\ref{fig:one-state-game} (to the left), where each cell encloses one vector of the real plane. \pO\/'s objective is that the sum of the outcome vectors remains bounded, which is closed under interleaving and prefix removal, so $g$ is a concurrent version of the Minkowski games \cite{LPR17}. There are $2^3 = 8$ delayed responses, and five of the corresponding one-player games $g_0,\dots g_7$ are displayed to the right in Figure~\ref{fig:one-state-game}. \pO\/ wins $g_0,\dots,g_7$, since for each $i \leq 7$ the vector $(0,0)$ is in the convex hull of the three vectors defining $g_j$. The idea is to let \pO\/ play $g$ as if she were playing $g_0,\dots,g_7$ in parallel, more specifically in an interleaved way. Then, summing up the eight bounded trajectories yields a bounded trajectory for $g$.

The main difficulty to play the $g_0,\dots,g_7$ in an interleaved way is that at every stage, \pO\/ should pick an action such that whichever action \pT\/ chooses, the resulting vector is exactly the expected one by the (fixed) winning strategy for \emph{some} $g_j$. Let $f:\{1,2\}^3 \to \{a_1,\dots,a_3\}$ be the function that tells which action should be played currently in each of the $2^3 = 8$ one-player games. By Lemma~\ref{lem:fpi} there exists an action $a_i$ such that the following holds: if \pT\/ chooses $b_1$, there exists $g_j$ expecting the vector in the cell $(a_i,b_1)$, and likewise if \pT\/ chooses $b_2$, there exists $g_k$ expecting the vector in the cell $(a_i,b_2)$.
\begin{figure}
\centering
\[
\begin{array}{c@{\hspace{1cm}}ccccc}
\begin{array}{c|c|c|}
   \multicolumn{1}{c}{}&
	  \multicolumn{1}{c}{b_1}&
	  \multicolumn{1}{c}{b_2}\\
	\cline{2-3}
	a_1 & 1,1 & 2,1\\
	\cline{2-3}
	a_2 & 1,-1 & 0,-1\\
	\cline{2-3}
	a_3 & -1,0 & -2,0\\
	\cline{2-3}	
\end{array}
&
\begin{array}{|c|}
   \multicolumn{1}{c}{}\\
	\hline
	 1,1\\
	\hline
	 1,-1\\
	\hline
	-1,0\\
	\hline	
\end{array}
&
\begin{array}{|c|}
   \multicolumn{1}{c}{}\\
	\hline
	 1,1\\
	\hline
	 1,-1\\
	\hline
	-2,0\\
	\hline	
\end{array}
&
\begin{array}{|c|}
   \multicolumn{1}{c}{}\\
	\hline
	 1,1\\
	\hline
	 0,-1\\
	\hline
	-1,0\\
	\hline	
\end{array}
&
\begin{array}{|c|}
   \multicolumn{1}{c}{}\\
	\hline
	 1,1\\
	\hline
	 0,-1\\
	\hline
	-2,0\\
	\hline	
\end{array}
&
\begin{array}{|c|}
   \multicolumn{1}{c}{}\\
	\hline
	 2,1\\
	\hline
	 1,-1\\
	\hline
	-1,0\\
	\hline	
\end{array}
\end{array}
\]
  \caption{A concurrent Minkowski game and its derived games}\label{fig:one-state-game}
\end{figure}

Let us now quickly mention semi-random determinacy. The proof of Theorem~\ref{thm:no-win-all-random-win} below uses similar techniques as, e.g., a proof in the submitted journal version of \cite{LPR17}.
\begin{proof}[Proof of \ref{thm:no-win-all-random-win1} $\Rightarrow$ \ref{thm:no-win-all-random-win3} from Theorem~\ref{thm:no-win-all-random-win}]
Let $p \in ]0,\frac{1}{|\AT|}]$ and let $\sT$ be a \pT\/ stochastic strategy that always assigns probability at least $p$ to every action. 

For all $q \in Q$, by contraposition of Theorem~\ref{thm:interleaving-prefix removal-eq} let $t_q$ be a delayed response (in $g_q$) such that \pO\/ loses the one-player game $g_q(t_q)$. For all $n \in \N$, anytime a play reaches the state $q$, the probability that from then on \pT\/ follows $t_q$ for $n$ rounds in a row, as if second-guessing \pO\/, is  greater than or equal to $p^n$.

Consider a play where \pT\/ follows $\sT$. Let $q$ be a state that is visited infinitely often. (Such a state exists since $Q$ is finite.) Thanks to the argument above, for all $n \in \N$, the probability that, at some point, \pT\/ follows $t_q$ for $n$ rounds in a row from $q$ on is one. Since the countable intersection of measure-one sets has also measure one, the probability that, for all $n \in \N$, at some point \pT\/ follows $t_q$ for $n$ rounds in a row from $q$ on is one.

Let $(\aH^n)_{n \in \N}$ be the corresponding full histories. Since $\AO$ and $\AT$ are finite, the tree induced by prefix closure of the $(\aH^n)_{n \in \N}$ is finitely branching, so by Koenig's Lemma it has an infinite path $\aHinf$, which corresponds to \pT\/ following $t_q$ infinitely many rounds in a row. So $\tr(\aHinf) \notin W$. By factor-prefix closeness the original play is also losing for \pO\/, i.e. winning for \pT\/.
\end{proof}

\section{Applications}\label{sect:appli}

Abstract assumptions need not only be general, they also need to be practical. Section~\ref{sect:apcc} shows that the closeness and completeness axioms enjoy nice algebraic properties: individually, w.r.t. Boolean combination, as well as collectively via the derived closure or completion operators. Section~\ref{sect:cwc} mentions several classical or recent winning conditions from computer science and tells which of them satisfy the closeness and completeness axioms. Section~\ref{sect:brl} introduces the notion of bounded residual load as an alternative to the finitary fairness~\cite{AH98}, and uses it to define a finitary variant of the $\omega$-regular languages that satisfies the closeness and completeness axioms.

\subsection{Algebraic properties of the closeness and completeness axioms}\label{sect:apcc}

Lemma~\ref{lem:bool-combo} below shows how the axioms behave w.r.t. Boolean combination.
\begin{lemma}\label{lem:bool-combo}
\begin{enumerate}
\item\label{lem:bool-combo1} The set of the factor-set complete languages is closed under union.

\item\label{lem:bool-combo2} The set of the interleaving-closed languages is closed under intersection.

\item\label{lem:bool-combo3} The set of the factor-prefix complete languages is closed under intersection and union.
\end{enumerate}
\end{lemma}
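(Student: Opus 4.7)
The three parts are purely syntactic checks against the definitions; my plan is to unfold each axiom and observe how its quantifier structure interacts with the Boolean operation in question. The common principle is that a ``for all $\rho \in W$'' premise behaves well under union (since $\rho \in W_1 \cup W_2$ always lies in \emph{some} $W_i$), whereas a ``for all $\rho_1,\rho_2 \in W$'' premise behaves well under intersection (since we can feed both sequences into each $W_j$ separately). Factor-prefix completeness will escape both obstructions because the only membership constraint on $W$ there is an \emph{existential} one on a witness $\rho'$.

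For \ref{lem:bool-combo1}, I would fix factor-set complete $W_1, W_2$ and some $\Col_0 \subseteq \Col$ with $(W_1 \cup W_2) \cap \Col_0^\omega = \emptyset$; this immediately implies $W_i \cap \Col_0^\omega = \emptyset$ for each $i$, so any $\rho \in W_1 \cup W_2$ lies in some $W_i$, whose factor-set completeness delivers the required bound $m$ on factors of $\rho$ over $\Col_0$. For \ref{lem:bool-combo2}, if $\rho_1, \rho_2 \in W_1 \cap W_2$, then each lies in both $W_1$ and $W_2$, so every interleaving lies in each $W_j$ by interleaving-closure, and hence in the intersection.

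For \ref{lem:bool-combo3}, let $\rho \in \Col^\omega$ together with some witness $\rho'$. For intersection, a witness $\rho' \in W_1 \cap W_2$ serves simultaneously as a witness for $W_1$ and for $W_2$, so the two individual completeness axioms give $\rho \in W_1$ and $\rho \in W_2$. For union, a witness $\rho' \in W_1 \cup W_2$ actually lies in some specific $W_i$ and continues to satisfy the factor condition there, so $\rho \in W_i \subseteq W_1 \cup W_2$. There is no genuine obstacle in the proof; the only point worth flagging is the asymmetry explaining why factor-set completeness is \emph{not} asserted to be intersection-closed and interleaving-closure is \emph{not} asserted to be union-closed, which is exactly the dual of the quantifier principle noted above.
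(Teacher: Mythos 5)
Your proof is correct and takes essentially the same route as the paper's: part 1 is the contrapositive reading of the same argument, part 2 is identical, and for part 3 the single witness $\rho'$ is reused for each $W_i$ (intersection) or located in some $W_i$ (union) exactly as in the paper, whose proof is merely stated for arbitrary families $(W_i)_{i \in I}$ rather than two languages --- your argument generalizes verbatim. Your closing remark on the quantifier asymmetry is precisely what the paper illustrates with its counterexamples immediately after the lemma.
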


The set of the interleaving-closed languages is not closed under union: $\{0^\omega\}$ and $\{1^\omega\}$ are closed under interleaving (and by prefix removal), but $\{0^\omega,1^\omega\}$ is not. The set of the interleaving-closed languages is not closed under complementation: the interleaving of two infinite sequences that are not eventually constant is not eventually constant, but interleaving the eventually constant sequences $0^\omega$ and $1^\omega$ may yield $(01)^\omega$. The set of the factor-set complete languages is not closed under intersection: indeed, both two-element sets $\{0(12)0(12)^20(12)^30\dots, (12)^\omega\}$ and $\{0(12)0(12)^20\dots, (112)^\omega\}$ are factor-set complete, but their intersection $\{0(12)0(12)^20\dots\}$ is not. The set of the factor-set (-prefix) complete languages is not closed under complementation: $\{1^\omega\}$ is factor-set (-prefix) complete, but $\{0,1\}^\omega \setminus \{1^\omega\}$ is not.

The closeness under interleaving and prefix removal, and the factor-prefix completeness induce closure operators. If a relevant winning condition fails to satisfy an equaly relevant axiom, such an operator conveniently constructs a (more generous, axiom satisfying) variant of the winning condition. The closure by prefix removal of a set consists in adding the tails of the sequences from the set; the closure by interleaving consists in adding sequences obtained by interleaving the sequences from the set; and the factor-prefix completion consists in adding the sequences whose prefixes occur arbitrarily far in a sequence from the set. Note that factor-set completeness does not induce a canonical closure operator due to the existential quantifier in its definition.

Lemma~\ref{lem:closure-closed} below shows that the operators behave as expected. This is not for granted in general, as one may need to perform the addition operation an ordinal number of times. Here, one step suffices, which is convenient if computation is of concern.
\begin{lemma}\label{lem:closure-closed}
\begin{enumerate}
\item Closure by prefix removal yields sets that are closed under prefix removal.

\item Closure by interleaving yields sets that are closed under interleaving

\item Factor-prefix completion yields sets that are factor-prefix complete.
\end{enumerate}
\end{lemma}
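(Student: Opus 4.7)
\emph{Approach.} All three items have the same shape: unfold the defining operation and check that a single application already yields a fixed point. Items (1) and (2) are essentially bookkeeping with the underlying prefix/suffix decomposition or the underlying partition of $\N$; the only non-trivial content is a short transitivity argument in (3).

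For (1), I would write the closure as $\bar W := \{\rho \in \Col^\omega : \exists u \in \Col^*,\ u \cdot \rho \in W\}$. Given $\rho \in \bar W$ witnessed by $u$, and any suffix $\rho'$ of $\rho$ with decomposition $\rho = v \cdot \rho'$, the sequence $uv \cdot \rho'$ lies in $W$, so $\rho' \in \bar W$. For (2), I would take the closure $\bar W$ to consist of every sequence obtainable by interleaving some (finite or countable) family of sequences from $W$. If $\sigma$ is obtained by interleaving $\rho_1$ and $\rho_2$ that both belong to $\bar W$, and each $\rho_j$ is itself an interleaving of a family $(\alpha_{j,i})_i$ drawn from $W$, then composing the partition of $\N$ that witnesses $\sigma$ as an interleaving of $(\rho_1, \rho_2)$ with the partitions witnessing each $\rho_j$ gives a single partition of $\N$ that exhibits $\sigma$ as an interleaving of the combined family $(\alpha_{j,i})_{j,i}$; hence $\sigma \in \bar W$ and one step of closure suffices.

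For (3), let the completion $W^\star$ be $W$ together with every $\sigma$ such that some $\sigma' \in W$ satisfies $\forall n, m,\ \exists k,\ \sigma_{\leq n} = \sigma'_{m+k} \dots \sigma'_{m+k+n}$. To verify that $W^\star$ is itself factor-prefix complete, take $\sigma$ for which some $\tau \in W^\star$ has the analogous property. If $\tau \in W$ there is nothing to prove. Otherwise $\tau$ is witnessed by some $\tau' \in W$, and I would show that the same $\tau'$ witnesses $\sigma \in W^\star$. Fix arbitrary $n$ and $m$. First, apply the hypothesis relating $\sigma$ and $\tau$ with parameters $(n, 0)$ to obtain $k_0$ with $\sigma_{\leq n} = \tau_{k_0} \dots \tau_{k_0 + n}$. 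Second, apply the hypothesis relating $\tau$ and $\tau'$ with parameters $(k_0 + n,\ m)$ to obtain $k_1$ such that $\tau_{\leq k_0 + n}$ occurs in $\tau'$ starting at position $m + k_1$. Setting $k := k_0 + k_1$ and reading off positions delivers $\sigma_{\leq n} = \tau'_{m+k} \dots \tau'_{m+k+n}$, as required.

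\emph{Main obstacle.} The only real subtlety is preserving the arbitrary offset $m$ in (3): one must invoke the $\tau$-versus-$\tau'$ hypothesis with the very same $m$ one is handed for the $\sigma$-versus-$\tau'$ conclusion, letting the freedom in $k$ absorb $k_0$. This is precisely what makes a single completion step close the definition and rules out the need for ordinal iteration. The re-indexing in (2) is routine provided one reads interleaving as allowing (countable) families — which I take to be the intended meaning — rather than restricting to binary interleavings.
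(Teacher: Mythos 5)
Your proof is correct, and since the paper leaves Lemma~\ref{lem:closure-closed} without an explicit proof (treating it as routine), your three arguments -- concatenating the removed prefixes in (1), composing the interleaving witnesses into one witness for the combined family in (2), and the offset-absorbing transitivity computation $k := k_0 + k_1$ in (3), whose index bookkeeping checks out -- are exactly the intended one-step closure arguments. One small remark: the paper formalizes interleaving only for finite families of sequences (see the appendix definition of $I(\colHinf^1,\dots,\colHinf^k)$), so the closure in your item (2) should be read over finite families rather than countable ones, but your composition argument goes through verbatim in that reading.
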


Lemma~\ref{lem:combine-closure} shows that the operators preserve the existing properties.(Lemma~\ref{lem:closure-closed} is invoked as a proof technique.)
\begin{lemma}\label{lem:combine-closure}
\begin{enumerate}
\item Closure by prefix removal preserves closeness under interleaving.

\item Closure by prefix removal preserves factor-set and factor-prefix completeness.

\item Closure by interleaving preserves closeness under prefix removal.

\item Closure by interleaving preserves factor-set and factor-prefix completeness.

\item Factor-prefix completion preserves closeness under prefix removal.
\end{enumerate}
\end{lemma}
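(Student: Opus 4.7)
The plan is to verify each of the five items through short combinatorial arguments. Throughout I write $W^{\mathrm{pr}}$, $W^{\mathrm{il}}$, and $W^{\mathrm{fp}}$ for the three operators defined in the paragraph preceding Lemma~\ref{lem:closure-closed}; Lemma~\ref{lem:closure-closed} is used implicitly to treat each of them as the result of a \emph{single} application of the corresponding operation.

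Items (1), (3), (5), and the factor-prefix half of (2) all reduce to the same idea: a fixed finite shift transports the relevant property. For (1), given $\rho_1,\rho_2 \in W^{\mathrm{pr}}$ with $\delta_i\rho_i \in W$ and any interleaving $\rho$ of $\rho_1,\rho_2$, the word $\delta_1\delta_2\rho$ is the interleaving of $\delta_1\rho_1$ and $\delta_2\rho_2$ that exhausts $\delta_1$ first, then $\delta_2$, then produces $\rho$ by interleaving the remaining $\rho_1,\rho_2$ as the original $\rho$ does; so $\delta_1\delta_2\rho \in W$ by interleaving-closure of $W$, and hence $\rho \in W^{\mathrm{pr}}$. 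For (3), any suffix of an interleaving of $\rho_1,\rho_2 \in W$ is itself an interleaving of the corresponding suffixes of $\rho_1,\rho_2$, which lie in $W$ by prefix-removal closure. For (5), if $\gamma = \delta\gamma'' \in W^{\mathrm{fp}}$ is witnessed by some $\gamma_0 \in W$, the same $\gamma_0$ witnesses $\gamma'' \in W^{\mathrm{fp}}$, because shifting the universally quantified parameter $m$ by $|\delta|$ preserves the ``arbitrarily far'' clause. For the factor-prefix half of (2), writing the witness $\gamma_0 \in W^{\mathrm{pr}}$ as $\gamma_0 = \delta\gamma_0''$ with $\gamma_0'' \in W$ carries the prefixes of $\gamma$ arbitrarily far into $\gamma_0''$, and factor-prefix completeness of $W$ then gives $\gamma \in W \subseteq W^{\mathrm{pr}}$.

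The factor-set halves of (2) and (4) are pigeonhole arguments; the hypothesis of empty intersection with $\Col_0^\omega$ descends from the closure to $W$ because $W$ sits inside its closure. For (2), a bound $m$ that works for $\delta\rho \in W$ still works for its suffix $\rho$. For (4), if $\rho$ interleaves $\rho_1,\rho_2 \in W$ with respective bounds $m_1,m_2$, then $M := m_1+m_2-1$ bounds windows of $\rho$: any length-$M$ window projects to at least one $\rho_i$ as a consecutive block of length $\geq m_i$, which must contain a letter outside $\Col_0$.

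The main obstacle is the factor-prefix half of (4). Assume $\gamma'$ interleaves $\rho_1,\rho_2 \in W$ via some $f:\N \to \{1,2\}$, and every prefix of $\gamma$ occurs arbitrarily far in $\gamma'$. Each occurrence of $\gamma_{\le n}$ at position $p$ carries the source pattern $f|_{[p+1,p+n+1]} \in \{1,2\}^{n+1}$; compactness of $\{1,2\}^\omega$ yields a limit pattern $P^\infty$ each of whose prefixes is realised at infinitely many $p$. Splitting $\gamma$ along $P^\infty$ produces subsequences $\tau_1,\tau_2$; for every index $i$ used by $P^\infty$, the starting position $g_i(p)$ of the $\rho_i$-block extracted inside each occurrence tends to infinity with $p$, so every prefix of $\tau_i$ occurs arbitrarily far in $\rho_i$, and factor-prefix completeness of $W$ delivers $\tau_i \in W$. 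Hence $\gamma$ realises itself as an interleaving of two $W$-sequences, that is, $\gamma \in W^{\mathrm{il}}$. The remaining technical hurdle, and where most of the work sits, is the degenerate case in which $P^\infty$ uses some index only finitely often, so that the corresponding $\tau_i$ is only a finite word: I would patch it by a secondary compactness argument on the tails $\rho_i[g_i(p)+K+1..]$ along the extraction, producing an infinite extension $\tau_i^\infty \in W$ whose first $K$ letters are the finite $\tau_i$, so that $\gamma$ still arises as an interleaving of two $W$-sequences (using only a finite portion of $\tau_i^\infty$).
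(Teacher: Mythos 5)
Your route coincides with the paper's on every item: the easy items are handled by the same prepend-the-removed-prefix and shift arguments, and for the factor-prefix half of (4) you use exactly the paper's device --- annotate each occurrence of a prefix of $\gamma$ with its source pattern, extract a limit pattern by Koenig/compactness over $\{1,2\}^\omega$, split $\gamma$ along that pattern, apply factor-prefix completeness of $W$ to each part, and re-interleave. Your pigeonhole bound $m_1+m_2-1$ for the factor-set half of (4) is a clean direct argument where the paper only says ``similar'', and you are in fact more careful than the paper about the degenerate case in which the limit pattern uses one source only finitely often (the paper's proof tacitly assumes both extracted subsequences are infinite). Your patch for that case is sound, but note that the secondary compactness argument extending the finite block branches over $\Col$, so it needs $\Col$ to be finite (finite branching for Koenig); this is harmless in the paper's setting, where the lemma is only applied with finitely many colors, but it is worth stating explicitly since the lemma as phrased carries no finiteness hypothesis.

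One step fails as written: in the factor-prefix half of (2) you decompose the witness $\gamma_0 \in W^{\mathrm{pr}}$ as $\gamma_0 = \delta\gamma_0''$ with $\gamma_0'' \in W$, but membership in the closure by prefix removal gives the opposite inclusion: there is some $\delta$ with $\delta\gamma_0 \in W$, and $\gamma_0$ itself need not have any tail lying in $W$. The repair is the move you already make in item (1) and the one the paper makes here: the prefixes of $\gamma$, occurring arbitrarily far in the tail of $\gamma_0$, also occur arbitrarily far in the tail of $\delta\gamma_0 \in W$, whence $\gamma \in W \subseteq W^{\mathrm{pr}}$ by factor-prefix completeness of $W$.
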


\subsection{Concrete winning conditions}\label{sect:cwc}

The non-comprehensive list below displays classical or recent winning conditions from computer science. It especially shows that new winning conditions obtained by conjunction of older winning conditions have been recently studied, e.g. in \cite{CD10} and \cite{BMRLL2016}.

{\bf Parity}  $\Col := \{0,1,\dots n\}$ for some $n \in \N$. A sequence is winning iff the least number occurring infinitely many times in the sequence is even.

{\bf Muller} $\Col := \{0,1,\dots n\}$ for some $n \in \N$. Let $M \subseteq \mathcal{P}(\Col)$ be a set of subsets of $\Col$. A sequence is winning iff the numbers occurring infinitely many times in the sequence constitute a set in $M$.


{\bf Mean-payoff} $\Col = \R$, and a sequence is winning iff the limit superior of the partial sums is non-negative: $(u_n)_{n \in \N} \in \R^\N$ is winning iff $\limsup_{n \to _\infty}\frac{1}{n}\sum_{i=0}^nu_n \geq 0$. (Variants exist with limit inferior or positivity instead of non-negativity.)

{\bf  Energy} $\Col = \R$, and a sequence is winning iff its partial sums are non-negative: $(u_n)_{n \in \N} \in \R^\N$ is winning iff $\forall n \in \N,\,\sum_{i=0}^nu_n \geq 0$.

{\bf Boundedness}~\cite{LPR17} $\Col = \R^d$, and a sequence is winning iff its partial sums are uniformly bounded: $(u_n)_{n \in \N} \in (\R^d)^\N$ is winning iff $\exists b\forall n \in \N,\,\|\sum_{i=0}^nu_n\| \leq b$.

{\bf Discounted sum} $\Col$ is a bounded subset of $\R$. Let $0 < \alpha < 1$ and $t \in \R$. A sequence $(u_n)_{n \in \N} \in \R^\N$ is winning iff $\sum_{n=0}^{+\infty}\alpha^nu_n \geq t$.

{\bf Energy-parity}~\cite{CD10} $\Col := \R \times \{0,1,\dots n\}$ for some $n \in \N$. The winning condition is the conjunction of the energy (first component) and the parity (second component) conditions.

{\bf Average energy}~\cite{BMRLL2016} $\Col = \R$. The objective is to maintain a non-negative energy while keeping the average level of energy below a threshold $t \in \R$: a sequence $(u_n)_{n \in \N} \in \R^\N$ is winning iff $(\forall n \in \N,\,\sum_{i=0}^nu_n \geq 0) \,\wedge\, \limsup_{n \to +\infty}\frac{1}{n}\sum_{i=0}^n \sum_{j=0}^i u_j \leq t$.

\begin{observation}\label{obs:concrete-close-complete}
\begin{enumerate}
\item The parity, mean-payoff, energy, boundedness, energy-parity, and average energy conditions are all closed under interleaving. (It uses Lemma~\ref{lem:bool-combo}.\ref{lem:bool-combo2} to deals with energy-parity and average energy.)
\item The Muller and discounted sum conditions are not closed under interleaving.
\item The boundedness condition is factor-prefix complete; the others are not.
\item The energy, energy-parity, and discounted sum conditions are not closed under prefix removal; the others are.
\end{enumerate}
\end{observation}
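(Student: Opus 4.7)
The plan is to dispatch the four parts of Observation~\ref{obs:concrete-close-complete} by a mixture of short positive arguments and explicit small counterexamples, organizing the positive cases around a uniform book-keeping observation for interleavings.

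For parts~\ref{obs:concrete-close-complete}.1--\ref{obs:concrete-close-complete}.2, I would write an interleaving of two sequences $\sigma,\tau$ as a sequence $\gamma$ such that position $n$ has consumed $p(n)$ letters of $\sigma$ and $q(n)=n-p(n)$ letters of $\tau$, so the partial sum $C_n:=\sum_{i\le n}\gamma_i$ satisfies $C_n=A_{p(n)}+B_{q(n)}$ with $A,B$ the partial sums of $\sigma,\tau$. For parity, the set of colors occurring infinitely often in $\gamma$ equals the inf-occ set of whichever of $\sigma,\tau$ is consumed infinitely, or their union if both are; in each case its minimum is the minimum of at most two even numbers, hence even. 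For energy and boundedness, non-negativity and uniform boundedness of $A_{p(n)}+B_{q(n)}$ are immediate from those of $A$ and $B$. For mean-payoff one establishes $\limsup C_n/n\ge 0$ by a $\limsup$ argument: if one of $p(n),q(n)$ is eventually bounded, reduce to closure under prefix removal; otherwise an $\epsilon$-argument on subsequences of positions where both contributions approach their limsups yields the bound. Energy-parity and average-energy are conjunctions of interleaving-closed conditions, so Lemma~\ref{lem:bool-combo}.\ref{lem:bool-combo2} applies. For the two failures: with Muller condition $M=\{\{0\},\{1\}\}$, both $0^\omega$ and $1^\omega$ are winning but $(01)^\omega$ has inf-occ set $\{0,1\}\notin M$; for discounted sum with discount $\alpha=1/2$ and threshold $t=1/2$, the sequences $u=v=0\cdot 1\cdot 0^\omega$ both have discounted sum exactly $1/2$, while the interleaving $0\cdot 0\cdot 1\cdot 1\cdot 0^\omega$ has sum $1/4+1/8<1/2$.

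For part~\ref{obs:concrete-close-complete}.3, boundedness is factor-prefix complete because any factor $\rho_{m+k}\ldots\rho_{m+k+n}$ of a winning $\rho$ with $\|\sum_{i\le n}\rho_i\|\le b$ has partial-sum norm at most $2b$ by the triangle inequality, so if every prefix of $\gamma$ occurs arbitrarily far as such a factor, the partial sums of $\gamma$ are uniformly bounded and $\gamma$ is winning. For each remaining condition I would exhibit a winning $\rho$ containing arbitrarily long bad runs together with $\gamma$ the constant-bad sequence: for parity with $\Col=\{0,1\}$, $\rho=0\cdot 1\cdot 0\cdot 1^2\cdot 0\cdot 1^3\cdots$ is winning but $\gamma=1^\omega$ is not, and $1^n$ occurs arbitrarily far in $\rho$; for mean-payoff, energy, and average-energy, alternating $+1$-blocks and $-1$-blocks of growing length give a winning $\rho$ in which $(-1)^n$ occurs arbitrarily far, while $\gamma=(-1)^\omega$ loses; for Muller, choose $M$ to contain the inf-occ set of $\rho$ but not that of $\gamma$; for discounted sum with threshold $t>0$, take $\rho$ with isolated positive spikes separated by growing runs of zeros and $\gamma=0^\omega$; energy-parity inherits the failure from energy.

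For part~\ref{obs:concrete-close-complete}.4, the positive cases all rest on invariants under removal of a finite prefix: the inf-occ set is literally unchanged (parity, Muller); the $\limsup$ of averages is unchanged because shifting the numerator and the denominator by constants changes the quotient by an $o(1)$ term (mean-payoff); the partial sums of $\rho_{\ge k}$ equal those of $\rho$ offset by the constant $\sum_{i<k}\rho_i$, preserving uniform boundedness (boundedness, and analogously the $\limsup$-average component of average-energy). For the three negative cases, I would exhibit sequences whose initial moves are essential: $\rho=2\cdot(-1)\cdot 0^\omega$ satisfies the energy condition since its partial sums are $2,1,1,\ldots$, but the tail $(-1)\cdot 0^\omega$ has first partial sum $-1$; for discounted sum with threshold $t=1$ and discount $\alpha<1$, $\rho=1\cdot 0^\omega$ has sum exactly $1$ but its tail $0^\omega$ has sum $0<1$; energy-parity inherits from energy. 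The main technical obstacle across the observation is the mean-payoff interleaving case, whose $\limsup$ estimate requires exploiting $\limsup A_k/k\ge 0$ and $\limsup B_k/k\ge 0$ jointly along carefully chosen subsequences of positions.
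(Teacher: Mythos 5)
The paper states this observation without proof, so your proposal has to stand on its own; most of it does (parity, energy, boundedness and energy--parity for interleaving, the Muller and discounted-sum counterexamples, boundedness being factor-prefix complete, and the bulk of parts 3--4 are fine), but the two places you lean on vague arguments are genuine gaps. The critical one is the mean-payoff interleaving case, which you rightly flag as the main obstacle: the planned ``$\epsilon$-argument on subsequences of positions where both contributions approach their limsups'' cannot be carried out, because the interleaver controls the correlation between $p(n)$ and $q(n)$ and can ensure that there is \emph{no} position at which $A_{p(n)}$ and $B_{q(n)}$ are simultaneously good. In fact, with the paper's $\limsup$ definition the claim is false: take $\sigma=(-1)^{a_1}(+1)^{a_1}(-1)^{a_2}(+1)^{a_2}\cdots$ and $\tau=(-1)^{b_1}(+1)^{b_1}\cdots$ with each block at least twice the total length consumed so far; both have partial sums returning to $0$ infinitely often, hence $\limsup$-mean-payoff $\ge 0$, but if you interleave so that a sequence climbs back up only while the other sits at the bottom of a much deeper dip ($\sigma$-dip, $\tau$-dip, $\sigma$-up, $\sigma$-dip, $\tau$-up, $\tau$-dip, $\sigma$-up, \dots), then $C_n/n\le -1/2$ for all $n$ and the interleaving loses. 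Your style of argument is sound only for the $\liminf$ variant (there ``$A_p\ge -\epsilon p$ for all large $p$'' and the two bounds add without any synchronization), so the proof must either switch to that variant or the sub-claim has to be dropped.

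The average-energy items have related problems. Your appeal to Lemma~\ref{lem:bool-combo}.\ref{lem:bool-combo2} presupposes that the conjunct ``$\limsup$ of the running average of partial sums $\le t$'' is interleaving-closed on its own, and it is not: interleave $1,-1,0,1,-1,0^2,1,-1,0^3,\dots$ (running average of partial sums $\to 0$) with $0^\omega$, stalling on the $0^\omega$ component exactly while the partial sum sits at $1$; the interleaving keeps energy $\ge 0$ but its running average has $\limsup\ 1>t=0$, so the full average-energy condition is not interleaving-closed either and this step cannot be repaired as stated. In part 4 your ``offset by a constant'' argument covers only the average component of average-energy and ignores its energy conjunct, which prefix removal does not preserve: $1,-1,0^\omega$ is average-energy-winning for every $t\ge 0$, while its tail $-1,0^\omega$ violates energy. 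Finally, in part 3 your witness for average-energy (alternating $\pm 1$ blocks of growing length) is not average-energy-winning, since its running average of partial sums diverges; you need rare, short excursions separated by ever longer stretches of zeros, e.g. $0^{m_1}(+1)(-1)0^{m_2}(+1)^2(-1)^2\cdots$ with $m_k$ huge, which does contain $(-1)^n$ arbitrarily far while $(-1)^\omega$ loses.
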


\begin{corollary}
The turn-based safety-mean-payoff-parity games are half-positionally determined. (By Corollary~\ref{cor:half-pos-one} and Section~\ref{sect:apcc}.)
\end{corollary}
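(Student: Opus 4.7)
The plan is to invoke Corollary~\ref{cor:half-pos-one} with the conjunctive winning condition $W := W_{\mathrm{safety}} \cap W_{\mathrm{mean\text{-}payoff}} \cap W_{\mathrm{parity}}$, so the only real work is to check that the three closeness hypotheses needed by that corollary (finite $Q$, finite $\Col$, interleaving-closeness, prefix-removal-closeness) are satisfied by $W$.

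First, I would note that the three components $W_{\mathrm{safety}}$, $W_{\mathrm{mean\text{-}payoff}}$, $W_{\mathrm{parity}}$ are each closed under interleaving and under prefix removal. For mean-payoff and parity this is exactly Observation~\ref{obs:concrete-close-complete}.1 and 4 (they appear in the interleaving-closed list and in the ``others'' list that is closed under prefix removal). Safety is not in the list above, but I would recall the remark made just after Corollary~\ref{cor:half-pos-one}, where the author explicitly states that the safety condition is closed under interleaving and under prefix removal; both facts are immediate from the definition (an interleaving of two sequences that never hit the forbidden set never hits it, and any suffix of such a sequence still never hits it).

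Next I would combine the three. By Lemma~\ref{lem:bool-combo}.\ref{lem:bool-combo2} the class of interleaving-closed languages is closed under intersection, so $W$ is interleaving-closed. Closure under prefix removal is also preserved by intersection: if $\colH\cdot\colHinf \in W_1\cap W_2$ and each $W_i$ is closed under prefix removal, then $\colHinf \in W_i$ for both $i$, hence $\colHinf\in W_1\cap W_2$. So $W$ is closed under prefix removal.

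Finally I would pick the coloring so that $\Col$ is finite: the game is played on a finite graph, so only finitely many colors actually label transitions, and we may safely restrict $\Col$ to that finite set. Then $Q$ and $\Col$ are finite, $W$ is closed under interleaving and prefix removal, and the game is turn-based, so Corollary~\ref{cor:half-pos-one} applies and yields that either \pO\/ has a winning strategy or \pT\/ has a positional winning strategy, which is precisely the claim of half-positional determinacy. There is no real obstacle here; the only thing to be careful about is that the abstract winning condition on color sequences correctly encodes each of the three objectives on the concrete finite graph, which is standard.
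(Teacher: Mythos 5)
Your proposal is correct and follows exactly the route the paper intends: each of safety, mean-payoff, and parity is closed under interleaving and prefix removal, these closure properties are preserved under intersection (Lemma~\ref{lem:bool-combo}.\ref{lem:bool-combo2} plus the trivial argument for prefix removal), and with $Q$ and the effectively occurring color set finite, Corollary~\ref{cor:half-pos-one} gives half-positional determinacy. This matches the paper's (implicit) proof, so there is nothing to add.
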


It may be disappointing that the Muller condition is not even closed under interleaving, but Proposition~\ref{prop:interleaving-win-one-eq} below extends Theorem~\ref{thm:interleaving-prefix removal-eq} to the concurrent Muller games. Using results from \cite{AH00} is likely to yield a better algorithmic complexity, though, but the point here is mainly that Theorem~\ref{thm:interleaving-prefix removal-eq} can be extended.
\begin{proposition}\label{prop:interleaving-win-one-eq}[Similar to \cite{AH00}]
Consider the finite games $\langle  \AO, \AT, Q, q_0, \delta, \Col, \tr, W\rangle$ where $W$ is a Muller condition. Deciding the existence of a \pO\/ winning (finite-memory) strategy can be done in big $O$ of 
$$
(|\AO||\AT||\Col||\Col|!)^2\cdot (|Q||\Col|^2|\Col|!)^{|Q||\Col||\Col|!\left(2^{|Q||\Col|^2|\Col|!}\right)}
$$
computation steps.
\end{proposition}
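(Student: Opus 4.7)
The plan is to apply the classical Gurevich--Harrington Last Appearance Record (LAR) reduction to turn the concurrent Muller game $g$ into an equivalent concurrent parity game $g'$, and then invoke Corollary~\ref{cor:interleaving-win-one-eq}, since parity is closed under both interleaving and prefix removal (Observation~\ref{obs:concrete-close-complete}). Concretely, I would take as new state space $Q' := Q \times \mathrm{Perm}(\Col) \times \{1,\dots,|\Col|\}$, of cardinality $|Q|\cdot|\Col|!\cdot|\Col|$; the transition function $\delta'$ updates the $Q$-component via $\delta$ and updates the LAR component deterministically from the color $\tr(q,a,b)$ just produced, by moving that color to the front of the permutation and recording its previous index as the ``hit'' $h$. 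The new color set $\Col'$ is the set of parity ranks, of cardinality $O(|\Col|)$; the new trace $\tr'$ assigns to $(q,(\pi,h))$ the rank computed in the standard way from the Muller family and from the length-$h$ prefix of $\pi$.

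Because the LAR transformation is a function of the color sequence alone, the map sending every full run of $g$ to the corresponding full run of $g'$ is a bijection (relative to each starting state) that preserves winning: a color trace satisfies the Muller condition iff its LAR-transformed trace satisfies the parity condition. Hence \pO\/ wins $g$ iff she wins $g'$, and similarly at the level of finite-memory strategies (up to the memory blow-up $|\Col|!\cdot|\Col|$ contributed by the LAR). Since parity is closed under interleaving and under prefix removal, Corollary~\ref{cor:interleaving-win-one-eq} applies to $g'$ and gives a decision procedure for it, hence also for $g$.

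For the complexity bound, I would use that one-player parity games (the derived games produced by fixing a delayed response) are solvable in polynomial time, so the function $f$ appearing in Corollary~\ref{cor:interleaving-win-one-eq} is polynomial in $|\AO|$, $|Q'|$, and $|\Col'|$. Plugging $|Q'|=|Q|\,|\Col|!\,|\Col|$ and $|\Col'|=O(|\Col|)$ into Corollary~\ref{cor:interleaving-win-one-eq}, the quantity $|\Col'\times Q'|$ is $O(|Q|\,|\Col|^2\,|\Col|!)$, so the dominant exponential factor becomes $(|Q|\,|\Col|^2\,|\Col|!)^{|Q|\,|\Col|\,|\Col|!\cdot 2^{|Q|\,|\Col|^2\,|\Col|!}}$, matching the exponent in the statement. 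The polynomial prefactor $f(|\AO|,|Q'|,|\Col'|)$ together with the additive $|Q'||\AO||\AT|$ term fit comfortably inside the leading factor $(|\AO||\AT||\Col||\Col|!)^2$.

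The main obstacle I expect is bookkeeping rather than conceptual novelty: first, one must check that the LAR reduction really does transfer verbatim to the concurrent setting, which it does because the construction operates only on color sequences and is oblivious to whether those sequences come from one or two players acting concurrently; second, one must verify that each term of the total complexity bound from Corollary~\ref{cor:interleaving-win-one-eq} really fits inside the form stated in the proposition, in particular that the polynomial cost of solving one-player parity games is absorbed by the square of the first factor.
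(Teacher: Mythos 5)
Your proposal matches the paper's own proof: it performs the same LAR expansion of the concurrent Muller game into a concurrent parity game (with state space of size $|Q||\Col||\Col|!$), notes that parity is closed under interleaving and prefix removal, and then invokes Corollary~\ref{cor:interleaving-win-one-eq} with a polynomial bound for one-player parity games (the paper makes this explicit as $O(m^2)$ in the number of edges, which yields the leading factor $(|\AO||\AT||\Col||\Col|!)^2$). No substantive difference in route or in the complexity accounting.
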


\subsection{Bounded residual load}\label{sect:brl}

Unlike Theorems~\ref{thm:interleaving-prefix removal-eq} and \ref {thm:interleaving-win-all-eq}, Theorems~\ref{thm:no-win-all-random-win} and \ref{thm:stateless-semi-rand-det} are not likely to be extended to include $\omega$-regular languages. Before defining a variant of the $\omega$-regular languages that satisfies the closeness and completeness properties from this article, let us consider notions of fairness that can be defined via a predicate $S$ on $\N \times \N \times \Col^\omega$. Intuitivily $S(n,d,\colHinf)$ is supposed to mean that the sequence $\colHinf$ has satisfied, with delay at most $d$, a request that was formulated in $\colHinf$ at time $n$.

There are several reasonable ways to express the good behavior of an infinite sequence using the $S(n,d,\colHinf)$. The classical definition of \emph{fairness} requires that all problems be eventually solved (see $F$ below), or cofinitely many problems (see $FCI$ below), for a usual weakening that ensures prefix independence of the condition. Arguing that this kind of fairness gives no guarantee about response time, \cite{AH00} strengthened fairness into \emph{finitary fairness}, which requires the existence of a uniform bound on the waiting time (see $FF$ below).

Yet another variant, \emph{bounded residual load} ($BRL$), is introduced below. It says that $\colHinf \in \Col^\omega$ satisfies $S$ wrt bounded residual load, if the number of problems that have currently not yet been solved is uniformly bounded overtime.
\begin{enumerate}
\item $F(\colHinf) := \forall n\in \N, \exists d \in \N,\,S(n,d,\colHinf)$
\item $FCI(\colHinf) := |\{n \in \N\,\mid\,\forall d \in \N,\,\neg S(n,d,\colHinf) \}| < \infty$
\item $FF(\colHinf) := \exists d \in \N, \forall n\in \N,\,S(n,d,\colHinf) $
\item $BRL(\colHinf) := \exists b \in \N, \forall n \in \N,\, \,b \geq |\{k \in \N\,\mid\, k \leq n\,\wedge\,\neg S(k,n-k,\colHinf)\}|$
\end{enumerate}

\begin{observation}\label{obs:brl-fairness}
\begin{enumerate}
\item $FF(\gamma)\,\Rightarrow\,F(\gamma)\quad \wedge \quad F(\gamma)\,\Rightarrow\,FCI(\gamma)$

\item $FF(\gamma)\,\Rightarrow\,BRL(\colHinf)\quad \wedge \quad BRL(\colHinf)\,\Rightarrow\,FCI(\gamma)$

\item $F$ and $BRL$ are incomparable in general.
\end{enumerate}
\end{observation}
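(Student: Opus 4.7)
The plan is to treat the three clauses of the observation separately, making explicit a harmless monotonicity assumption on $S$ in its second argument: if a time-$n$ request is satisfied within delay $d$, then it is also satisfied within any delay $d' \ge d$. This matches the informal reading of $S(n,d,\colHinf)$ as ``satisfied with delay at most $d$'' given in the paper; without it, even the most elementary implication $FF \Rightarrow BRL$ breaks down, so I would spell the assumption out at the start.

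For clause (1), both implications follow from pure quantifier-prefix manipulation: the uniform delay witnessing $FF$ instantiates the existential in $F$, and $F$ states precisely that the ``bad'' set $U(\gamma) := \{n \in \N : \forall d,\,\neg S(n,d,\gamma)\}$ appearing inside $FCI$ is empty, hence a fortiori finite. For clause (2), I would prove $FF \Rightarrow BRL$ by taking the $BRL$-bound equal to the $FF$-delay $d$: at time $n$, any pending index $k \le n$ satisfies $\neg S(k,n-k,\gamma)$, and monotonicity together with $S(k,d,\gamma)$ then forces $n - k < d$, confining the pending set to $\{n-d+1,\dots,n\}$, of cardinality at most $d$. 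For $BRL \Rightarrow FCI$ I would argue by contraposition: if $U(\gamma)$ is infinite, pick $b+1$ of its elements $k_1 < \dots < k_{b+1}$ and evaluate at $n := k_{b+1}$ to exhibit at least $b+1$ pending indices $k \le n$, refuting the alleged $BRL$-bound $b$.

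For clause (3), I would produce two explicit counterexamples in which $S$ is chosen independent of $\gamma$ (and incidentally trivially monotone). To separate $F$ from $BRL$, let $S(n,d,\gamma)$ hold iff $d \ge n$: every request is eventually satisfied (take $d := n$), but at time $n$ the pending set $\{k \le n : n - k < k\}$ has cardinality about $n/2$, so no uniform bound exists and $BRL$ fails. To separate $BRL$ from $F$, let $S(n,d,\gamma)$ hold iff $n \ge 1$: only the time-$0$ request is ever pending, so $BRL$ holds with $b = 1$, while $F$ fails at $n = 0$.

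The only real design point is the explicit monotonicity hypothesis on $S$, which I would flag at the beginning of the proof; everything else is routine quantifier bookkeeping, so I do not expect any substantive obstacle.
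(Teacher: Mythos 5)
Your proposal is correct, and for clauses (1) and (2) it follows essentially the paper's own route: (1) is pure quantifier bookkeeping, and for $FF\Rightarrow BRL$ you take the $FF$-delay $d$ itself as the $BRL$-bound, exactly as the paper does (the paper leaves $BRL\Rightarrow FCI$ implicit; your contraposition via $b+1$ never-satisfied indices fills that in cleanly and needs no monotonicity). Two points of genuine divergence are worth noting. First, you make explicit the monotonicity of $S$ in the delay argument; the paper never states it, but it is tacitly assumed via the reading ``satisfied with delay \emph{at most} $d$'' and holds automatically for the concrete predicate $S_\mathcal{C}$ introduced just after the observation, so flagging it is a legitimate clarification of the abstract statement rather than a repair of a gap (and indeed it is needed only for $FF\Rightarrow BRL$). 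Second, for clause (3) the paper builds both counterexamples inside the request/solution instantiation: colors $\{0,1,2\}$ where $0$ requests a later $1$, etc., so that $BRL_\mathcal{C}(02^\omega)$ holds while $F_\mathcal{C}(02^\omega)$ fails, and $F_\mathcal{C}(0 1 0^2 1\dots 0^n 1\dots)$ holds while $BRL_\mathcal{C}$ fails because of the unboundedly long blocks of pending $0$'s. Your counterexamples instead use degenerate predicates $S(n,d,\gamma)$ depending only on the position ($d\ge n$, resp.\ $n\ge 1$), which is perfectly sufficient for the abstract statement ``incomparable in general'' and is arguably simpler; what it does not buy is the stronger and more relevant fact, used implicitly later in Section~5.3, that incomparability already occurs within the natural family $S_\mathcal{C}$ of $\omega$-regular-style request predicates (your first example in particular has no obvious realization as some $S_\mathcal{C}$). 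So: correct, same skeleton for (1)--(2), a more abstract but less informative choice of witnesses for (3).
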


The finitary fairness and the like may be too strict for some applications: gladly accepting to wait $b$ time units,  but categorically refusing to wait $b+1$ time units sounds unusual indeed. Instead, the system (which is responsible for solving the problems) could pay a penalty for each problem spending each time unit unsolved. Thanks to the bounded residual load, one has then the guarantee that the amount of money to be paid per time unit is bounded. 

It is possible to combine the two ideas, though:  by setting an acceptable response time and an acceptable uniform bound on the number of missed deadlines. This however, turns out to be equivalent to the simple $BRL$, which argues for the robustness of the concept.
\begin{observation}\label{obs:brl-richer-eq}
Let $BRLD(\colHinf) := \exists b,d \in \N, \forall n \in \N,\, \,b \geq |\{k \in \N\,\mid\, k \leq n -d \,\wedge\,\neg S(k,n-k,\colHinf)\}|$, then $BRLD(\colHinf)\,\Leftrightarrow\,BRL(\colHinf)$.\end{observation}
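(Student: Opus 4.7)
The plan is to prove the two implications separately, with both being essentially bookkeeping about the size of the set of ``unsatisfied past requests'' up to time $n$.

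For the easy direction $BRL \Rightarrow BRLD$, I would take the same bound $b$ and choose $d := 0$. Since the set $\{k \in \N \mid k \leq n - 0 \wedge \neg S(k,n-k,\colHinf)\}$ coincides with the $BRL$ set, the quantification is immediate. (Even more generally, for any $d$ the $BRLD$ set is a subset of the $BRL$ set, so $b$ always works.)

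For the main direction $BRLD \Rightarrow BRL$, suppose $BRLD$ holds with witnesses $b,d$. I would decompose the $BRL$ set at threshold $n$ as
\[
\{k \leq n \mid \neg S(k,n-k,\colHinf)\} \;=\; \{k \leq n-d \mid \neg S(k,n-k,\colHinf)\} \;\cup\; \{k : n-d < k \leq n,\ \neg S(k,n-k,\colHinf)\}.
\]
The first piece is bounded by $b$ by the $BRLD$ hypothesis, and the second piece contains at most $d$ indices simply because the interval $(n-d, n]$ has at most $d$ integers in it. Hence $b' := b + d$ witnesses $BRL$. Edge cases in which $n - d < 0$ (so the first set is empty and the second is shorter) only make the bound easier, so no special treatment is needed.

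There is no genuine obstacle here; the only subtle point is being careful about the asymmetry in the definitions (the $BRLD$ version can ``ignore'' the most recent $d$ time steps) and observing that this slack costs at most $d$ extra elements, which can be absorbed into the bound. I would state the observation with a single line per direction and the set-decomposition above as the key computation.
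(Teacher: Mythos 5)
Your proof is correct and matches the paper's own argument: the paper likewise handles $BRL \Rightarrow BRLD$ by taking $d = 0$ and $BRLD \Rightarrow BRL$ by taking $b + d$ as the new bound, which is exactly your decomposition into the indices $k \leq n-d$ (at most $b$) and the at most $d$ indices in $(n-d, n]$. Your write-up just spells out the set-splitting that the paper leaves implicit.
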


A second justification for the $BRL$ is that it has nice properties that the other notions of fairness lack when $S(n,d,\colHinf)$ is defined to minic $\omega$-regular languages, as shown below. Consider a non-empty set $\Col$ of colors and a function $\mathcal{C}: C \to \mathcal{P}(\Col^*)$. A sequence $\colHinf \in \Col^\omega$ is said to satisfy $\mathcal{C}$ from position $n$ after delay $d$, denoted $S_\mathcal{C}(n,d,\colHinf)$, if the following holds.
$$\exists u \in\mathcal{C}(\colHinf_n), \exists (k_1, \dots , k_{|u|}) \in \N^{|u|},\,n < k_1< \dots < k_{|u|} \leq n+d \,\wedge\,\forall i \leq |u|,\,u_i = \colHinf_{k_i}$$
Intuitively, each color is a problem or a request, and the problem may be solved in several ways, each way consisting in enumerating suitable colors quickly. (This might very well correspond to the positive fragment of some bounded-time temporal logic.) To simulate the parity condition, one can set $\Col := \N$ and $\mathcal{C}(2n) := \{\{k\} \mid  k \in \N \}$ and $\mathcal{C}(2n+1) := \{\{2k\} \mid  k \in \N\wedge k \leq n\}$ for all $n \in \N$. The corresponding $BRL_\mathcal{C}$ is the parity condition with bounded residual load.

Lemma~\ref{lem:brl-close-complete} below says that however $\mathcal{C}$ may be instantiated, all Theorems~\ref{thm:interleaving-prefix removal-eq}, \ref {thm:interleaving-win-all-eq}, \ref{thm:no-win-all-random-win}, and \ref{thm:stateless-semi-rand-det} can be applied with the $BRL_\mathcal{C}$ winning condition.

\begin{lemma}\label{lem:brl-close-complete}
For every non-empty set $\Col$ of colors and every function $\mathcal{C}: C \to \mathcal{P}(\N^\Col)$, the winning condition $BRL_\mathcal{C}$ is closed under prefix removal  and interleaving, and factor-prefix complete.
\end{lemma}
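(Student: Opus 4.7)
The plan is to reduce all three properties to a single core observation: if $\pi:\{0,1,\dots,N\}\to\N$ is strictly increasing, $\rho\in\Col^\omega$ and $\rho'\in\Col^\omega$ agree via $\rho_i=\rho'_{\pi(i)}$ for all $i\le N$, and the request at position $\pi(i)$ in $\rho'$ is \emph{satisfied} at time $\pi(N)$ using witness indices that happen to lie in the image of $\pi$, then the request at position $i$ in $\rho$ is satisfied at time $N$. A witness $\pi(i)<j_1<\dots<j_{|u|}\le\pi(N)$ with $u_l=\rho'_{j_l}$ for some $u\in\mathcal{C}(\rho'_{\pi(i)})=\mathcal{C}(\rho_i)$ pulls back to $i<\pi^{-1}(j_l)$ in $\rho$, preserving order and colors. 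Taking the contrapositive, the number of unsatisfied requests in $\rho$ at time $N$ is bounded above by the number of unsatisfied requests in $\rho'$ at time $\pi(N)$ whose positions lie in $\pi(\{0,\dots,N\})$, which itself is at most the overall unsatisfied count in $\rho'$ at time $\pi(N)$.

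For closure under prefix removal, given $\gamma\cdot\rho\in BRL_\mathcal{C}$ with bound $b$, apply this observation at each $n$ with the embedding $\pi(i):=|\gamma|+i$. The unsatisfied count in $\rho$ at time $n$ is then bounded by the unsatisfied count in $\gamma\cdot\rho$ at time $|\gamma|+n$, hence by $b$, so $\rho\in BRL_\mathcal{C}$ with the same bound. For closure under interleaving, suppose $\rho^1,\rho^2\in BRL_\mathcal{C}$ with bounds $b_1,b_2$ and $\rho$ is an interleaving realized by order-preserving injections $\iota_1,\iota_2$ into $\N$ with disjoint images. For any time $n$, let $n_i$ be the number of $\rho^i$-positions enumerated by time $n$ (so $\iota_i(n_i-1)\le n$ when $n_i>0$). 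Applying the embedding observation with $\pi:=\iota_i$ restricted to $\{0,\dots,n_i-1\}$ shows that the unsatisfied requests in $\rho$ at time $n$ of $\rho^i$-origin are at most the unsatisfied count in $\rho^i$ at time $n_i$, hence at most $b_i$; summing yields bound $b_1+b_2$. The case where one sequence contributes only finitely many positions is absorbed automatically, since the corresponding count is always bounded by the value of $BRL_\mathcal{C}$ for that sequence.

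For factor-prefix completeness, suppose $\rho'\in BRL_\mathcal{C}$ with bound $b$ witnesses that every prefix of $\rho$ appears as a factor of $\rho'$ arbitrarily far out. Fix $n$, take $m:=0$ in the completeness hypothesis to obtain $k$ with $\rho_{\le n}=\rho'_k\cdots\rho'_{k+n}$, and apply the core observation with $\pi(i):=k+i$: the unsatisfied count in $\rho$ at time $n$ is bounded by the unsatisfied count in $\rho'$ at time $k+n$, hence by $b$. Since $n$ was arbitrary and $b$ does not depend on $n$, we conclude $\rho\in BRL_\mathcal{C}$ with bound $b$. The main obstacle is the interleaving step, which requires careful bookkeeping to match positions in $\rho$ with positions in the $\rho^i$ via the order-preserving injections and to justify that witnesses in the latter map to legal witnesses in the former within the correct time horizon; the other two cases then follow as essentially identical specializations of the same embedding principle.
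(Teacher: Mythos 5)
Your proof follows essentially the same route as the paper's: all three properties are obtained by counting currently unsatisfied requests and transferring them along an order-preserving embedding of positions, which is exactly what the paper does (much more tersely) for prefix removal, interleaving, and factor-prefix completeness. One caveat concerns the direction of your single core observation. As stated it is the pull-back direction (satisfaction at $\pi(i)$ in the long sequence $\rho'$, \emph{with witnesses inside the image of $\pi$}, implies satisfaction at $i$ in $\rho$), and its contrapositive only bounds the unsatisfied count of $\rho$ by the number of image positions of $\rho'$ that are not satisfiable \emph{using image witnesses}; identifying this with the genuine unsatisfied count of $\rho'$ is legitimate in the prefix-removal and factor-prefix cases, because there the image is a contiguous block $\{k,\dots,k+n\}$ that automatically contains every potential witness, but it is not a valid inference for an arbitrary $\pi$. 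Moreover, the interleaving step actually needs the opposite, push-forward direction: a witness $j<j_1<\dots\le n_i-1$ for a request of $\rho^i$ maps under $\iota_i$ to a witness within horizon $n$ for the corresponding request of the interleaving $\rho$, so, contrapositively, every unsatisfied position of $\rho$ of $\rho^i$-origin comes from an unsatisfied position of $\rho^i$, giving the bound $b_1+b_2$. That push-forward fact is even simpler than your pull-back observation (no proviso on witnesses is needed) and is precisely what the paper's one-line inequality for interleaving expresses; with the interleaving case justified this way rather than by the stated observation, your argument is correct and matches the paper's.
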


Even when $\mathcal{C}$ simulates the parity condition as above, none of the corresponding $F_\mathcal{C}$, $FCI_\mathcal{C}$, or $FF_\mathcal{C}$ is both closed under interleaving and factor-set complete. $FF_\mathcal{C}$ is not closed under interleaving: $FF_\mathcal{C}((01)^\omega)$ and $FF_\mathcal{C}((23)^\omega)$, but $\neg FF_\mathcal{C}(\colHinf)$, where $\colHinf := (23)01(23)^201 \dots 01(23)^n01 \dots$ can be obtained by interleaving $(01)^\omega$ and $(23)^\omega$. $FCI_\mathcal{C}$ is not factor set-complete: $FCI_\mathcal{C}(\colHinf)$, where $\colHinf := 101^201^3 \dots 01^n0 \dots$, but $\neg FCI_\mathcal{C}(1^\omega)$ altough factors of $1$'s occur with arbitrary length in $\colHinf$. $F_\mathcal{C}$ is neither: first, $F_\mathcal{C}((10)^\omega)$ and $F_\mathcal{C}(2^\omega)$, but $\neg F_\mathcal{C}(1 \cdot 2^\omega)$, altough $1 \cdot 2^\omega$ can be obtained by interleaving $(10)^\omega$ and $2^\omega$; second, as above for $FCI_\mathcal{C}$. Note that the window-parity condition~\cite{CDRR15},\cite{BHR16} is not closed under interleaving either, as again exemplified by $(01)^\omega$ and $(23)^\omega$.



\bibliography{article}


\newpage

\appendix

\section{Existence of \pO\/ winning strategies}

\subsection{More on interleaving}

\emph{Interleaving} two finite word consists in enumerating sequentially the two words to produce a new word. For example, interleaving $024$ and $13$ can produce the words $01234$ and $10324$, but neither $31024$ nor $01432$. Formally, interleaving finite words over some alphabet $\Col$ is defined by induction: for all $\colH,\colH',\colH'' \in \Col^*$ and $\col \in \Col$, set $\e \in I(\colH,\colH')$ and $\colH'' \in I(\colH,\colH')\,\Rightarrow \colH'' \col \in I(\colH \col,\colH') \cap I(\colH,\colH' \col)$.

\begin{observation}
Interleaving finite words is associative and commutative.
\end{observation}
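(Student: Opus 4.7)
The plan is to reformulate interleaving via position partitions, after which both properties reduce to elementary bijections on finite index sets. First I would prove, by induction on $|u|+|v|$, that $z \in I(u,v)$ iff $|z|=|u|+|v|$ and there is a partition of $\{1,\dots,|z|\}$ into subsets $S_u$ and $S_v$ of sizes $|u|$ and $|v|$ such that reading the letters of $z$ at the positions of $S_u$ in increasing order spells $u$, and likewise $S_v$ spells $v$. The forward direction unfolds the inductive definition and records, for each appended letter, which of the two production rules was applied; the converse inducts on $|z|$, splitting on whether the last position lies in $S_u$ or $S_v$.

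Commutativity of $I$ then follows at once: the map $(S_u,S_v) \mapsto (S_v,S_u)$ is a bijection between witness partitions for $z \in I(u,v)$ and those for $z \in I(v,u)$.

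For associativity I would first lift $I$ to sets of words by $I(A,B) := \bigcup_{a\in A,\,b\in B} I(a,b)$, so the statement becomes $I(I(u,v),w) = I(u,I(v,w))$. Applying the position-partition characterization twice, any $z$ in the left-hand side corresponds to a partition of $\{1,\dots,|u|+|v|+|w|\}$ into three blocks $S_u,S_v,S_w$ spelling $u,v,w$ respectively, where the intermediate interleaving $x \in I(u,v)$ is the restriction of $z$ to $S_u \cup S_v$; symmetrically, any $z$ in the right-hand side yields the same kind of triple partition, with the intermediate interleaving sitting on $S_v \cup S_w$. The same triple partition thus witnesses membership in both sides, giving the desired equality.

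The main obstacle is simply the routine verification that the given inductive definition of $I$ coincides with the position-partition description; once that equivalence is in place, both properties are pure bookkeeping on finite partitions, and no further difficulty is expected.
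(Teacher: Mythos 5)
The paper states this observation without proof, so there is no in-paper argument to compare against; your proposal has to stand on its own, and it does. The position-partition (shuffle) characterization you set up --- $z \in I(u,v)$ iff $|z|=|u|+|v|$ and the positions of $z$ split into two blocks that, read in increasing order, spell $u$ and $v$ --- is the standard route, and both halves of the observation then follow exactly as you say: commutativity is the swap of the two blocks (it also follows even more directly by induction on the derivation, since the base case and the production rule of $I$ are symmetric in the two components), and associativity, once $I$ is lifted elementwise to sets of words as you do (which is indeed necessary for the statement to even typecheck), reduces on both sides to the existence of a triple partition of the positions of $z$ spelling $u$, $v$, $w$; the only real work is transporting the partition of the intermediate word $x$ back along the increasing bijection onto $S_u \cup S_v$, which is the bookkeeping you acknowledge. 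One caveat you should address explicitly: as literally printed, the paper's base case reads ``$\e \in I(\colH,\colH')$'' for all $\colH,\colH'$, under which your length condition $|z|=|u|+|v|$ fails (e.g.\ $\e \in I(0,1)$) and your characterization is not an equivalence. This is evidently a slip for $\e \in I(\e,\e)$ --- the informal description (``enumerating sequentially the two words'') and the way the base case is used in the proof of Lemma~\ref{lem:interleaving-eq} only make sense for complete interleavings --- so state that you work with that reading before proving the equivalence; with that fixed, your plan goes through without difficulty.
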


Let us now give a possible formalization of the interleaving of infinite words. Let $\colHinf^1, \dots,\colHinf^k \in \Col^\omega$. Then $\colHinf \in I(\colHinf^1,\dots,\colHinf^k)$ iff there exist $l_1, \dots,l_k:\N \to \N$ such that
\begin{itemize}
\item $\forall n\in N,\, \sum_{i=1}^k l_i(n)= n$,
\item $\forall n\in N,\forall i \in \{1,\dots,k\},\,l_i(n) \leq l_i(n+1)$,
\item $\forall n\in N,\,\colHinf_n = \colHinf^{i(n)}_{l_{i(n)}(n)} \mbox{ where }i(n) \mbox{ is the unique }i\mbox{ such that }l_{i}(n) < l_{i}(n+1)$.
\end{itemize}

\begin{lemma}\label{lem:interleaving-eq}
Let $\colHinf^1, \dots,\colHinf^k \in \Col^\omega$. Then $\colHinf \in I(\colHinf^1,\dots,\colHinf^k)$ iff for all $n \in \N$ there exists $(\colH^{i})_{i \leq k}$ such that $\colH^{i} \sqsubset \colHinf^{i}$ for all $i \leq k$ and $\colHinf_{< n}$ can be obtained by interleaving the $\colH^i$.

\begin{proof}
Let us first assume that $\colHinf \in I(\colHinf^1,\dots,\colHinf^k)$, and let us prove by induction on $n \in \N$ that $\colHinf_{< n} \in I(\colHinf^1_{< l_1(n)},\dots, \colHinf^k_{< l_k(n)})$. For the base case, $\colHinf_{< 0} = \e \in I(\e,\dots, \e) = I(\colHinf^1_{< l_1(0)},\dots, \colHinf^k_{< l_k(0)})$ since $l_i(0) = 0$ for all $i \leq k$. For the inductive case $\colHinf_{< n+1}  = \colHinf_{< n}\colHinf_{n} = \colHinf_{< n}\colHinf^{i(n)}_{l_{i(n)}(n)}$ by definition of the interleaving of infinite words. By I.H $\colHinf_{< n}\in I(\colHinf^1_{< l_1(n)},\dots, \colHinf^k_{< l_k(n)})$. On the one hand, for all $j \neq i(n)$, from $l_j(n+1) = l_j(n)$ follows $\colHinf^j_{l_j(n+1)} = \colHinf^j_{l_j(n)}$; on the other hand, $\colHinf^{i(n)}_{< l_{i(n)}(n+1)} = \colHinf^{i(n)}_{< l_{i(n)}(n)}\colHinf^{i(n)}_{l_{i(n)}(n)}$ since $l_{i(n)}(n) +1 = l_{i(n)}(n+1)$. Therefore $\colHinf_{< n+1} \in I(\colHinf^1_{< l_1(n+1)},\dots, \colHinf^k_{< l_k(n+1)})$ by definition of interleaving of finite words.

Conversely, let us assume that for all $n \in \N$ there exists $(\colH^{i})_{i \leq k}$ such that $\colH^{i} \sqsubset \colHinf^{i}$ for all $i \leq k$ and $\colHinf_{< n}$ can be obtained by interleaving the $\colH^i$.
\end{proof}
\end{lemma}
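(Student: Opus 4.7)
For the converse direction, the plan is to assemble the required functions $l_1,\dots,l_k$ from the family of finite-prefix witnesses provided by the hypothesis, using a Koenig's Lemma argument to enforce compatibility across different prefix lengths. For each $n \in \N$ I would let $V_n$ denote the (finite) set of tuples $(j_1,\dots,j_k) \in \N^k$ satisfying $\sum_i j_i = n$ and such that $\colHinf_{<n}$ can be obtained by interleaving $\colHinf^1_{<j_1},\dots,\colHinf^k_{<j_k}$. The hypothesis is exactly the assertion that every $V_n$ is nonempty, and $V_0 = \{(0,\dots,0)\}$.

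Next I would consider the tree whose vertices at level $n$ are the elements of $V_n$, with the parent of each $v = (j_1',\dots,j_k') \in V_{n+1}$ defined as follows: fix one witness interleaving of $\colHinf_{<n+1}$ from $\colHinf^1_{<j_1'},\dots,\colHinf^k_{<j_k'}$; in it the final letter $\colHinf_n$ is contributed by exactly one sequence $\colHinf^{i_0}$ at position $j_{i_0}'-1$, and I set the parent to be $(j_1',\dots,j_{i_0}'-1,\dots,j_k')$, which lies in $V_n$ by truncating the witness interleaving. This tree is rooted at $(0,\dots,0)$, finitely branching (each node has at most $k$ children, one per incrementable coordinate), and infinite (every level is nonempty). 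Koenig's Lemma then yields an infinite branch $(v_n)_{n \in \N}$ with $v_n \in V_n$, and writing $v_n =: (l_1(n),\dots,l_k(n))$ the three defining clauses of $I(\colHinf^1,\dots,\colHinf^k)$ follow at once: the summation clause holds because $v_n \in V_n$; monotonicity of each $l_i$ and uniqueness of the incremented index $i(n)$ come from the shape of the parent map; and $\colHinf_n = \colHinf^{i(n)}_{l_{i(n)}(n)}$ is exactly the datum used to define the parent of $v_{n+1}$.

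The main obstacle, and the reason a naive induction does not suffice, is the incompatibility of the local witnesses: a witness interleaving for $\colHinf_{<n+1}$ need not extend any previously chosen witness for $\colHinf_{<n}$, so one cannot simply pick witnesses greedily level by level and read off the $l_i$. The parent map together with Koenig's Lemma is precisely the device that threads a coherent chain through these local witnesses.
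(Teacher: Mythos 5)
Your proposal addresses only the converse implication, but that is in fact exactly the direction the paper leaves unfinished: the paper's proof establishes the forward implication by induction on $n$ (showing $\colHinf_{<n} \in I(\colHinf^1_{<l_1(n)},\dots,\colHinf^k_{<l_k(n)})$ from the witnesses $l_1,\dots,l_k$) and then breaks off right after ``Conversely, let us assume\dots''. Your K\H{o}nig-style argument correctly supplies that missing half. The sets $V_n$ are nonempty by hypothesis --- note that the constraint $\sum_i j_i = n$ silently uses the fact that an interleaving of finite words consumes its arguments entirely, which is the intended reading of the inductive definition --- and the parent map obtained by deleting the last letter of a fixed witness interleaving does send $V_{n+1}$ into $V_n$ while incrementing exactly one coordinate, so the tree is finitely branching with all levels nonempty; an infinite branch then yields $l_1,\dots,l_k$ satisfying the three clauses, with $\colHinf_n = \colHinf^{i(n)}_{l_{i(n)}(n)}$ read off from the chosen witness. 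Your diagnosis of why a greedy level-by-level choice fails (local witnesses need not extend one another) is exactly the right one, and the compactness device fits the paper's own toolkit, which invokes K\H{o}nig's Lemma for analogous purposes in Observation~\ref{obs:factor-pref-set} and in the proof of Theorem~\ref{thm:no-win-all-random-win}.

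The only genuine omission is that the lemma is an equivalence and you prove just one direction. The forward direction is the easy one and is what the paper actually carries out: given $\colHinf \in I(\colHinf^1,\dots,\colHinf^k)$ with witnesses $l_1,\dots,l_k$, induct on $n$ using $\colHinf_n = \colHinf^{i(n)}_{l_{i(n)}(n)}$ and $l_{i(n)}(n+1) = l_{i(n)}(n)+1$ to conclude that $\colHinf_{<n}$ is an interleaving of the prefixes $\colHinf^i_{<l_i(n)}$. You should state this (even in one line) so that your text proves the biconditional rather than half of it.
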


\subsection{More on Lemma~\ref{lem:fpi}}

Below is a short story that might help provide useful insight to some readers. Once upon a time, there was a capricious king who loved pastry. There were many bakeries in his kingdom, and each of them could bake a wide range of delicious cakes. Each shop would bake only one type of cake per day, though, and the only way to know which was to visit the shop. One morning, the king summoned his minister to bring him his favorite cake for dinner (among the cakes of the day). Unfortunately, the shops were far apart and one could only visit one of them within a day, and the king's favorite depended on the cakes of the day in an irrational way. The minister considered buying a cake from some shop and lying about the cakes of the day in the other shops. But the king knew the range of each shop, what if there were no plausible lie? Desperate, the minister sought help from a mathematician:  she enquired about the king's preferences and the range of each shop, bought a cake from one shop, lied about the cakes of the day, and the king ate happily. Lemma~\ref{lem:fpi} shows that the mathematician was bound to succeed: given the ranges of the shops and the king's preferences, there always exists a safe shop. T

Corollary~\ref{cor:fpi} below is derived from Lemma~\ref{lem:fpi} by partial Skolemization, i.e. by  pulling the $\exists i$ before the $\forall f$, and the $\exists u$ before the $\forall j$, thus automatically yielding the $\exists F$ and the $\exists G_f$, respectively. Whereas Lemma~\ref{lem:fpi} could be invoked to characterize the existence of winning strategies, Corollary~\ref{cor:fpi} will be invoked to characterize the existence of winning strategies with (finite) memory, which will be constructed via the functions $F$ and $G_f$. Note that in the statement of Corollary~\ref{cor:fpi} uses natural numbers as  von Neumann ordinals.
\begin{corollary}\label{cor:fpi}
$\forall k \in \N, \forall n_0,\dots, n_k \in (\N \setminus \{0\})^{k+1}, \exists F: (n_0 \times \dots \times n_k \to k+1) \to k+1, \forall f : n_0 \times \dots \times n_k \to k+1, \exists G_f : n_{F(f)} \to n_0 \times \dots \times n_k, \forall j \in n_{F(f)}, G_f(j)_{F(f)} = j \wedge f(u) = F(f)$.
\end{corollary}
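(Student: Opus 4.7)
The plan is to derive Corollary~\ref{cor:fpi} from Lemma~\ref{lem:fpi} by pure Skolemization: the two inner existentials over $i$ and $y$ are pulled outside their respective universal quantifiers, which automatically produces the choice functions $F$ and $G_f$. No new mathematics is required; the corollary is simply a ``curried'' restatement of the lemma, and the work consists essentially of tracking the typing correctly.

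First, I would instantiate Lemma~\ref{lem:fpi} with index set $I := k+1$ viewed as a von Neumann ordinal (so $I = \{0,\dots,k\}$) and with $X_i := n_i$, also viewed as von Neumann ordinals. The hypothesis $n_i \in \N \setminus \{0\}$ is exactly what guarantees each $X_i$ is non-empty, which the proof of Lemma~\ref{lem:fpi} tacitly needs when it picks a witness $z_i \in X_i$ for each $i$. Under this instantiation the lemma reads: for every $f : n_0 \times \cdots \times n_k \to k+1$, there exists $i \in k+1$ such that for every $j \in n_i$ there exists $y \in n_0 \times \cdots \times n_k$ with $y_i = j$ and $f(y) = i$.

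Next, I would Skolemize the outer ``$\forall f\, \exists i$''. For each $f$, the set of valid indices given by the lemma is a non-empty subset of the finite set $k+1$; choose its minimum and call this $F(f)$. This requires no choice axiom and gives $F$ the correct type $(n_0 \times \cdots \times n_k \to k+1) \to k+1$. With $F$ in hand I would Skolemize the inner ``$\forall j\, \exists y$'' analogously: fixing $f$, for each $j \in n_{F(f)}$ the set of valid witnesses $y$ is a non-empty subset of the finite product $n_0 \times \cdots \times n_k$, so I can define $G_f(j)$ to be (say) the lexicographically least such witness. The defining properties $y_{F(f)} = j$ and $f(y) = F(f)$ then translate directly into $G_f(j)_{F(f)} = j$ and $f(G_f(j)) = F(f)$, matching the corollary's conclusion (where I read the typo ``$f(u)$'' as ``$f(G_f(j))$''). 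There is no real obstacle; the only thing to keep straight is that $F(f)$ is determined before $G_f$ is, so the domain $n_{F(f)}$ of $G_f$ is well-defined.
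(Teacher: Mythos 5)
Your proposal is correct and matches the paper's own derivation: the paper also obtains Corollary~\ref{cor:fpi} from Lemma~\ref{lem:fpi} (instantiated with $I := k+1$ and $X_i := n_i$ as von Neumann ordinals) by partial Skolemization, pulling $\exists i$ in front of $\forall f$ and $\exists u$ in front of $\forall j$ to produce $F$ and $G_f$; your explicit choice of least witnesses just makes the Skolem functions concrete, and your reading of the typo $f(u)$ as $f(G_f(j))$ is the intended one. (Minor aside: the lemma's proof by contradiction gets its witnesses $z_i$ from the assumed negation, so it does not actually rely on the non-emptiness of the $n_i$, but this does not affect your argument.)
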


\subsection{Using Lemma~\ref{lem:fpi}}

Lemma~\ref{lem:fpi} is then used in Lemma~\ref{lem:win-delayed-response-implies-win} which factors out most of the proof burden of Theorems~\ref{thm:interleaving-prefix removal-eq} and \ref {thm:interleaving-win-all-eq}. Lemma~\ref{lem:win-delayed-response-implies-win} involves games that are concurrent at fewer states than in the original game, and then the proofs of Theorems~\ref{thm:interleaving-prefix removal-eq} and \ref {thm:interleaving-win-all-eq} proceed by induction on the degree of concurrency. To define these simpler games, consider $g = \langle  \AO, \AT, Q, q_0, \delta, \Col, \tr, W\rangle$.

{\bf States involving a player}
A state $q \in Q$ is said to involve \pT\/ if $1 < |(\tr,\delta)(q,\aO,\AT)|$ for some $\aO \in \AO$. Indeed, if $1 = |(\tr,\delta)(q,\aO,\AT)|$ for all $\aO \in \AO$, at state $q$ the action chosen by \pT\/ is irrelevant to the produced color and next state.

{\bf Delayed $q$-responses}
Let $q \in Q$. Informally, a delayed $q$-response is a partial delayed response only defined at state $q$. Formally, let $E_1,\dots,E_{k}$ be the elements of $\{(\tr,\delta)(q,\aO,\AT)\mid\aO \in \AO\}$. The elements of $E_1 \times \dots \times E_k$ are called the delayed $q$-responses in $g$.

{\bf $q$-derived games}
Let $q \in Q$ and let $\overline{e}$ be a delayed $q$-response $\overline{e}$ in $g$. Informally, $g(q,\overline{e})$ is a game derived from $g$ by modifying the local interaction at state $q$, such that the resulting game, called a $q$-derived game, is a mix between $g$ and some derived game of $g$. Formally, $g(q,\overline{e}) := \langle  \AO, \AT, Q, q_0, \delta', \Col, \tr', W\rangle$, where  for all $(\aO,\aT, q') \in \AO \times \AT \times (Q \setminus \{q\})$ it is set $(\tr',\delta')(q',\aO,\aT) := (\tr,\delta)(q',\aO,\aT)$ and $(\tr',\delta')(q,\aO,\aT) := \overline{e}_{(\tr,\delta)(q,\aO,\AT)}$.

Lemma~\ref{lem:2p-t-1p} may sound a bit technical, partly because it is meant to be used in two slightly different contexts. It is used once to prove the \ref{thm:interleaving-prefix removal-eq2} $\Rightarrow$\ref{thm:interleaving-prefix removal-eq1} implication of Theorem~\ref{thm:interleaving-prefix removal-eq}, where $F(q,\aO)$ (used in Lemma~\ref{lem:2p-t-1p}) is the full set $\AT$; and it is also used in the proof of Lemma~\ref{lem:win-implies-win-delayed-response}, where the $F(q,\aO)$ are singletons. Then Lemma~\ref{lem:win-implies-win-delayed-response} is used to prove the \ref{thm:interleaving-prefix removal-eq1} $\Rightarrow$\ref{thm:interleaving-prefix removal-eq2} implication of Theorem~\ref{thm:interleaving-prefix removal-eq} and the \ref{thm:interleaving-win-all-eq1} $\Rightarrow$\ref{thm:interleaving-win-all-eq2} implication of Theorem~\ref{thm:interleaving-win-all-eq}.

\begin{lemma}\label{lem:2p-t-1p}
Consider a game $g_{q_0} = \langle  \AO, \AT, Q, q_0, \delta, \Col, \tr, W\rangle$. Let $t$ be a delayed response  in $g_{q_0}$, and let $F: Q \times \AO \to \mathcal{P}(\AT)$ be such that $(\tr,\delta)(q,\aO,F(q,\aO)) = \{t_{q,(\tr,\delta)(q,\aO,\AT)}\}$ for all $(q,\aO) \in Q \times \AO$. Let $\sO$ be a \pO\/ strategy in $g_{q_0}$, let $\aHTinf \in \AT^\omega$ be such that $\aHTinf_{n} \in F(\Delta \circ \hOS(\sO,\aHTinf_{<n}),\sO(\aHTinf_{<n}))$ for all $n \in \N$, and let $\sO_t$ be the \pO\/ strategy in $g_{q_0}(t)$ such that $\sO_t(n) = \sO(\aHTinf_{<n})$ for all $n \in \N$. Then $\tr \circ \hOS(\sO,\aHTinf) = \tr_t \circ \hOS_t(\sO_t,\omega)$.
\begin{proof}
For all  $(q,\aO) \in Q \times \AO$, $(\tr,\delta)(q,\aO,F(q,\aO)) = \{t_{q,(\tr,\delta)(q,\aO,\AT)}\}$ by assumption on $F$, and $(\tr_t,\delta_t)(q,\aO) = t_{q,(\tr,\delta)(q,\aO,\AT)}$ by definition of $g_{q_0}(t)$, so
\begin{align}
\forall (q,\aO) \in Q \times \AO,\,\{(\tr_t,\delta_t)(q,\aO)\} = (\tr,\delta)(q,\aO,F(q,\aO)) \label{eq:2p-t-1p1}
\end{align}
Intuitively, Equation~(\ref{eq:2p-t-1p1}) suggests that \pT\/ can simulate $g_{q_0}(t)$ in $g_{q_0}$ by always guessing which action \pO\/ is going to choose and then by choosing his own action accordingly, i.e. via $F$. To give a formal content to this intuition, let us first prove that the sequences of the visited states are the same in the two games, if \pT\/ follows $\aHTinf$ in $g_{q_0}$. More specifically, let us prove the following by induction on $n$.
\begin{align}
\forall n \in \N,\, \Delta_t \circ \hOS_t(\sO_t,n) = \Delta \circ \hOS(\sO,\aHTinf_{< n})\label{eq:2p-t-1p2}
\end{align}
For the base case, $\Delta_t \circ \hOS_t(\sO_t,0) = \Delta_t (\e) = q_0 =\Delta(\e) = \Delta \circ \hOS(\sO,\aHTinf_{< 0})$. For the inductive case, 
\begin{align*}
\Delta_t \circ  \hOS_t(\sO_t,n+1) & = \Delta_t(\hOS_t(\sO_t,n) \cdot \sO_t(n)) \mbox{ by definition of }\hOS_t,\\
	& = \delta_t(\Delta_t \circ  \hOS_t(\sO_t,n),\sO_t(n))  \mbox{ by definition of }\Delta_t,\\
	& = \delta_t(\Delta \circ \hOS(\sO,\aHTinf_{<n}),\sO_t(n)) \mbox{ by I.H.,}\\
	& = \delta_t(\Delta \circ \hOS(\sO,\aHTinf_{<n}), \sO(\aHTinf_{<n})) \mbox{ by definition of }\sO_t,\\
	& = (\{x\} \mapsto x)\\
	&\quad  \left(\delta(\Delta \circ \hOS(\sO,\aHTinf_{<n}), \sO(\aHTinf_{<n}),F(\Delta \circ \hOS(\sO,\aHTinf_{<n}), \sO(\aHTinf_{<n})))\right) \mbox{ by Eq.~(\ref{eq:2p-t-1p1}),}\\
	& = \delta(\Delta \circ \hOS(\sO,\aHTinf_{<n}), \sO(\aHTinf_{<n}),\aHTinf_{n}) \mbox{ by definition of }\aHTinf,\\
	& = \Delta (\hOS(\sO,\aHTinf_{<n}) \cdot (\sO(\aHTinf_{<n}),\aHTinf_{n}))\mbox{ by definition of }\Delta,\\	
	& = \Delta \circ \hOS(\sO,\aHTinf_{<n+1})\mbox{ by definition of }\hOS\mbox{, thus completing the induction.}
\end{align*}

Next, let us prove that the sequences of produced colors are the same in the two games, if \pT\/ follows $\aHTinf$ in $g_{q_0}$. More specifically, let us also prove the following by induction on $n$.
\begin{align}
\forall n \in \N,\, \tr_t \circ \hOS_t(\sO_t,n) = \tr \circ \hOS(\sO,\aHTinf_{< n}) \label{eq:2p-t-1p3}
\end{align}
For the base case, $\tr_t \circ \hOS_t(\sO_t,0) = \tr_t (\e) = \e = \tr(\e) =  \tr \circ \hOS(\sO,\aHTinf_{< 0})$. For the inductive case, 
\begin{align*}
\tr_t \circ  \hOS_t(\sO_t,n+1) & = \tr_t(\hOS_t(\sO_t,n) \cdot \sO_t(n)) \mbox{ by definition of }\hOS_t,\\
	& = \tr_t \circ \hOS_t(\sO_t,n) \cdot \tr_t(\Delta_t \circ \hOS_t(\sO_t,n), \sO_t(n)) \mbox{ by definition of }\tr_t,\\
	& =  \tr \circ \hOS(\sO,\aHTinf_{<n}) \cdot \tr_t(\Delta_t \circ \hOS_t(\sO_t,n), \sO_t(n)) \mbox{ by I.H.,}\\
	& =  \tr \circ \hOS(\sO,\aHTinf_{<n}) \cdot \tr_t(\Delta \circ \hOS(\sO,\aHTinf_{<n}), \sO_t(n)) \mbox{ by Equation~(\ref{eq:2p-t-1p2}),}\\	
	& = \tr \circ \hOS(\sO,\aHTinf_{<n}) \cdot \tr_t(\Delta \circ \hOS(\sO,\aHTinf_{<n}), \sO(\aHTinf_{<n})) \mbox{ by definition of }\sO_t\\
	& = \tr \circ \hOS(\sO,\aHTinf_{<n}) \cdot (\{x\}\mapsto x)\\
	& \quad \left(\tr(\Delta \circ \hOS(\sO,\aHTinf_{<n}), \sO(\aHTinf_{<n}), F(\Delta \circ \hOS(\sO,\aHTinf_{<n}), \sO(\aHTinf_{<n})))\right) \mbox{ by Eq.~(\ref{eq:2p-t-1p1}),}\\
	& = \tr \circ \hOS(\sO,\aHTinf_{<n}) \cdot \tr(\Delta \circ \hOS(\sO,\aHTinf_{<n}), \sO(\aHTinf_{<n}),\aHTinf_{n}) \mbox{ by definition of }\aHTinf,\\
	& =  \tr ( \hOS(\sO,\aHTinf_{<n}) \cdot (\sO(\aHTinf_{<n}),\aHTinf_{n})) \mbox{ by definition of }\tr,\\
	& = \tr \circ \hOS(\sO,\aHTinf_{<n+1})\mbox{ by definition of }\hOS\mbox{, thus completing the induction.}
\end{align*}
Finally, Equation~(\ref{eq:2p-t-1p3}) may be lifted to infinite arguments: $\tr_t \circ \hOS_t(\sO_t,\omega) = \tr \circ \hOS(\sO,\aHTinf)$.
\end{proof}
\end{lemma}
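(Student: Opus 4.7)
The plan is to prove the identity by a joint induction on $n$, tracking simultaneously that the state sequences and the color sequences agree along the two plays, and then to lift to the infinite run via Observation~\ref{obs:inf-ext}.

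The first step is to reformulate the hypothesis on $F$ into a usable single-valued identity. By definition of the derived game $g_{q_0}(t)$, the pair $(\tr_t,\delta_t)(q,\aO)$ is exactly the projection $t_{q,(\tr,\delta)(q,\aO,\AT)}$ of $t$ on the relevant component. The hypothesis on $F$ forces the set $(\tr,\delta)(q,\aO,F(q,\aO))$ to be the singleton containing that same projection, so
\[
\{(\tr_t,\delta_t)(q,\aO)\} \;=\; (\tr,\delta)(q,\aO,F(q,\aO)) \qquad \text{for all } (q,\aO) \in Q \times \AO.
\]
This is the bridge between the two games: any \pT-action drawn from $F(q,\aO)$ induces, in $g_{q_0}$, exactly the transition the derived game takes in response to $\aO$.

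Next, I would run a simultaneous induction on $n$ proving the two statements $\Delta_t \circ \hOS_t(\sO_t,n) = \Delta \circ \hOS(\sO,\aHTinf_{<n})$ and $\tr_t \circ \hOS_t(\sO_t,n) = \tr \circ \hOS(\sO,\aHTinf_{<n})$. The base case is immediate as both sides reduce to $\e$ and the state $q_0$. For the inductive step one unfolds the recursive definitions of $\hOS_t$, $\Delta_t$, $\tr_t$ on the left; uses the definition $\sO_t(n) = \sO(\aHTinf_{<n})$; applies the state-agreement part of the inductive hypothesis to replace $\Delta_t \circ \hOS_t(\sO_t,n)$ by $\Delta \circ \hOS(\sO,\aHTinf_{<n})$; invokes the singleton identity above to pass from $(\tr_t,\delta_t)$ to $(\tr,\delta)$; and finally uses the assumption $\aHTinf_n \in F(\Delta\circ \hOS(\sO,\aHTinf_{<n}),\sO(\aHTinf_{<n}))$ to choose a witness that lets us fold the result back into $\tr \circ \hOS(\sO,\aHTinf_{<n+1})$ and $\Delta \circ \hOS(\sO,\aHTinf_{<n+1})$.

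Having matched every finite prefix, Observation~\ref{obs:inf-ext} applied to the prefix-preserving map $n \mapsto \tr_t \circ \hOS_t(\sO_t,n)$ (and its two-player counterpart) gives the desired equality on the infinite run. The main obstacle is purely bookkeeping: the color identity cannot be proven in isolation because $\tr$ depends on the current state via $\Delta$, so the two tracks of the induction must be carried together and carefully interleaved with the single-valuedness identity at each step; once that is organized, the rest is just unfolding definitions.
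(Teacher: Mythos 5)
Your proposal is correct and follows essentially the same route as the paper: establish the singleton identity $\{(\tr_t,\delta_t)(q,\aO)\} = (\tr,\delta)(q,\aO,F(q,\aO))$, prove by induction that the visited states and produced colors agree on every finite prefix, and lift to the infinite run via Observation~\ref{obs:inf-ext}. The only cosmetic difference is that you carry the state and color identities in one joint induction, whereas the paper proves the state identity first and then reuses it inside the separate color induction; both organizations are sound.
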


\begin{lemma}\label{lem:win-implies-win-delayed-response}
Consider a game $g_{q_0} = \langle  \AO, \AT, Q, q_0, \delta, \Col, \tr, W\rangle$. If \pO\/ wins $g_{q_0}$ (with finite memory),  \pO\/ wins $g_{q_0}(t)$ (with finite memory) for all delayed responses $t$.
\begin{proof}
Let $q_0 \in Q$ and $t$ be a delayed response in $g_{q_0}$. By definition of the delayed responses, $t_{q,(\tr,\delta)(q,\aO,\AT)} \in (\tr,\delta)(q,\aO,\AT)$ for all $(q,\aO) \in Q \times \AO$, so by the axiom of choice there exists a function $f: Q \times \AO \to \AT$ such that $(\tr,\delta)(q,\aO,f(q,\aO)) = t_{q,(\tr,\delta)(q,\aO,\AT)}$ for all $(q,\aO) \in Q \times \AO$.

Let $\sO$ be a \pO\/ winning strategy in $g_{q_0}$, and let $\aHTinf \in \AT^\omega$ be defined by $\aHTinf_{n} := f(\Delta \circ \hOS(\sO,\aHTinf_{< n}),\sO(\aHTinf_{< n}))$. Let a \pO\/ strategy $\sO_t$ in $g_{q_0}(t)$ be defined by $\sO_t(n) := \sO(\aHTinf_{< n})$ for all $n \in \N$. 

Invoking Lemma~\ref{lem:2p-t-1p} with $F(q,\aO) := \{f(q,\aO)\}$ yields $\tr_t \circ \hOS_t(\sO_t,\omega) = \tr \circ \hOS(\sO,\aHTinf)$. Since $\sO$ is winning in $g_{q_0}$, $\tr \circ \hOS(\sO,\aHTinf) \in W$, so $\tr_t \circ \hOS_t(\sO_t,\omega) \in W$, and $\sO_t$ is winning in $g_{q_0}(t)$.
\end{proof}
\end{lemma}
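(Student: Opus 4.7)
The plan is to simulate the derived one-player game $g_{q_0}(t)$ inside the original concurrent game $g_{q_0}$ by picking, in the background, a \pT\/ run that deterministically forces the outcomes prescribed by $t$. The key observation is that by definition of a delayed response, $t_{q,(\tr,\delta)(q,\aO,\AT)}$ is an element of $(\tr,\delta)(q,\aO,\AT)$, which means that for every \pO\/ action $\aO$ at every state $q$, there is at least one \pT\/ action that realizes the very pair $t$ would assign. Thus the one-player game $g_{q_0}(t)$ can be viewed as one particular \pT\/-behavior in $g_{q_0}$, and any strategy of \pO\/ that beats all \pT\/-behaviors must, in particular, beat this one.

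Concretely, I would first apply the axiom of choice to the relation above and extract a function $f : Q \times \AO \to \AT$ satisfying $(\tr,\delta)(q,\aO,f(q,\aO)) = t_{q,(\tr,\delta)(q,\aO,\AT)}$ for every $(q,\aO)$. Next, given a \pO\/ winning strategy $\sO$ in $g_{q_0}$, I would build the shadow \pT\/ run $\aHTinf \in \AT^\omega$ inductively by $\aHTinf_n := f(\Delta \circ \hOS(\sO,\aHTinf_{<n}), \sO(\aHTinf_{<n}))$, and then define a one-player strategy $\sO_t$ in $g_{q_0}(t)$ by $\sO_t(n) := \sO(\aHTinf_{<n})$, i.e.\ replay $\sO$ against the shadow run.

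The main technical step is to show that the two plays produce the same trace, i.e.\ $\tr_t \circ \hOS_t(\sO_t,\omega) = \tr \circ \hOS(\sO,\aHTinf)$; this is exactly what Lemma~\ref{lem:2p-t-1p} provides when instantiated with $F(q,\aO) := \{f(q,\aO)\}$, whose hypothesis reduces to the defining property of $f$. Since the right-hand side lies in $W$ by the choice of $\sO$, so does the left-hand side, and $\sO_t$ is winning. For the finite-memory clause, observe that the $n$-th shadow action $\aHTinf_n$ is a deterministic function of the current game state $\Delta\circ \hOS(\sO,\aHTinf_{<n})$ and of $\sO$'s memory state; thus any finite-memory implementation of $\sO$ as $(M,m_0,\sigma,\mu)$ yields a finite-memory implementation of $\sO_t$ on the same memory set $M$, since the \pT\/-input that $\mu$ needs can be computed from the pair (game state, memory state) via $f$. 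The only potential obstacle, namely the simultaneous induction proving state- and trace-agreement between the two games, is exactly what is factored out into Lemma~\ref{lem:2p-t-1p}, so invoking it cleanly dispatches that difficulty.
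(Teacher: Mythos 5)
Your proposal is correct and follows essentially the same route as the paper: choose $f$ via the axiom of choice, build the shadow \pT\/ run $\aHTinf$, replay $\sO$ against it to define $\sO_t$, and invoke Lemma~\ref{lem:2p-t-1p} with $F(q,\aO) := \{f(q,\aO)\}$ to transfer the trace and hence the win. Your additional remark on how a finite-memory implementation of $\sO$ yields one for $\sO_t$ is a welcome explicit treatment of the parenthetical finite-memory claim, which the paper leaves implicit.
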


\begin{lemma}\label{lem:win-delayed-response-implies-win}
Consider games $g_{q} = \langle  \AO, \AT, Q, q, \delta, \Col, \tr, W\rangle$ parametrized by $q \in Q$, where $Q$ and $\Col$ are finite and $W$ is interleaving-closed. Let $q_0,q_1 \in Q$, let $\overline{e}^1$ be a delayed $q_1$-response, and let $m \in \N$. Let us assume that there exists a \pO\/ winning strategy in $g_{q_0}(q_1,\overline{e}^1)$ using memory size $m$, and that for all delayed $q_1$-responses $\overline{e} \neq \overline{e}^1$ there exists a \pO\/ winning strategy in $g_{q_1}(q_1,\overline{e})$ using memory size $m$. Then there exists a  \pO\/ winning strategy in $g_{q_0}$ using memory size $dm + \log_2 d$, where $d$ is the number of delayed $q_1$-responses. 
\begin{proof}[Proof of Lemma~\ref{lem:win-delayed-response-implies-win}]
{\bf Sketch}
Consider the game $g'$ played almost like $g_{q_0}$, but the rounds played at the state $q_1$ are no longer concurrent, they are split into two subrounds: in the first subround \pT\/ chooses some delayed $q_1$-response $\overline{e}$; in the second subround \pO\/ chooses some projection $\overline{e}_i$ thereof, thus producing a color and inducing the next state. \pO\/ can win this game by pretending that she is playing some $q$-derived games (the ones from the statement of Lemma~\ref{lem:win-delayed-response-implies-win}) in an interleaved way: \pT\/ is choosing the interleaving order by switching (or not) games when at state $q_1$, and \pO\/ resumes the play of the relevant game where it was last interrupted. Since $W$ is interleaving-closed and the run in $g'$ is obtained by interleaving runs from $q$-derived games all won by \pO\/, the induced run is also in $W$.

For \pO\/, playing $g_{q_0}$ amounts to playing $g'$ with a significant handicap: she may no longer know what \pT\/ is choosing at state $q_1$, i.e. if the current state is $q_1$, she may not know which $q$-derived game she is supposed to play. When at $q_1$, the best she can hope for is to determine some $\aO \in \AO$ such that, regardless of which $\aT \in \AT$ \pT\/ may choose, $(\tr,\delta)(q,\aO,\aT)$ are the very color and next state that are expected in order to proceed with the play of some underlying $q$-derived game. The existence of a suitable $\aO \in \AO$ follows from Lemma~\ref{lem:fpi}/Corollary~\ref{cor:fpi}.

{\bf Details}
Let $E_1,\dots,E_k$ be the elements of the set $\{(\tr,\delta)(q_1,\aO,\AT)\,\mid\, \aO \in \AO\}$, so $\overline{e}^1 \in E_1 \times \dots \times E_k$. By assumption, let $\sO_{\overline{e}^1}$ be a \pO\/ winning strategy in $g_{q_0}(q_1,\overline{e}^1)$, and for all $\overline{e} \in (E_1 \times \dots \times E_k) \setminus \{\overline{e}^1\}$, let $\sO_{\overline{e}}$ be a \pO\/ winning strategy in $g_{q_1}(q_1,\overline{e})$. For all $i \leq k$ let $\aO_i$ be such that $(\tr,\delta)(q_1,\aO_i,\AT) = E_i$. So, for all delayed $q_1$-response $\overline{e}$ and all $\aO \in \AO$, there exists $i \leq k$ such that $(\tr',\delta')(q_1,\aO_i,\AT) = \overline{e}_i= (\tr',\delta')(q_1,\aO,\AT)$ (where $g_{q_0}(q_1,\overline{e}) = \langle  \AO, \AT, Q, q, \delta’, \Col, \tr’, W\rangle$). Informally, whichever simpler game $g_{q_0}(q_1,\overline{e})$ is being played, for any action  $\aO \in \AO$ at state $q_1$, there is some $\aO_i$ that has the same effect. So wlog let us assume that the $\sO_{\overline{e}}$ only prescribe actions in $\{\aO_1,\dots,\aO_k\}$ when at state $q_1$.

Let $M$ be a set with cardinality $|M| = 2^r$, let and $m_0 \in M$. By assumption, for all $\overline{e} \in E_1 \times \dots \times E_k$ the strategy $\sO_{\overline{e}}$ can be represented as follows: $\sigma_{\overline{e}}: Q \times M \to \AO$ and $\mu_{\overline{e}}: Q \times M \times \AT \to M$ such that $\sO_{\overline{e}}(\aHT) = \sigma_{\overline{e}}(q_{\overline{e}}(\aHT),m_{\overline{e}}(\aHT))$, where $m_{\overline{e}}$ and $q_{\overline{e}}$ are defined by mutual induction below. 
\begin{itemize}
\item $m_{\overline{e}}(\e) := m_0$ and $q_{\overline{e}^1}(\e) := q_0$ and  $q_{\overline{e}}(\e) := q_1$ for all $\overline{e} \neq \overline{e}^1$

\item $m_{\overline{e}}(\aHT\aT) := \mu_{\overline{e}}(q_{\overline{e}}(\aHT),m_{\overline{e}}(\aHT),\aT)$ and $q_{\overline{e}}(\aHT\aT) := \delta(q_{\overline{e}}(\aHT),\sigma_{\overline{e}}(q_{\overline{e}}(\aHT),m_{\overline{e}}(\aHT)),\aT)$
\end{itemize}

By Corollary~\ref{cor:fpi} $\exists F: (E_1 \times \dots \times E_k \to \{1,\dots,k\}) \to \{1,\dots,k\}, \forall f : E_1 \times \dots \times E_k \to \{1,\dots, k\} , \exists G_f : E_{F(f)} \to E_1 \times \dots \times E_k, \forall e \in E_{F(f)}, G_f(e)_{F(f)} = e \wedge f \circ G_f(e) = F(f)$.

A \pO\/ strategy is built as follows.
\begin{itemize}
\item ${\bf M} := \prod_iE_i \cup \{G\} \to M \cup \prod_iE_i$ (or $M^{\prod_iE_i} \times \prod_iE_i$)
\item ${\bf m_0} \in {\bf M}$ is defined by ${\bf m_0}(\overline{e}) := m_0$ for all $\overline{e} \in E_1 \times \dots \times E_k$ and ${\bf m_0}(G) := \overline{e}^1$

\item 
\begin{itemize}
\item $\iota:\{\aO_1,\dots,\aO_k\} \to \{1,\dots,k\}$ is defined by $\iota(\aO_i) := i$ for all $i \in \{1,\dots,k\}$.

\item $f_m: E_1 \times \dots \times E_k \to \{1,\dots, k\}$ is defined by $f_m(\overline{e})  :=\iota \circ \sigma_{\overline{e}} (q_1,m(\overline{e}))$ for all $m \in {\bf M}$.

\item $\overline{d}_{m,\aT} := G_{f_m}\circ (\tr,\delta)(q_1,\aO_{F(f_m)}, \aT) \in E_{F(f_m)}$ for all $(m,\aT) \in {\bf M} \times \AT$.
\end{itemize}

\item $\boldsymbol{\sigma}: Q \times {\bf M} \to \AO$ is defined by
$\boldsymbol{\sigma}(q,m) :=
\begin{cases}
\aO_{F(f_m)} \mbox{ if }q = q_1\\
\sigma_{m(G)}(q,m \circ m(G)) \mbox{ otherwise}
\end{cases}
$

\item $\boldsymbol{\mu}: Q \times {\bf M} \times \AT \to {\bf M}$ is defined as follows. Let $(q,m,\aT,\overline{e}) \in Q \times {\bf M} \times \AT \times ( E_1 \times \dots \times E_k)$. 

$
\boldsymbol{\mu}(q,m,\aT)(G) :=
\begin{cases}
\overline{d}_{m,\aT} \mbox{ if }q = q_1,\\
m(G) \mbox{ otherwise.}
\end{cases}
$

$
\boldsymbol{\mu}(q,m,\aT)(\overline{e}) := 
\begin{cases}
\mu_{\overline{e}}(q,m(\overline{e}),\aT) \mbox{ if }\boldsymbol{\mu}(q,m,\aT)(G) = \overline{e},\\
m(\overline{e}) \mbox{ otherwise.}
\end{cases}
$

\item 
\begin{itemize}
\item ${\bf m}(\e) := {\bf m_0}$ and ${\bf q}(\e) := q_0$
\item ${\bf m}(\aHT \aT) := \boldsymbol{\mu}({\bf q}(\aHT),{\bf m}(\aHT),\aT)$ and ${\bf q}(\aHT \aT) := \delta({\bf q}(\aHT),\boldsymbol{\sigma}({\bf q}(\aHT),{\bf m}(\aHT)),\aT)$
\end{itemize}
\end{itemize}

Note that $\log_2 |{\bf M}| = \log_2(|M|^dd) = d \log_22^m + \log_2d = dm + \log_2 d$, where $d = \prod_{i=1}^k|E_i|$.

Claim~\ref{claim1} below says that if the current “simpler game” has just changed, the previous state must have been $q_1$. Claim~\ref{claim2} below says that if the previous state was $q_1$, the current “simpler game” can be expressed using $\overline{d}$. Claim~\ref{claim2} below says that if the current state was $q_1$, the action that \pO\/ has just chosen in the compound game is the same as the action that \pO\/ would have chosen in the current “simpler game” under similar memory environment. By inspecting the above definitions of ${\bf m}$ and $\boldsymbol{\mu}$, one can show Claims~\ref{claim1} and \ref{claim2} below for all $\aHT \in \AT^*$ and $\aT \in \AT$.
\begin{align}
{\bf m}(\aHT \aT)(G) \neq{\bf m}(\aHT)(G) &\,\Rightarrow\, {\bf q}(\aHT) = q_1 \label{claim1}\\
{\bf q}(\aHT) = q_1 &\,\Rightarrow\,{\bf m}(\aHT \aT)(G) = \overline{d}_{{\bf m}(\aHT), \aT}\label{claim2}\\
{\bf q}(\aHT) = q_1 &\,\Rightarrow\, \boldsymbol{\sigma}(q_1,{\bf m}(\aHT)) = \sigma_{{\bf m}(\aHT \aT)(G)}(q_1,{\bf m}(\aHT)({\bf m}(\aHT \aT)(G)))\label{claim3}
\end{align}
Claim~\ref{claim3} requires a proof: for all $\aHT \in \AT^*$ and $\aT \in \AT$
\begin{align*}
\iota \circ \boldsymbol{\sigma}(q_1,{\bf m}(\aHT)) &= F(f_{{\bf m}(\aHT)}) \mbox{ by definition of }\boldsymbol{\sigma},\\
	& = f_{{\bf m}(\aHT)} \circ G_{f_{{\bf m}(\aHT)}} \circ (\tr,\delta)(q_1,\aO_{F(f_{{\bf m}(\aHT)})}, \aT)\mbox{ by definition of }G_{f_{{\bf m}(\aHT)}},\\
	& = f_{{\bf m}(\aHT)}\circ \boldsymbol{\mu}(q_1,{\bf m}(\aHT),\aT)(G) \mbox{ by definition of }\boldsymbol{\mu}\\
	& = f_{{\bf m}(\aHT)} \circ \boldsymbol{\mu}({\bf q}(\aHT),{\bf m}(\aHT),\aT)(G) \mbox{ since } {\bf q}(\aHT) = q_1 \mbox{ by assumption,}\\
	& = f_{{\bf m}(\aHT)} \circ {\bf m}(\aHT \aT)(G) \mbox{ by definition of }{\bf m},\\
	& = f_{{\bf m}(\aHT)} (\overline{d}) \mbox{ by Claim~\ref{claim2}, and setting }\overline{d} := \overline{d}_{{\bf m}(\aHT), \aT},\\	
	& = \iota \circ  \sigma_{\overline{d}}(q_1,{\bf m}(\aHT)(\overline{d}))\mbox{ by definition of }f_{{\bf m}(\aHT)}.
\end{align*}
Therefore $\boldsymbol{\sigma}({\bf q}(\aHT),{\bf m}(\aHT)) = \sigma_{\overline{d}}({\bf q}(\aHT),{\bf m}(\aHT)(\overline{d}))$ since $\iota$ is injective and ${\bf q}(\aHT) = q_1$ by assumption, and Claim~\ref{claim3} is proved.

By definition, proving that the \pO\/ strategy $\boldsymbol{\sigma} \circ ({\bf q},{\bf m})$ is winning amounts to proving that $\tr \circ \hOS(\boldsymbol{\sigma} \circ({\bf q},{\bf m}) ,\aHTinf) \in W$ for all \pT\/ runs $\aHTinf \in \AT^\omega$. This will be done by proving that all $\aHT \in \AT^*$ can be decomposed into $(\aHT^{\overline{e}})_{\overline{e} \in E_1 \times \dots \times E_k}$ such that $\tr \circ \hOS(\boldsymbol{\sigma} \circ({\bf q},{\bf m}) ,\aHT)$ can be obtained by interleaving the $\tr \circ \hOS(\sigma_{\overline{e}} circ(q_{\overline{e}},m_{\overline{e}}) ,\aHT^{\overline{e}})$. The decomposition of the \pT\/ history is done by induction: for all $\aHT \in \AT^*$, all $\aT \in \AT$, and all $\overline{e} \in (E_1 \times \dots \times E_k)$
$$\begin{array}{l}
\e^{\overline{e}} := \e\\
(\aHT \aT)^{{\bf m}(\aHT \aT)(G)} := \aHT^{{\bf m}(\aHT \aT)(G)} \aT\\
(\aHT \aT)^{\overline{e}} :=\aHT^{\overline{e}} \mbox{ if }\overline{e} \neq {\bf m}(\aHT \aT)(G)
\end{array}$$
 
Claim~\ref{claim4} says that the compound memory stores the memory contents related to each simpler game.
Claim~\ref{claim5} says that the current state in the compound game is equal to the current state in the current simpler game.
Claim~\ref{claim6} says that the current state in the simpler games on stand-by is $q_1$.
$\forall \aHT \in \AT^*,\forall \overline{e} \in  E_1 \times \dots \times E_k$
\begin{align}
{\bf m}(\aHT)( \overline{e}) = m_{\overline{e}}(\aHT^{\overline{e}})\label{claim4}\\
q_{{\bf m}(\aHT)(G)}(\aHT^{{\bf m}(\aHT)(G)}) = {\bf q}(\aHT)\label{claim5}\\
 \overline{e} \neq {\bf m}(\aHT)(G)\,\Rightarrow\,q_{\overline{e}}(\aHT^{\overline{e}}) = q_1\label{claim6}
\end{align} 
Claims~\ref{claim4},\ref{claim5},\ref{claim6} are proved by mutual induction on $\aHT$.  

Base case, $\aHT = \e$.
$$\begin{array}{l}
{\bf m}(\e)( \overline{e}) =  {\bf m_0}( \overline{e}) = m_0 = m_{\overline{e}}(\e) = m_{\overline{e}}(\e^{\overline{e}}).\\
q_{{\bf m}(\e)(G)}(\e^{{\bf m}(\e)(G)}) = q_{{\bf m_0}(G)}(\e) = q_{\overline{e}^1}(\e) = q_0 ={\bf q}(\e).\\
\mbox{If }\overline{e} \neq {\bf m}(\e)(G)\mbox{ then }\overline{e} \neq {\bf m_0}(G) =  \overline{e}^1\mbox{ and }q_{\overline{e}}(\e^{\overline{e}}) =  q_{\overline{e}}(\e)= q_1.
\end{array}$$


For the inductive case let  $\aHT \in \AT^*$, $\aT \in \AT$, and $\overline{e} \in E_1 \times \dots \times E_k$ . To prove Claim~\ref{claim4}, let us make a case disjunction. First case, $\overline{e} \neq \boldsymbol{\mu}({\bf q}(\aHT),{\bf m}(\aHT),\aT)(G)$, so $\overline{e} \neq {\bf m}(\aHT \aT)(G)$ by definition of ${\bf m}$, so $\aHT^{\overline{e}} = (\aHT \aT)^{\overline{e}}$. Then
\begin{align*}
{\bf m}(\aHT \aT)(\overline{e}) & = {\bf m}(\aHT)(\overline{e}) \mbox{ by definition of }\boldsymbol{\mu},\\
	& = m_{\overline{e}}(\aHT^{\overline{e}}) \mbox{ by I.H.,}\\
	& = m_{\overline{e}}((\aHT \aT)^{\overline{e}}) \mbox{ since } \aHT^{\overline{e}} = (\aHT \aT)^{\overline{e}} \mbox{ as mentioned above.}
\end{align*}
Second case, $\overline{e} = \boldsymbol{\mu}({\bf q}(\aHT),{\bf m}(\aHT),\aT)(G)$, so $\overline{e} = {\bf m}(\aHT \aT)(G)$ and $\aHT^{\overline{e}}b = (\aHT \aT)^{\overline{e}}$. So
\begin{align*}
{\bf m}(\aHT \aT)(\overline{e}) & =  \mu_{\overline{e}}({\bf q}(\aHT),{\bf m}(\aHT)(\overline{e}),\aT)\mbox{ by definition of }\boldsymbol{\mu},\\
	& =  \mu_{\overline{e}}({\bf q}(\aHT),m_{\overline{e}}(\aHT^{\overline{e}}),\aT)\mbox{  since }{\bf m}(\aHT)(\overline{e}) = m_{\overline{e}}(\aHT^{\overline{e}}) \mbox{ by I.H.}
\end{align*}
Let us make a further case disjunction. First sub-case, $\overline{e} = {\bf m}(\aHT)(G)$. Then
\begin{align*}
{\bf m}(\aHT \aT)(\overline{e}) & =  \mu_{\overline{e}}(q_{\overline{e}}(\aHT^{\overline{e}}),m_{\overline{e}}(\aHT^{\overline{e}}),\aT)\mbox{  since }{\bf q}(\aHT) = q_{\overline{e}}(\aHT^{\overline{e}}) \mbox{ by I.H. (and }\overline{e} = {\bf m}(\aHT)(G) \mbox{),}\\
	& = m_{\overline{e}}(\aHT^{\overline{e}} \aT) \mbox{ by definition of }m_{\overline{e}},\\
	& = m_{\overline{e}}((\aHT \aT)^{\overline{e}}) \mbox{ since } \aHT^{\overline{e}} \aT= (\aHT \aT)^{\overline{e}} \mbox{ as mentioned above.}
\end{align*}
Second sub-case, $\overline{e} \neq {\bf m}(\aHT)(G)$, so ${\bf q}(\aHT) = q_1$ by Claim~\ref{claim1}, and $\overline{e} = \overline{d}$ by Claim~\ref{claim2}. Then
\begin{align*}
{\bf m}(\aHT \aT)(\overline{e}) & =  \mu_{\overline{e}}(q_1,m_{\overline{e}}(\aHT^{\overline{e}}),\aT)\mbox{  since }{\bf q}(\aHT) = q_1 \mbox{ as mentioned above,}\\
	& = \mu_{\overline{e}}(q_{\overline{e}}(\aHT^{\overline{e}}),m_{\overline{e}}(\aHT^{\overline{e}}),\aT)\mbox{  since } q_{\overline{e}}(\aHT^{\overline{e}})= q_1\mbox{ by I.H. (and }\overline{e} \neq {\bf m}(\aHT)(G) \mbox{),}\\
	& = m_{\overline{e}}(\aHT^{\overline{e}} \aT) \mbox{ by definition of }m_{\overline{e}},\\
	& = m_{\overline{e}}((\aHT \aT)^{\overline{e}}) \mbox{ since } \aHT^{\overline{e}} \aT= (\aHT \aT)^{\overline{e}} \mbox{ as mentioned above.}
\end{align*}

{\bf Claim~\ref{claim5}.}
Let $\overline{e} := {\bf m}(\aHT \aT)(G)$. To prove that $q_{\overline{e}}((\aHT \aT)^{\overline{e}}) = {\bf q}(\aHT \aT)$, let us also make a case disjunction. First case, ${\bf q}(\aHT) \neq q_1$, so ${\bf m}(\aHT \aT)(G) = {\bf m}(\aHT)(G)$ by Claim~\ref{claim1}.
\begin{align*}
q_{\overline{e}}((\aHT \aT)^{\overline{e}}) & = q_{\overline{e}}(\aHT^{\overline{e}}\aT) \mbox{ since }\overline{e} = {\bf m}(\aHT \aT)(G) \mbox{ implies }(\aHT \aT)^{\overline{e}} = \aHT^{\overline{e}}\aT,\\
	& = \delta(q_{\overline{e}}(\aHT^{\overline{e}}), \sigma_{\overline{e}}(q_{\overline{e}} (\aHT^{\overline{e}}), m_{\overline{e}}(\aHT^{\overline{e}})), \aT) \mbox{ by definition of }q_{\overline{e}},\\
		& = \delta({\bf q}(\aHT), \sigma_{\overline{e}}({\bf q}(\aHT), m_{\overline{e}}(\aHT^{\overline{e}})), \aT) \mbox{ since }q_{\overline{e}}(\aHT^{\overline{e}}) = {\bf q}(\aHT) \mbox{ by I.H.,}\\
		& =  \delta({\bf q}(\aHT), \sigma_{\overline{e}}({\bf q}(\aHT), {\bf m}(\aHT)(\overline{e}), \aT) \mbox{ since }m_{\overline{e}}(\aHT^{\overline{e}}) = {\bf m}(\aHT)(\overline{e}) \mbox{ by I.H.,}\\
		& =  \delta({\bf q}(\aHT), \sigma_{{\bf m}(\aHT)(G)}({\bf q}(\aHT), {\bf m}(\aHT)({\bf m}(\aHT)(G)), \aT) \mbox{ since }\overline{e} = {\bf m}(\aHT)(G) \mbox{ as argued above,}\\
		& =  \delta({\bf q}(\aHT),\boldsymbol{\sigma}({\bf q}(\aHT),{\bf m}(\aHT)) ,\aT)  \mbox{ by definition of }\boldsymbol{\sigma}\mbox{ since }{\bf q}(\aHT) \neq q_1 \mbox{ by assumption,}\\
		& = {\bf q}(\aHT \aT) \mbox{ by definition of }{\bf q}.
\end{align*}
Second case, ${\bf q}(\aHT) = q_1$, so
\begin{align*}
{\bf q}(\aHT \aT) & = \delta({\bf q}(\aHT),\boldsymbol{\sigma}({\bf q}(\aHT),{\bf m}(\aHT)), \aT)\mbox{ by definition of }{\bf q},\\
	& = \delta({\bf q}(\aHT), \sigma_{\overline{e}}({\bf q}(\aHT),{\bf m}(\aHT)(\overline{e})), \aT)\mbox{ by Claim~\ref{claim3},}\\
	& = \delta({\bf q}(\aHT), \sigma_{\overline{e}}({\bf q}(\aHT),m_{\overline{e}}(\aHT^{\overline{e}})), \aT) \mbox{ since }{\bf m}(\aHT)(\overline{e}) =m_{\overline{e}}(\aHT^{\overline{e}}) \mbox{ by I.H,}\\
	& = \delta(q_{\overline{e}}(\aHT^{\overline{e}}), \sigma_{\overline{e}}(q_{\overline{e}}(\aHT^{\overline{e}}),m_{\overline{e}}(\aHT^{\overline{e}})), \aT) \mbox{ by I.H. (whether or not }\overline{e} = {\bf m}(\aHT)(G)\mbox{),}\\
	& = q_{\overline{e}}(\aHT^{\overline{e}} \aT) \mbox{ by definition of }q_{\overline{e}},\\
	& = q_{\overline{e}}((\aHT \aT)^{\overline{e}}) \mbox{ since } \overline{e} = {\bf m}(\aHT \aT)(G).
\end{align*}

{\bf Claim~\ref{claim6}.}
To prove that $\overline{e} \neq {\bf m}(\aHT \aT)(G)\,\Rightarrow\,q_{\overline{e}}((\aHT \aT)^{\overline{e}}) = q_1$, let $\overline{e} \neq {\bf m}(\aHT \aT)(G)$. If $\overline{e} \neq {\bf m}(\aHT)(G)$ then $q_{\overline{e}}(\aHT^{\overline{e}}) = q_1$ by I.H.; if $\overline{e} = {\bf m}(\aHT)(G)$ then ${\bf m}(\aHT)(G) \neq {\bf m}(\aHT \aT)(G)$, and ${\bf q}(\aHT) = q_1$ by Claim~\ref{claim1}, so $q_{\overline{e}}(\aHT^{\overline{e}}) = q_1$ by I.H. Now, $q_{\overline{e}}((\aHT \aT)^{\overline{e}}) = q_{\overline{e}}(\aHT ^{\overline{e}})$ since $\overline{e} \neq {\bf m}(\aHT \aT)(G)$, so $q_{\overline{e}}((\aHT \aT)^{\overline{e}}) = q_1$.

Claims~\ref{claim4},\ref{claim5},\ref{claim6} are thus proved. Let us now prove that for all $\aHT \in \AT^*$ and all $\aT \in \AT$
\begin{align}
\boldsymbol{\sigma}({\bf q}(\aHT),{\bf m}(\aHT)) = \sigma_{\overline{e}}(q_{\overline{e}}(\aHT^{\overline{e}}),m_{\overline{e}}(\aHT^{\overline{e}})) \mbox{ where }\overline{e} := {\bf m}(\aHT \aT)(G)\label{claim7}
\end{align}
Let us make a case disjunction. First case, ${\bf q}(\aHT) = q_1$, so
\begin{align*}
\boldsymbol{\sigma}({\bf q}(\aHT),{\bf m}(\aHT)) & =\boldsymbol{\sigma}(q_1,{\bf m}(\aHT)),\\
	& = \sigma_{\overline{e}}(q_1, {\bf m}(\aHT)(\overline{e})) \mbox{ by Claim~\ref{claim3},}\\
	& = \sigma_{\overline{e}}(q_1, m_{\overline{e}}(\aHT^{\overline{e}})) \mbox{ by Claim~\ref{claim4},}\\
	& = \sigma_{\overline{e}}(q_{\overline{e}}(\aHT^{\overline{e}}), m_{\overline{e}}(\aHT^{\overline{e}}))\mbox{ by Claims~\ref{claim5} and \ref{claim6} and since }{\bf q}(\aHT) = q_1.
\end{align*}
Second case, ${\bf q}(\aHT) \neq q_1$, so
\begin{align*}
\boldsymbol{\sigma}({\bf q}(\aHT),{\bf m}(\aHT)) & =\sigma_{{\bf m}(\aHT)(G)}({\bf q}(\aHT),{\bf m}(\aHT) \circ {\bf m}(\aHT)(G)) \mbox{ by definition of }\boldsymbol{\sigma},\\
	& = \sigma_{\overline{e}}({\bf q}(\aHT),{\bf m}(\aHT)(\overline{e})) \mbox{ since }\overline{e} = {\bf m}(\aHT)(G)\mbox{ by Claim~\ref{claim1},}\\
	& = \sigma_{\overline{e}}(q_{\overline{e}}(\aHT^{\overline{e}}),{\bf m}(\aHT)(\overline{e})) \mbox{ by Claim~\ref{claim5} since }\overline{e} = {\bf m}(\aHT)(G),\\
	& = \sigma_{\overline{e}}(q_{\overline{e}}(\aHT^{\overline{e}}),m_{\overline{e}}(\aHT^{\overline{e}})) \mbox{ by Claim~\ref{claim4}.}
\end{align*}
This proves Claim~\ref{claim7}. Let us prove the last intermediate claim that for all $\aHT \in \AT^*$, $\tr \circ \hOS(\boldsymbol{\sigma} \circ ({\bf q},{\bf m}),\aHT)$ can be obtained by interleaving the  $\tr \circ \hOS(\sigma_{\overline{e}} \circ (q_{\overline{e}},m_{\overline{e}}),\aHT^{\overline{e}})$. Let us proceed  by induction on $\aHT$. For the base case $\e$ can indeed be obtained by interleaving several $\e$. For the inductive case, let $\aHT \in \AT^*$ and $\aT \in \AT$.
\begin{align*}
\tr \circ \hOS(\boldsymbol{\sigma} \circ ({\bf q},{\bf m}),\aHT \aT) & = \tr ( \hOS(\boldsymbol{\sigma} \circ ({\bf q},{\bf m}),\aHT) \cdot (\boldsymbol{\sigma} \circ ({\bf q},{\bf m})(\aHT), \aT)) \mbox{ by definition of }\hOS,\\
	& =  \tr \circ \hOS(\boldsymbol{\sigma} \circ ({\bf q},{\bf m}),\aHT) \cdot \tr(\boldsymbol{\sigma} \circ ({\bf q},{\bf m})(\aHT), \aT)\mbox{ by definition of }\tr,\\
	& =  \tr \circ \hOS(\boldsymbol{\sigma} \circ ({\bf q},{\bf m}),\aHT) \cdot \tr(\sigma_{\overline{e}^0}(q_{\overline{e}^0}(\aHT^{\overline{e}^0}),m_{\overline{e}^0}(\aHT^{\overline{e}^0})), \aT),\\
	& \qquad\mbox{ where }\overline{e}^0 := {\bf m}(\aHT \aT)(G) \mbox{ by Claim~\ref{claim7}.}
\end{align*}
By I.H., $\tr \circ \hOS(\boldsymbol{\sigma} \circ ({\bf q},{\bf m}),\aHT)$ can be obtained by interleaving the $\tr \circ \hOS(\sigma_{\overline{e}} \circ (q_{\overline{e}},m_{\overline{e}}),\aHT^{\overline{e}})$. By definition $\aHT^{\overline{e}} = (\aHT\aT)^{\overline{e}}$ for all $\overline{e} \neq \overline{e}^0$. So $\tr \circ \hOS(\boldsymbol{\sigma} \circ ({\bf q},{\bf m}),\aHT)$ can be obtained by interleaving of $\tr \circ \hOS(\sigma_{\overline{e}^0} \circ (q_{\overline{e}^0},m_{\overline{e}^0}),\aHT^{\overline{e}^0})$ and the $\tr \circ \hOS(\sigma_{\overline{e}} \circ (q_{\overline{e}},m_{\overline{e}}),(\aHT\aT)^{\overline{e}})$ for $\overline{e} \neq \overline{e}^0$. But 
\begin{align*}
\tr \circ \hOS(\sigma_{\overline{e}^0} \circ (q_{\overline{e}^0},m_{\overline{e}^0}),(\aHT\aT)^{\overline{e}^0}) & = \tr \circ \hOS(\sigma_{\overline{e}^0} \circ (q_{\overline{e}^0},m_{\overline{e}^0}),\aHT^{\overline{e}^0}\aT) \mbox{ by definition of }\aHT \mapsto \aHT^{\overline{e}},\\
	& = \tr \circ \hOS(\sigma_{\overline{e}^0} \circ (q_{\overline{e}^0},m_{\overline{e}^0}),\aHT^{\overline{e}^0}) \cdot \tr(\sigma_{\overline{e}^0}(q_{\overline{e}^0}(\aHT^{\overline{e}^0}),m_{\overline{e}^0}(\aHT^{\overline{e}^0})), \aT),
\end{align*}
so by the definition of interleaving, $\tr \circ \hOS(\boldsymbol{\sigma} \circ ({\bf q},{\bf m}),\aHT\aT)$ can be obtained by interleaving of the $\tr \circ \hOS(\sigma_{\overline{e}} \circ (q_{\overline{e}},m_{\overline{e}}),(\aHT\aT)^{\overline{e}})$, thus completing the induction.

Finally, let $\aHTinf \in \AT^\omega$. The definition of $\aHT\mapsto \aHT^{\overline{e}}$ extends naturally to infinite arguments, and so does the last claim by Lemma~\ref{lem:interleaving-eq}. Said otherwise, for all $\aHTinf \in \AT^\omega$, $\tr \circ \hOS(\boldsymbol{\sigma} \circ ({\bf q},{\bf m}),\aHTinf)$ can be obtained by interleaving of the $\tr \circ \hOS(\sigma_{\overline{e}} \circ (q_{\overline{e}},m_{\overline{e}}),\aHTinf^{\overline{e}}) \in W$. By interleaving closeness $\tr \circ \hOS(\boldsymbol{\sigma} \circ ({\bf q},{\bf m}),\aHTinf) \in W$, so  $\boldsymbol{\sigma} \circ ({\bf q},{\bf m})$ is a \pO\/ winning strategy in $g$.
\end{proof}
\end{lemma}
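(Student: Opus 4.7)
The idea is to let \pO\/ simulate all the $q_1$-derived games simultaneously in an interleaved fashion, letting \pT\/'s choices at state $q_1$ implicitly pick which simulation is being advanced. First I would collect, for each delayed $q_1$-response $\overline{e}$, the hypothesized memory-size-$m$ winning strategy $\sO_{\overline{e}}$ (with local data $\sigma_{\overline{e}}, \mu_{\overline{e}}, m_0$): for $\overline{e}=\overline{e}^1$ this is the winning strategy in $g_{q_0}(q_1,\overline{e}^1)$, and for every other $\overline{e}$ it is one in $g_{q_1}(q_1,\overline{e})$. Enumerate the distinct sets $E_1,\dots,E_k$ of the form $(\tr,\delta)(q_1,\aO,\AT)$ and pick witnesses $\aO_1,\dots,\aO_k$ with $(\tr,\delta)(q_1,\aO_i,\AT)=E_i$. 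Since at $q_1$ each $\aO$ has the same effect as some $\aO_i$, we may assume wlog that every $\sO_{\overline{e}}$ only plays in $\{\aO_1,\dots,\aO_k\}$ when visiting $q_1$.

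The main obstacle is to commit to a single action at $q_1$ \emph{before} knowing \pT\/'s move, while still ensuring that whatever \pT\/ plays, the produced $(\col,q')$ is exactly what is expected by the winning strategy of \emph{some} currently simulated derived game. This is exactly what Lemma~\ref{lem:fpi} (in its Skolemized form, Corollary~\ref{cor:fpi}) provides: given the family of ``desired next indices'' $f(\overline{e}):=\iota\circ\sigma_{\overline{e}}(q_1,\cdot)\in\{1,\dots,k\}$ obtained from the current memory vector, there is an index $F(f)$ and a map $G_f$ such that for every $(\col,q')\in E_{F(f)}$, $G_f(\col,q')$ is a $q_1$-response $\overline{e}$ with $f(\overline{e})=F(f)$ and $\overline{e}_{F(f)}=(\col,q')$. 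Thus \pO\/ plays $\aO_{F(f)}$; whatever \pT\/ responds, the resulting $(\col,q')$ determines via $G_f$ the next ``active'' derived game $\overline{e}$ that \pO\/ continues to simulate.

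I would then assemble the compound strategy as follows. The memory is ${\bf M}:=M^{\prod_iE_i}\times\prod_iE_i$, storing the memory state $m(\overline{e})$ of each simulated $\sO_{\overline{e}}$ plus a pointer $G$ to the currently active $\overline{e}$; initially ${\bf m_0}(\overline{e}):=m_0$ and ${\bf m_0}(G):=\overline{e}^1$, consistent with the fact that only $g_{q_0}(q_1,\overline{e}^1)$ starts at $q_0$ while the others start at $q_1$. At a non-$q_1$ state, play and update according to the active $\overline{e}=m(G)$ only; at $q_1$, play $\aO_{F(f_m)}$ and update $G$ via $G_{f_m}$ applied to \pT\/'s realized $(\col,q')$, and update the active game's memory via its $\mu$. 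The size is $|{\bf M}|=|M|^d\cdot d=2^{dm}\cdot d$, hence $\log_2|{\bf M}|=dm+\log_2 d$.

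Finally I would prove correctness by a simultaneous induction on the \pT\/ history: decompose it into sub-histories $\aHT^{\overline{e}}$ (one per derived game, the moves played while that game was active), and maintain three invariants — the compound memory equals the concatenated per-game memories (${\bf m}(\aHT)(\overline{e})=m_{\overline{e}}(\aHT^{\overline{e}})$), the current state equals the state in the active game ($q_{m(G)}(\aHT^{m(G)})={\bf q}(\aHT)$), and every inactive game is paused at $q_1$. These immediately yield that the produced color sequence $\tr\circ\hOS(\cdot,\aHT)$ in $g_{q_0}$ is an interleaving of the winning color sequences $\tr\circ\hOS(\sO_{\overline{e}},\aHT^{\overline{e}})\in W$. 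Extending to infinite \pT\/ runs via Observation~\ref{obs:inf-ext} and Lemma~\ref{lem:interleaving-eq}, interleaving-closeness of $W$ then gives $\tr\circ\hOS(\boldsymbol{\sigma}\circ({\bf q},{\bf m}),\aHTinf)\in W$. The trickiest bookkeeping will be the case split at $q_1$ proving that the active-game invariant is preserved precisely because $G_{f_m}$ lands in $E_{F(f_m)}$ with $F(f_m)$-coordinate equal to the realized $(\col,q')$.
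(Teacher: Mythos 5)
Your proposal is correct and follows essentially the same route as the paper's proof: the same use of Lemma~\ref{lem:fpi}/Corollary~\ref{cor:fpi} to pick the action $\aO_{F(f_m)}$ at $q_1$, the same compound memory $M^{\prod_i E_i}\times\prod_i E_i$ with a pointer to the active derived game, the same three invariants (per-game memories, active-game state, inactive games paused at $q_1$), and the same conclusion via interleaving-closeness after lifting the finite-history decomposition to infinite runs. The remaining work is only the routine inductive bookkeeping that the paper carries out in its Claims.
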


\begin{observation}\label{obs:cg-involve-less}
For every game $g = \langle  \AO, \AT, Q, q_0, \delta, \Col, \tr, W\rangle$, state $q \in Q$  involving \pT\/, and delayed $q$-response $\overline{e}$, the following holds.
\begin{enumerate}
\item\label{obs:cg-involve-less1} \pT\/ involves one state less (i.e. $q$) in $g(q,\overline{e})$ than in $g$.
\item\label{obs:cg-involve-less2} Every delayed response $t$ in $g(q,\overline{e})$ is also a delayed response in $g$, and $t_q = \overline{e}$ and $g(q,\overline{e})(t) = g(t)$.
\end{enumerate}
\begin{proof} 
\begin{enumerate}
\item
With $g(q,\overline{e}) := \langle  \AO, \AT, Q, q_0, \delta', \Col, \tr', W\rangle$, one finds $|(\tr,\delta)(q',\aO,\AT)| = |(\tr',\delta')(q',\aO,\AT)|$ for all $(q',\aO) \in (Q\setminus \{q\}) \times \AO$, and $(\tr',\delta')(q,\aO,\AT) = \{\overline{e}_{(\tr,\delta)(q,\aO,\AT)}\}$ so $|(\tr',\delta')(q,\aO,\AT)| = 1$.
\end{enumerate}
\end{proof}
\end{observation}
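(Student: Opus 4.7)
Both items are essentially unpackings of the definition of $g(q,\overline{e})$ which, by construction, differs from $g$ only in what happens at state $q$.

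For item 1, write $g(q,\overline{e}) = \langle \AO, \AT, Q, q_0, \delta', \Col, \tr', W\rangle$. Off $q$, the pair $(\tr',\delta')$ agrees with $(\tr,\delta)$, so every state $q' \neq q$ involves \pT\/ in one game iff in the other. At $q$, $(\tr',\delta')(q,\aO,\aT) = \overline{e}_{(\tr,\delta)(q,\aO,\AT)}$ is independent of $\aT$, so for every $\aO \in \AO$ the set $(\tr',\delta')(q,\aO,\AT)$ is a singleton, meaning $q$ does not involve \pT\/ in $g(q,\overline{e})$. Since $q$ is assumed to involve \pT\/ in $g$, exactly one state fewer involves \pT\/ in $g(q,\overline{e})$, as claimed.

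For item 2, recall that a delayed response is a choice of one element in each set of the enumeration $\{(\tr,\delta)(q',\aO,\AT)\mid \aO\in\AO\}$, for every state $q'$. These enumerations agree with their primed counterparts for $q' \neq q$, and at $q$ the primed sets are exactly the singletons $\{\overline{e}_E\}$ indexed by the unprimed sets $E$. Hence a delayed response $t$ in $g(q,\overline{e})$ is forced at $q$ to pick $\overline{e}_E$ in each singleton, which is exactly the delayed $q$-response $\overline{e}$; combined with the unchanged choices off $q$, this extends $t$ uniquely to a delayed response in $g$ with $t_q = \overline{e}$.

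For the equality $g(q,\overline{e})(t) = g(t)$ of one-player games, the only data that might differ is the trace-update pair. Off $q$, both primed and unprimed versions coincide by the above; at $q$, one checks $(\tr'_t,\delta'_t)(q,\aO) = \overline{e}_{(\tr,\delta)(q,\aO,\AT)} = (\tr_t,\delta_t)(q,\aO)$, the second equality using $t_q = \overline{e}$. There is no genuine obstacle here: the whole observation is a bookkeeping step, included to formally justify the induction on the degree of concurrency (i.e. on the number of states involving \pT\/) that powers the proofs of Theorems~\ref{thm:interleaving-prefix removal-eq} and \ref{thm:interleaving-win-all-eq} via Lemma~\ref{lem:win-delayed-response-implies-win}.
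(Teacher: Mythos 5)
Your proof is correct and takes essentially the same approach as the paper: item 1 is exactly the paper's definition-unpacking argument (the sets $(\tr',\delta')(q',\aO,\AT)$ are unchanged off $q$ and become singletons at $q$, so only $q$ stops involving \pT\/). You are in fact slightly more complete than the paper, whose written proof covers only item 1 and leaves the item-2 bookkeeping (identifying $t$ with a delayed response of $g$ via $t_q = \overline{e}$ and checking $g(q,\overline{e})(t) = g(t)$ componentwise) implicit; your argument supplies it correctly.
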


Lemma~\ref{lem:sub-strat-trace} below shows that the trace behaves nicely w.r.t. the “sub-strategies”.
\begin{lemma}\label{lem:sub-strat-trace}
Consider games $g_{q} = \langle  \AO, \AT, Q, q, \delta, \Col, \tr, W\rangle$ parametrized by $q \in Q$. Let $q_0 \in Q$, let $\sO$ be a \pO\/ strategy, let $\aHT' \in \AT^*$, let $q := \Delta_{q_0} \circ \hOS(\sO,\aHT')$, and let a \pO\/ strategy $\sO_q$ be defined by $\sO_q(\aHT) := \sO(\aHT^q \aHT)$ for all $\aHT \in \AT^*$. Then 
$$\forall\aHTinf \in \AT^\omega,\,\tr_{q_0} \circ \hOS(\sO,\aHT^q) \cdot \tr_q \circ \hOS(\sO_q,\aHTinf) = \tr_{q_0} \circ \hOS(\sO,\aHT^q\aHTinf)$$
and if $W$ is closed under prefix removal and if $\sO$ wins $g_{q_0}$, then $\sO_q$ wins $g_{q}$. Moreover, if $\sO$ uses memory $M$, so does $\sO_q$.

\begin{proof}
Let us prove the following by induction on $\aHT$. (where $\Delta_{q}(\e) = q$)
\begin{align}
\forall \aHT \in \AT^*,\,\Delta_q \circ \hOS(\sO_q,\aHT) = \Delta_{q_0} \circ \hOS(\sO,\aHT^q\aHT)\label{eq:prefix removal-win-one-eq4}
\end{align}
For the base case, $\Delta_q \circ \hOS(\sO_q,\e) = q = \Delta_{q_0} \circ \hOS(\sO,\aHT^q)$. For the inductive case let $\aHT \in \AT^*$ and $\aT \in \AT$. Then
\begin{align*}
\Delta_q \circ \hOS(\sO_q,\aHT \aT) & = \Delta_q(\hOS(\sO_q,\aHT) \cdot (\sO_q(\aHT),\aT)) \mbox{ by definition of }\hOS,\\
	& = \delta(\Delta_q \circ \hOS(\sO_q,\aHT), \sO_q(\aHT),\aT))\mbox{ by definition of }\Delta_q,\\
	& =  \delta(\Delta_{q_0} \circ \hOS(\sO,\aHT^q\aHT), \sO_q(\aHT),\aT))\mbox{ by I.H.,}\\
	& =  \delta(\Delta_{q_0} \circ \hOS(\sO,\aHT^q\aHT), \sO(\aHT^q\aHT),\aT))\mbox{ by definition of }\sO_q,\\
	& = \Delta_{q_0} (\hOS(\sO,\aHT^q\aHT) \cdot \sO(\aHT^q\aHT),\aT)\mbox{ by definition of }\Delta_{q_0},\\
	& = \Delta_{q_0} \circ \hOS(\sO,\aHT^q\aHT \aT)  \mbox{ by definition of }\hOS\mbox{, thus completing the induction.}
\end{align*}

Let us now prove the following by induction on $\aHT$.
\begin{align}
\forall\aHT \in \AT^*,\,\tr_{q_0} \circ \hOS(\sO,\aHT^q) \cdot \tr_q \circ \hOS(\sO_q,\aHT) = \tr_{q_0} \circ \hOS(\sO,\aHT^q\aHT)\label{eq:prefix removal-win-one-eq5}
\end{align}
For the base case, $\tr_{q_0} \circ \hOS(\sO,\aHT^q) \cdot \tr_q \circ \hOS(\sO_q,\e) = \tr_{q_0} \circ \hOS(\sO,\aHT^q) \cdot \e = \tr_{q_0} \circ \hOS(\sO,\aHT^q\e)$. For the inductive case, let $\aHT \in \AT^*$ and $\aT \in \AT$. Then $\tr_{q_0} \circ \hOS(\sO,\aHT^q) \cdot \tr_q \circ \hOS(\sO_q,\aHT \aT)$ 
\begin{align*}
& = \tr_{q_0} \circ \hOS(\sO,\aHT^q) \cdot\tr_q(\hOS(\sO_q,\aHT) \cdot (\sO_q(\aHT),\aT)) \mbox{ by definition of }\hOS,\\
	& = \tr_{q_0} \circ \hOS(\sO,\aHT^q) \cdot \tr_q \circ \hOS(\sO_q,\aHT) \cdot \tr(\Delta_q \circ \hOS(\sO_q,\aHT),\sO_q(\aHT),\aT ) \mbox{ by definition of }\tr_q,\\
	& = \tr_{q_0} \circ \hOS(\sO,\aHT^q\aHT) \cdot \tr(\Delta_q \circ \hOS(\sO_q,\aHT),\sO_q(\aHT),\aT)  \mbox{ by I.H.,}\\
	& = \tr_{q_0} \circ \hOS(\sO,\aHT^q\aHT) \cdot \tr(\Delta \circ \hOS(\sO,\aHT^q\aHT),\sO_q(\aHT),\aT)\mbox{ by Claim~(\ref{eq:prefix removal-win-one-eq4}),}\\
	& = \tr_{q_0} \circ \hOS(\sO,\aHT^q\aHT) \cdot \tr(\Delta \circ \hOS(\sO,\aHT^q\aHT),\sO(\aHT^q\aHT),\aT)\mbox{ by definition of }\sO_q,\\
	& = \tr_{q_0}(\hOS(\sO,\aHT^q\aHT) \cdot (\sO(\aHT^q\aHT),\aT))\mbox{ by definition of }\tr_r,\\
	& = \tr_{q_0} \circ \hOS(\sO,\aHT^q\aHT \aT)\mbox{ by definition of }\hOS\mbox{, thus completing the induction.}
\end{align*}
Claim~(\ref{eq:prefix removal-win-one-eq5}) is lifted to infinite arguments: $\forall\aHTinf \in \AT^\omega,\,\tr_q \circ \hOS(\sO_q,\aHTinf) = \tr \circ \hOS(\sO,\aHT^q\aHTinf)$. Since $\sO$ is winning in the original game, $\tr_q \circ \hOS(\sO_q,\AT^\omega) = tr \circ \hOS(\sO,\aHT^q\AT^\omega) \subseteq W$, so $\sO_q$ is winning $g_{q}$.

Moreover, if $\sO$'s being finite memory is witnessed by $m_0$, $\sigma$, and $\mu$, then $\sO_q$'s being finite memory is witnessed by $m(\aHT^q)$, $\sigma$, and $\mu$.
\end{proof}
\end{lemma}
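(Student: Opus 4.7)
The natural plan is to show that the shifted strategy $\sO_q$ simulates in $g_q$ exactly the continuation of the play that $\sO$ produces in $g_{q_0}$ after the prefix $\aHT'$ has been played. I would carry this out in two successive inductions on finite \pT\/ histories $\aHT \in \AT^*$. First, that the induced state sequences line up:
\begin{align*}
\Delta_q \circ \hOS(\sO_q,\aHT) = \Delta_{q_0} \circ \hOS(\sO,\aHT'\aHT).
\end{align*}
The base case is immediate from the definition $q := \Delta_{q_0} \circ \hOS(\sO,\aHT')$ and $\Delta_q(\e) = q$. The step unfolds $\hOS$ and $\Delta_q$, rewrites $\sO_q(\aHT) = \sO(\aHT'\aHT)$, and invokes the induction hypothesis. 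Second, and building on the first, that the traces compose:
\begin{align*}
\tr_{q_0} \circ \hOS(\sO,\aHT') \cdot \tr_q \circ \hOS(\sO_q,\aHT) = \tr_{q_0} \circ \hOS(\sO,\aHT'\aHT),
\end{align*}
proved by unfolding $\tr_q$ one step, using the first identity to replace $\Delta_q \circ \hOS(\sO_q,\aHT)$ by $\Delta_{q_0} \circ \hOS(\sO,\aHT'\aHT)$ inside $\tr$, and finally folding $\tr_{q_0}$ back one step.

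To reach the statement for infinite \pT\/ runs $\aHTinf \in \AT^\omega$, I would apply Observation~\ref{obs:inf-ext}: both $\tr_{q_0} \circ \hOS(\sO,\aHT'\aHTinf)$ and $\tr_{q_0} \circ \hOS(\sO,\aHT') \cdot \tr_q \circ \hOS(\sO_q,\aHTinf)$ are the unique infinite extensions of their matching finite prefixes, which agree by the second induction above.

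The winning part is then a one-line consequence of closure under prefix removal. If $\sO$ wins $g_{q_0}$, then for every $\aHTinf$ the sequence $\tr_{q_0} \circ \hOS(\sO,\aHT'\aHTinf)$ lies in $W$; by the displayed decomposition it factors as a finite prefix $\tr_{q_0} \circ \hOS(\sO,\aHT')$ followed by $\tr_q \circ \hOS(\sO_q,\aHTinf)$, so the tail $\tr_q \circ \hOS(\sO_q,\aHTinf)$ belongs to $W$ by prefix-removal closure, which is precisely saying that $\sO_q$ wins $g_q$.

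For the memory claim, given a representation of $\sO$ by $(M, m_0, \sigma, \mu)$, I would simply take the representation $(M, m(\aHT'), \sigma, \mu)$ for $\sO_q$, where $m$ is the memory-state function associated with $\sO$. A straightforward computation shows $\sO_q(\aHT) = \sigma(\Delta_q \circ \hOS(\sO_q,\aHT), m(\aHT'\aHT))$ using the first state-alignment identity and the inductive definition of $m$. I do not anticipate any genuine difficulty; the only delicate bookkeeping point is to keep $\Delta_q$ and $\Delta_{q_0}$ clearly distinct, since they are formally different functions that differ only in their start state.
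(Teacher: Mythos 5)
Your proposal is correct and follows essentially the same route as the paper's own proof: the same two inductions on finite \pT\/ histories (state alignment $\Delta_q \circ \hS(\sO_q,\cdot)$ versus $\Delta_{q_0} \circ \hS(\sO,\aHT'\,\cdot)$, then the trace decomposition), the same lift to infinite runs via Observation~\ref{obs:inf-ext}, the same use of prefix-removal closure for the winning claim, and the same memory witness $(M, m(\aHT'), \sigma, \mu)$. No gaps worth noting.
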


\begin{proof}[Proof of Theorem~\ref{thm:interleaving-prefix removal-eq}]
\ref{thm:interleaving-win-all-eq1} $\Rightarrow$\ref{thm:interleaving-win-all-eq2} by Lemma~\ref{lem:win-implies-win-delayed-response}.

{\bf Sketch of \ref{thm:interleaving-win-all-eq2} $\Rightarrow$\ref{thm:interleaving-win-all-eq1}}
The proof proceeds by induction on the number of states that involve \pT\/. For the base case, morally ``$g_{q_0} = g_{q_0}(t)$'' for all delayed responses $t$. For the inductive case, let a state $q_1$ involve \pT\/. In the games $g_{q_0}(q_1,\overline{e})$, where $\overline{e}$ is a delayed $q_1$-response, fewer states involve \pT\/, so the I.H. yields corresponding winning strategies. If one of these strategies avoids $q_1$, it wins also in $g_{q_0}$. Otherwise, each of these strategies has a ``substrategy'' that wins $g_{q_1}(q_1,\overline{e})$, by closeness under prefix removal.  Lemma~\ref{lem:win-delayed-response-implies-win} is invoked to complete the induction.

{\bf Details of \ref{thm:interleaving-win-all-eq2} $\Rightarrow$\ref{thm:interleaving-win-all-eq1}}
The proof proceeds by induction on the number of states that involve \pT\/. For the base case, let us assume that no state involves \pT\/. Especially, $(\tr,\delta)(q,\aO, \AT) = \{t_{q,(\tr,\delta)(q,\aO,\AT)}\}$ for all $(q,\aO) \in Q \times \AO$ since $|(\tr,\delta)(q,\aO,\AT)| = 1$ by base-case assumption and $t_{q,(\tr,\delta)(q,\aO,\AT)} \in (\tr,\delta)(q,\aO,\AT)$ by definition of a delayed response. Let $\sO_t$ be a \pO\/ winning strategy in $g_{q_0}(t)$, and let us define a \pO\/ strategy $\sO$ in $g_{q_0}$ by $\sO(\aHT) := \sO_t(|\aHT|)$ for all $\aHT \in \AT^*$. Let $\aHTinf \in \AT^\omega$. Invoking Lemma~\ref{lem:2p-t-1p} with $F(q,\aO) := \AT$ yields $\tr_t \circ \hOS_t(\sO_t,\omega) = \tr \circ \hOS(\sO,\aHTinf)$. Since $\aHTinf$ is arbitrary and $\sO_t$ is winning, $\sO$ is winning in $g_{q_0}$.

For the inductive case, let  $q_1 \in Q$ be a state involving \pT\/. For all delayed $q_1$-responses $\overline{e}$, fewer states involve \pT\/ in $g_{q_0}(q_1,\overline{e})$ than in $g_{q_0}$, by Observation~\ref{obs:cg-involve-less}.\ref{obs:cg-involve-less1}. By Observation~\ref{obs:cg-involve-less}.\ref{obs:cg-involve-less2}, for all delayed $q_1$-responses $\overline{e}$, every delayed responses $t$ in $g_{q_0}(q_1,\overline{e})$ is also a delayed response in $g_{q_0}$, and $t_{q_1} = \overline{e}$ and $g_{q_0}(q_1,\overline{e})(t) = g_{q_0}(t)$; by assumption \pO\/ wins $g_{q_0}(t)$ for all delayed responses $t$ in $g_{q_0}$, so \pO\/ wins $g_{q_0}(t) = g_{q_0}(q_1,\overline{e})(t)$ for all delayed responses $t$ in $g_{q_0}(q_1,\overline{e})$; by I.H. let $\sO^{\overline{e}}$ be a \pO\/ winning (finite memory) strategy in $g_{q_0}(q_1,\overline{e})$.

Let us make a case disjunction on whether some $\sO^{\overline{e}}$ avoids $q_1$ (starting from $q_0$), i.e. $\Delta_{q_0} \circ \hOS(\sO^{\overline{e}},\aHT) \neq q_1$ for all $\aHT \in \AT^*$. First case, there exists such an $\sO^{\overline{e}}$, which is also a \pO\/ winning (finite memory) strategy in $g_{q_0}$ as proved below. Recall that $g_{q_0}(q_1,\overline{e}) = \langle  \AO, \AT, Q, q_0, \delta', \Col, \tr', W\rangle$, where  for all $(\aO,\aT, q) \in \AO \times \AT \times (Q \setminus \{q_1\})$ we have $(\tr',\delta')(q,\aO,\aT) = (\tr,\delta)(q,\aO,\aT)$ and $(\tr',\delta')(q_1,\aO,\aT) = \overline{e}_{(\tr,\delta)(q_1,\aO,\AT)}$. Let us prove the following by induction on $\aHT$.
\begin{align}
\forall \aHT \in \AT^*,\,\Delta' \circ \hOS(\sO^{\overline{e}}, \aHT) = \Delta \circ \hOS(\sO^{\overline{e}}, \aHT) \label{claim:interleaving-prefix removal-eq1}
\end{align}
For the base case $\Delta' \circ \hOS(\sO^{\overline{e}}, \e) = q_0 = \Delta \circ \hOS(\sO^{\overline{e}}, \e)$. For the inductive case let $\aHT \in \AT^*$ and $\aT \in \AT$. Then
\begin{align*}
\Delta' \circ \hOS(\sO^{\overline{e}}, \aHT \aT) & = \delta'(\Delta' \circ \hOS(\sO^{\overline{e}}, \aHT), \sO^{\overline{e}}(\aHT),\aT) \mbox{ by definition of }\Delta',\\
	& =  \delta'(\Delta \circ \hOS(\sO^{\overline{e}}, \aHT), \sO^{\overline{e}}(\aHT),\aT) \mbox{ by I.H.,}\\
	& = \delta(\Delta \circ \hOS(\sO^{\overline{e}}, \aHT), \sO^{\overline{e}}(\aHT),\aT) \mbox{ by definition of }\delta'\mbox{ and since }\sO^{\overline{e}}\mbox{ avoids }q_1,\\
	& = \Delta \circ \hOS(\sO^{\overline{e}}, \aHT \aT) \mbox{ by definition of }\Delta\mbox{, thus completing the induction.}
\end{align*}
Let us now prove the following by induction on $\aHT$.
\begin{align}
\forall \aHT \in \AT^*,\,\tr' \circ \hOS(\sO^{\overline{e}}, \aHT) = \tr \circ \hOS(\sO^{\overline{e}}, \aHT) \label{claim:interleaving-prefix removal-eq2}
\end{align}
For the base case $\tr' \circ \hOS(\sO^{\overline{e}}, \e) = \e = \tr \circ \hOS(\sO^{\overline{e}}, \e)$. For the inductive case let $\aHT \in \AT^*$ and $\aT \in \AT$. Then
\begin{align*}
\tr' \circ \hOS(\sO^{\overline{e}}, \aHT \aT) & = \tr'(\hOS(\sO^{\overline{e}}, \aHT) \cdot (\sO^{\overline{e}}(\aHT), \aT)) \mbox{ by definition of }\hOS,\\
	& = \tr' \circ \hOS(\sO^{\overline{e}}, \aHT) \cdot \tr'(\Delta'\circ \hOS(\sO^{\overline{e}}, \aHT), \sO^{\overline{e}}(\aHT), \aT) \mbox{ by definition of }\tr',\\
	& = \tr \circ \hOS(\sO^{\overline{e}}, \aHT) \cdot \tr'(\Delta'\circ \hOS(\sO^{\overline{e}}, \aHT), \sO^{\overline{e}}(\aHT), \aT) \mbox{ by I.H.,}\\
	& = \tr \circ \hOS(\sO^{\overline{e}}, \aHT) \cdot \tr'(\Delta\circ \hOS(\sO^{\overline{e}}, \aHT), \sO^{\overline{e}}(\aHT), \aT) \mbox{ by Claim~(\ref{claim:interleaving-prefix removal-eq1}),}\\
	& = \tr \circ \hOS(\sO^{\overline{e}}, \aHT) \cdot \tr(\Delta\circ \hOS(\sO^{\overline{e}}, \aHT), \sO^{\overline{e}}(\aHT), \aT) \mbox{ by definition of }\tr' \mbox{ and since }\sO^{\overline{e}}\mbox{ avoids }q_1,\\
	& = \tr(\hOS(\sO^{\overline{e}}, \aHT) \cdot (\sO^{\overline{e}}(\aHT), \aT)) \mbox{ by definition of }\tr,\\
	& = \tr \circ \hOS(\sO^{\overline{e}}, \aHT \aT)\mbox{ by definition of }\hOS\mbox{, thus completing the induction.}
\end{align*}
Claim~(\ref{claim:interleaving-prefix removal-eq2}) is lifted to infinite argument: $\forall \aHTinf \in \AT^\omega,\,\tr' \circ \hOS(\sO^{\overline{e}}, \aHTinf) = \tr \circ \hOS(\sO^{\overline{e}}, \aHTinf)$. Since $\sO^{\overline{e}}$ wins $g_{q_0}(q_1,\overline{e})$, $\tr' \circ \hOS(\sO^{\overline{e}}, \AT^\omega) \subseteq W$, so  $\tr \circ \hOS(\sO^{\overline{e}}, \AT^\omega) \subseteq W$ and $\sO^{\overline{e}}$ wins $g_{q_0}$.

Second case, for all delayed $q_1$-responses $\overline{e}$, the strategy $\sO^{\overline{e}}$ does not avoid $q_1$, so let $\aHT^{\overline{e}} \in \AT^*$ be such that $\Delta_{q_0} \circ \hOS(\sO^{\overline{e}},\aHT^{\overline{e}}) = q_1$. For all $\overline{e}$, let $\sO^{\overline{e}}_{q_1}$ be the \pO\/ strategy defined by $\sO^{\overline{e}}_{q_1}(\aHT) := \sO^{\overline{e}}(\aHT^{\overline{e}} \aHT)$ for all $\aHT \in \AT^*$. Since $\sO^{\overline{e}}$ wins $g_{q_0}(q_1,\overline{e})$, by Lemma~\ref{lem:sub-strat-trace} $\sO^{\overline{e}}_{q_1}$ wins $g_{q_1}(q_1,\overline{e})$. Therefore there exists a \pO\/ winning (finite-memory) strategy in $g_{q_0}$, by invoking Lemma~\ref{lem:win-delayed-response-implies-win} with the $\sO^{\overline{e}}_{q_1}$ and one of the $\sO^{\overline{e}}$.
\end{proof}

\begin{proof}[Proof of Theorem~\ref{thm:interleaving-win-all-eq}]
\ref{thm:interleaving-win-all-eq1} $\Rightarrow$\ref{thm:interleaving-win-all-eq2} by Lemma~\ref{lem:win-implies-win-delayed-response}

\ref{thm:interleaving-win-all-eq2} $\Rightarrow$\ref{thm:interleaving-win-all-eq1}
Let $q_0 \in Q$. The proof proceeds by induction on the number of states that involve \pT\/. The base case is verbatim like the base case from the proof of Theorem~\ref{thm:interleaving-prefix removal-eq}. For the inductive case (similar but simpler), let  $q_1 \in Q$ be a state involving \pT\/. For all delayed $q_1$-responses $\overline{e}$, fewer states involve \pT\/ in $g_{q_0}(q_1,\overline{e})$ and $g_{q_1}(q_1,\overline{e})$ than in $g_{q_0}$, by Observation~\ref{obs:cg-involve-less}.\ref{obs:cg-involve-less1}. By Observation~\ref{obs:cg-involve-less}.\ref{obs:cg-involve-less2}, for all delayed $q_1$-responses $\overline{e}$, every delayed responses $t$ in $g_{q_i}(q_1,\overline{e})$, where $i \in \{0,1\}$, is also a delayed response in $g_{q_i}$, and $t_{q_1} = \overline{e}$ and $g_{q_i}(q_1,\overline{e})(t) = g_{q_i}(t)$; by assumption \pO\/ wins $g_{q_i}(t)$ for all delayed responses $t$ in $g_{q_i}$, so \pO\/ wins $g_{q_i}(t) = g_{q_i}(q_1,\overline{e})(t)$ for all delayed responses $t$ in $g_{q_i}(q_1,\overline{e})$; by I.H. let $\sO^{\overline{e}}_i$ be a \pO\/ winning (finite memory) strategy in $g_{q_i}(q_1,\overline{e})$. Therefore there exists a \pO\/ winning (finite-memory) strategy in $g_{q_0}$, by invoking Lemma~\ref{lem:win-delayed-response-implies-win} with the $\sO^{\overline{e}}_{1}$ and one of the $\sO^{\overline{e}}_0$.
\end{proof}

\begin{observation}
Consider a game $g = \langle  \AO, \AT, Q, q_0, \delta, \Col, \tr, W\rangle$ with finite $Q$ and $\Col$.
\begin{enumerate}
\item The $(\tr,\delta)(q,\aO,\AT)$ have cardinality at most $\Col \times Q$,  and there are at most $2^{|\Col||Q|}$ many of them, i.e. $k_q \leq 2^{|\Col||Q|}$.

\item There are at most $(|\Col||Q|)^{|Q|\left(2^{|\Col||Q|}\right)}$ \pT\/ delayed responses.
\end{enumerate}
\begin{proof}
\begin{enumerate}
\item The $(\tr,\delta)(q,\aO,\AT)$ are all included in $\Col \times Q$, which is finite like $Q$ and $\Col$.

\item Consider $\otimes_{q\in Q, i \leq k_q}E_i^q$: there are at most $|Q|\left(2^{|\Col||Q|}\right)$ pairs $(q,i)$ since $k_q \leq 2^{|\Col||Q|}$, as argued above, and each $E_i^q$ has cardinality at most $\Col \times Q$.
\end{enumerate}
\end{proof}
\end{observation}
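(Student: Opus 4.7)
The plan is to unpack the definitions and apply elementary counting. Both parts are purely cardinality estimates using the finiteness of $Q$ and $\Col$, so no clever argument is needed; the only care is to interpret the notation correctly.

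For part~1, I would observe that the set $(\tr,\delta)(q,\aO,\AT)$ is by definition the image of the map $\aT \mapsto (\tr(q,\aO,\aT),\delta(q,\aO,\aT))$, hence a subset of $\Col \times Q$. Its cardinality is therefore bounded by $|\Col \times Q| = |\Col|\cdot|Q|$. For the bound on $k_q$, note that by definition $k_q$ counts the \emph{distinct} values of $(\tr,\delta)(q,\aO,\AT)$ as $\aO$ ranges over $\AO$; these are all subsets of $\Col \times Q$, of which there are $2^{|\Col||Q|}$ in total, giving $k_q \leq 2^{|\Col||Q|}$.

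For part~2, I would invoke part~1 and the definition of delayed responses as elements of the Cartesian product $\otimes_{q\in Q,\, i \leq k_q} E_i^q$. The number of coordinates (pairs $(q,i)$) is $\sum_{q\in Q} k_q \leq |Q| \cdot 2^{|\Col||Q|}$ by part~1, and each factor $E_i^q$ is a subset of $\Col \times Q$, hence of cardinality at most $|\Col|\cdot|Q|$. Multiplying these bounds yields at most $(|\Col|\cdot|Q|)^{|Q|\cdot 2^{|\Col||Q|}}$ delayed responses, as claimed.

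There is no real obstacle: the argument is a two-line counting exercise once one notices that the $E_i^q$ are subsets of $\Col \times Q$. The mild subtlety is simply that part~2 uses part~1 twice — once to bound the number of factors in the product (via $k_q$) and once to bound the size of each factor.
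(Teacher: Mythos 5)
Your proof is correct and follows essentially the same route as the paper: both bound each $(\tr,\delta)(q,\aO,\AT)$ as a subset of $\Col \times Q$ (giving $k_q \leq 2^{|\Col||Q|}$) and then count the product $\otimes_{q\in Q,\,i\leq k_q}E_i^q$ coordinate by coordinate. Your write-up is merely a little more explicit about why $k_q$ is bounded by the number of subsets of $\Col \times Q$; there is no substantive difference.
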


\begin{proof}[Proof of Corollary~\ref{cor:interleaving-win-one-eq}]
By invoking the characterization from Theorem~\ref{thm:interleaving-prefix removal-eq} (resp. Theorem~\ref{thm:interleaving-win-all-eq}). For all $q \in Q$, the $(\tr,\delta)(q,\aO,\AT)$ are subsets of $\Col \times Q$, so they have cardinality at most $\Col \times Q$  and there are at most $2^{|\Col||Q|}$ many of them. So, for all $q \in Q$ there are at most $(|\Col||Q|)^{\left(2^{|\Col||Q|}\right)}$ delayed $q$-responses, and there are at most $(|\Col||Q|)^{|Q|\left(2^{|\Col||Q|}\right)}$ \pT\/ delayed responses. For each of them one should decide whether a one-player game with actions $\AO$, states $Q$, colors $C$, and winning condition $W$ is won by the player, which takes at most $f(|\AO|,|Q|,|\Col|)$ steps by assumption. Hence the term $f(|\AO|,|Q|,|\Col|) \cdot (|\Col||Q|)^{|Q|\left(2^{|\Col||Q|}\right)}$. The term $|Q||\AO||\AT|$ comes from some cleaning operation: for each state $q \in Q$, if $(\tr,\delta)(q,\aO,\AT) = (\tr,\delta)(q,\aO',\AT)$, then $\aO$ and $\aO'$ are ``equivalent'' enough.
\end{proof}

\section{Existence of \pT\/ almost surely winning random strategies}

\begin{proof}[Proof of \ref{thm:no-win-all-random-win1} $\Rightarrow$ \ref{thm:no-win-all-random-win3} from Theorem~\ref{thm:no-win-all-random-win}]
Let $p \in ]0,\frac{1}{|\AT|}]$ and let $\sT$ be a \pT\/ stochastic strategy that always assigns probability at least $p$ to every action. 

For all $q \in Q$, by contraposition of Theorem~\ref{thm:interleaving-prefix removal-eq} let $t_q$ be a delayed response (in $g_q$) such that \pO\/ loses the one-player game $g_q(t_q)$. For all $n \in \N$, anytime a play reaches the state $q$, the probability that from then on \pT\/ follows $t_q$ for $n$ rounds in a row, as if second-guessing \pO\/, is  greater than or equal to $p^n$.

Consider a play where \pT\/ follows $\sT$. Let $q$ be a state that is visited infinitely often. (Such a state exists since $Q$ is finite.) Thanks to the argument above, for all $n \in \N$, the probability that, at some point, \pT\/ follows $t_q$ for $n$ rounds in a row from $q$ on is one. Since the countable intersection of measure-one sets has also measure one, the probability that, for all $n \in \N$, at some point \pT\/ follows $t_q$ for $n$ rounds in a row from $q$ on is one.

Let $(\aH^n)_{n \in \N}$ be the corresponding full histories. Since $\AO$ and $\AT$ are finite, the tree induced by prefix closure of the $(\aH^n)_{n \in \N}$ is finitely branching, so by Koenig's Lemma it has an infinite path $\aHinf$, which corresponds to \pT\/ following $t_q$ infinitely many rounds in a row. So $\tr(\aHinf) \notin W$. By factor-prefix closeness the original play is also losing for \pO\/, i.e. winning for \pT\/.

\end{proof}

\section{The special case of stateless (i.e. one-state) games}

\begin{proof}[Proof of Observation~\ref{obs:factor-pref-set}]
Let $\Col$ be finite and $W \subseteq \Col^\omega$ be factor-prefix complete. Let $\Col_0 \subseteq \Col$ and let $\colHinf \in W$ have factors in $\Col_0^*$ of arbitrary length. So $\colHinf$ has factors in $\Col_0^*$ of arbitrary length and also occurring arbitrarily far in the tail. So by finiteness of $\Col_0$, for all $n \in \N$ there exists a word in $\Col_0^n$ that occurs as factor arbitrarily far in the tail of $\colHinf$. The set induced by the prefix closure of these words is a finitely-branching tree, so by Koenig’s Lemma it has an infinite path $\colHinf’ \in \Col_0^\omega$. By construction the prefixes of $\colHinf’$ occur as factors arbitrarily far in the tail of $\colHinf \in W$, so $\colHinf’ \in W$ by factor-prefix completeness. This $\colHinf’ \in W \cap \Col_0^\omega$ witnesses factor-set completeness.
\end{proof}

\begin{proof}[Proof of Theorem~\ref{thm:stateless-semi-rand-det}]
Let us assume that there is no \pO\/ winning strategy in the stateless game. Let $\Col_1,\dots,\Col_k$ be the elements of the set $\{\tr(\aO,\AT)\,\mid\,\aO \in \AO\}$. By contraposition of Corollary~\ref{cor:eq-win-abs} there exists $\overline{c} \in \Col_1 \times \dots \times \Col_k$ such that $W \cap \Col_0 = \emptyset$, where $\Col_0 := \{\overline{c}_1,\dots,\overline{c}_k\}$. By definition of the $\Col_i$, for all $\aO \in \AO$ there exists some $i$ such that $\tr(\aO,\AT) = \Col_i$, so there exists $\aT \in \AT$ such that $\tr(\aO,\aT) = \overline{\col}_i$. By skolemization, there exists $f: \AO \to \AT$ such that $\tr(\aO,f(\aO)) \in \Col_0$ for all $\aO \in \AO$. Then $\{f^{-1}(\aT)\,\mid\, \aT \in \AT\}$ form a partition of $\AO$.

Let the full-support Markovian strategy used by \pT\/ be defined by a probability distribution $p: \AT \to \R$, where $0 < p(\aT)$ for all $\aT \in \AT$. At an arbitrary stage of the interaction let $p': \AO \to \R$ be the probability distribution from which \pO\/ draws her next action $X$. Let $Y$ be the independent random variable for the action of \pT\/.
\begin{align*}
 \p(\tr(X,Y) \in \Col_0) & \geq \p(X \in f^{-1}(Y))\\
 	& = \sum_{\aO \in \AO}  \p(X = \aO \wedge Y =f(\aO))\\
 	& = \sum_{\aO \in \AO}  p'(\aO)p(\beta(\aO)) \mbox{ by independence of } X \mbox{ and }Y\\
 	& \geq \sum_{\aO \in \AO}  p'(\aO) \min_{\aT \in \AT}p(\aT)\\
 	& = \min_{\aT \in \AT}p(\aT)> 0 \mbox{ since }\sum_{\aO \in \AO}  p'(\aO) = 1
 \end{align*}
So, starting from any stage of the interaction, the probability that the procuded color is in $\Col_0$ for $n$ times in a row is at least $(\min_{\aT \in \AT}p(\aT))^n$. In particular it is positive and independent of the starting stage. Therefore, the probability that the procuded color is in $\Col_0$ $n$ times in a row somewhere in the run is $1$. Since a countable union of sets of measure zero also has measure zero, the probability that a run has factors in $\Col_0^*$ of arbitrary length is also $1$.

By the assumed factor-set completeness, if the run has factors in $\Col_0^*$ of arbitrary length the run is not in $W$. This shows that the probability that the run is in $W$ is 0.
\end{proof}

\section{Applications}

\begin{proof}[Proof of Lemma~\ref{lem:bool-combo}]
\begin{enumerate}
\item Let $(W_i)_{i \in I}$ be factor-set complete languages, let $W := \cup_{i \in I}W_i$, let $\Col_0$ be set, let $\aHinf\in W$ have factors of unbounded length over $\Col_0$. $\aHinf\in W_i$ for some $i \in I$, so $W_i \cap \Col_0^\omega \neq \emptyset$ by factor-set completeness of $W_i$, so $W \cap \Col_0^\omega \neq \emptyset$.

\item Let $(W_i)_{i \in I}$ be interleaving closed languages, and let $W := \cap_{i \in I}W_i$. Let $\aHinf,\aHinf'\in W$ and let $\aHinf''$ be obtained by interleaving of $\aHinf$ and $\aHinf$. For all $i \in I$, $\aHinf,\aHinf'\in W_i$, so $\aHinf'' \in W_i$ by interleving closeness of $W_i$. Therefore $\aHinf'' \in W$.

\item Let $(W_i)_{i \in I}$ be factor-prefix complete languages, let $W := \cap_{i \in I}W_i$ (resp. $W:= \cup_{i \in I}W_i$), let $\aHinf\in W$, and let the prefixes of some $\aHinf'$ occur arbitrarily far in the tail of $\aHinf$.  By factor-prefix completeness of each $W_i$, $\aHinf'$ belongs to each $W_i$, i.e. to $W$. (resp. $\aHinf\in W_i$ for some $i \in I$, so $\aHinf'\in W_i$, so by factor-prefix completeness of $W_i$, so $\aHinf'\in W$.)
\end{enumerate}
\end{proof}

\begin{proof}[Proof of lemma~\ref{lem:combine-closure}]
\begin{enumerate}
\item Let $W$ be closed under interleaving, let $\colH\colHinf, \colH
 \colHinf
 \in W$, and let $\colHinf^{(2)}$ be obtained by interleaving $\colHinf$ and $\colHinf
$. So $\colH\colH
\colHinf^{(2)}$ can be obtained by interleaving $\colH\colHinf$ and $\colH
\colHinf
$, so it is in $W$.

\item Let $W$ be factor-prefix complete and let $\colH \colHinf \in W$. Let the prefixes of some $\colHinf
$ occur arbitrarily far in the tail of $\colHinf$. These prefixes occur also arbitrarily far in the tail of $\colH \cdot \colHinf$, so $\colHinf
 \ni W$ by factor-prefix completeness.

Let $W$ be factor-set complete and let $\colH \colHinf \in W$. Let $\Col_0$ and let us assume that $\colHinf$ has factors over $\Col_0$ of arbitrary length. So does $\colH \colHinf$, so there exists $\colHinf
 \in W \cap \Col_0^\omega$ by factor-set completeness.

\item Similar as above.

\item Let $W$ be factor-prefix complete, and let $\colHinf$ be obtained by interleaving some $\colHinf^0,\colHinf^1 \in W$, and let the prefixes of some $\colHinf
$ occur infinitely far in the tail of $\colHinf$. Fix a concrete interleaving of $\colHinf^0$ and $\colHinf^1$ that yields $\colHinf$. If it consumes only a finite prefix of $\colHinf^i$, the prefixes of $\colHinf
$ occur infinitely far in the tail of $\colHinf^{1-i}$, and $\colHinf \in W$ by factor-prefix completeness. So, from now on let us assume that the fixed interleaving consumes both $\colHinf^0$ and $\colHinf^1$ entirely.

Let us pair $\colHinf$ and a binary sequence as follows: for all $n \in \N$ let $\rho_n := (\colHinf_n,i)$  if $\colHinf_n$ came from $\colHinf^i$ during the fixed interleaving process. By cardinality argument, for each prefix $\colH
$ of $\colHinf
$ there exists a binary word $u$ such that $(\colH
,u)$ occurs arbitrarily far in $\rho$. Then, by Koenig
s Lemma, there exists ${\bf u} \in \{0,1\}^\omega$ such that for all $n \in \N$ the word $(\colHinf
_{\leq n}, {\bf u}_{\leq n})$ occurs arbitrarily far in $\rho$. Consequently, for all $n \in \N$ the word $\colHinf
_{\leq n, {\bf u}}$ occurs arbitrarily far in $\colHinf^1$, where $\colHinf
_{\leq n, {\bf u}}$ is obtained by considering only the elements $\colHinf
_k$ such that ${\bf u}_k = 1$. The $\colHinf
_{\leq n, {\bf u}}$ are the prefixes of $\colHinf
_{{\bf u}}$, which is obtained similarly, so $\colHinf
_{{\bf u}} \in W$ by factor-prefix completeness. Likewise $\colHinf
_{1-{\bf u}} \in W$. By interleaving these two according to ${\bf u}$, one obtains $\colHinf$, which is therefore also in $W$.

The argument for factor-set completeness is similar.

\item Let $W$ be closed under prefix removal, and let $FP(W)$ be its factor-prefix completion. Let $\colH\colHinf \in FP(W)$. If $\colH\colHinf \in W$ then $\colHinf \in W \subseteq FP(W)$. If $\colH\colHinf \notin W$, let its prefixes occur arbitrarily far in some $\colHinf
 \in W$.The prefixes of $\colHinf$ also occur arbitrarily far in $\colHinf
$, so $\colHinf \in FP(W)$.
\end{enumerate}
\end{proof}

\begin{proof}[Proof sketch of Proposition~\ref{prop:interleaving-win-one-eq}]
The argument uses classical techniques. Let us first consider one-player parity games. If there is a winning run in such a game, there is one that is induced by a positional strategy, i.e. it ultimately goes along a simple cycle, and the minimal color along this cycle is even. For each reachable edge of even color, let us derive a graph by removing all the edges with lower color. If the derived graph has a cycle involving the special edge, there is a reachable cycle with an even minimal color in the original graph. So, the player wins iff one derived graph has a cycle. There are at most $m$ such derived graphs, where $m$ is the number of edges, and deciding the existence of cycles can be done in $O(m)$. So it is decidable in $O(m^2)$ whether a one-player parity game is winnable.

Let us now consider a Muller game $\langle  \AO, \AT, Q, q_0, \delta, \Col, \tr, W\rangle$. Let us expand it into a parity game by using the LAR datastructure sa in~\cite{Thomas97} . The number of states is now $|Q||\Col||\Col|!$ and the number of edges is bounded by $|\AO||\AT||\Col||\Col|!$. As is well-known, \pO\/ has a winning strategy in the Muller game iff she has one in the expanded parity game. So by Corollary~\ref{cor:interleaving-win-one-eq}, the decision can be made in big $O$ of
$$
2^{|Q||\Col||\Col|!}\left(|\AO||\AT| +  (|\AO||\AT||\Col||\Col|!)^2\cdot (|Q||\Col|^2|\Col|!)^{|Q||\Col||\Col|!\left(2^{|Q||\Col|^2|\Col|!}\right)}\right)
$$
\end{proof}


\begin{proof}[Proof of Observation~\ref{obs:brl-fairness}]
\begin{enumerate}
\item Clear.
\item A delay $d$ witnessing $FF_\mathcal{C}(\gamma)$ is also a bound witnessing $BRL_\mathcal{C}(\colHinf)$.
\item  Let $\Col =\{0,1,2\}$, and let the problem $0$ (resp. $2$, resp. $1$) require the solution $1$ (resp. $2$, resp. $0$). Thus, $BRL_\mathcal{C}(02^\omega)$ since only there are always only two unsolved problem $0$ and $2$ (since a prior $2$ is solved by the next $2$). However, $\neg F_\mathcal{C}(02^\omega)$ since the problem $0$ is never solved. Conversely, $F_\mathcal{C}(01001\dots 10^n1\dots)$ since every $0$ is followed by a $1$, but $\neg BRL_\mathcal{C}(01001\dots 10^n1\dots)$ since they are arbitrarily long factor consisting of $0$.
\end{enumerate}
\end{proof}

\begin{proof}[Proof of Observation~\ref{obs:brl-richer-eq}]
$BRLD_\mathcal{C}(\colHinf)\,\Leftarrow\,BRL_\mathcal{C}(\colHinf)$ is clear by taking $d = 0$, and $BRLD_\mathcal{C}(\colHinf)\,\Rightarrow\,BRL_\mathcal{C}(\colHinf)$ is proved by taking $b + d$ for the new bound.
\end{proof}

\begin{proof}[Proof of Lemma~\ref{lem:brl-close-complete}]
\begin{itemize}
\item Prefix removal: more generally, a tail cannot have a bound worse than the sequence it comes from.

\item Interleaving: let $\colHinf,\colHinf' \in \Col^\omega$ be such that $BRL_\mathcal{C}(\colHinf)$ and $BRL_\mathcal{C}(\colHinf')$, and let $b$ and $b'$ be respective witnesses. Let $\colHinf''$ be obtained by interleaving of $\colHinf$ an $\colHinf'$. Then for all $n \in \N$, 
\begin{align*}
|\{k \in \N\,\mid\, k \leq n\,\wedge\,\neg S(k,n-k,\colHinf'')\}| & \leq |\{k \in \N\,\mid\, k \leq n\,\wedge\,\neg S(k,n-k,\colHinf')\}| \\
	& + |\{k \in \N\,\mid\, k \leq n\,\wedge\,\neg S(k,n-k,\colHinf')\}|
\end{align*}
So $b+b'$ witnesses $BRL_\mathcal{C}(\colHinf'')$.

\item factor-prefix: Let $\colHinf$ be such that $BRL_\mathcal{C}(\colHinf)$ and let the prefixes of some $\colHinf'$ occur arbitrarily far in the tail of $\colHinf$. Towards a contradiction, let us assume that $\neg BRL_\mathcal{C}(\colHinf')$. So for all $l \in \N$ there exists a finite prefix $\colH'$ of $\colHinf'$ such that $l$ problems have not yet being solved after $\colH'$ have just been read. But $\colH'$ is a factor of $\colHinf$ so there exists $\colH$ such that $\colH \colH' \sqsubseteq \colHinf$, and the number of unsolved problems after reading $\colH \colH'$ is at least as large as $l$. Since $l$ is arbitrary, this contradicts $BRL_\mathcal{C}(\colHinf)$. Therefore $BRL_\mathcal{C}(\colHinf')$.
\end{itemize}
\end{proof}

Example~\ref{ex:ic-not-comp} below shows that the interleaving-closeness may not be closed under complementation.

\begin{example}\label{ex:ic-not-comp}
\begin{enumerate}
\item The interleaving of two infinite sequences that are not eventually constant is not eventually constant, but interleaving the eventually constant sequences $0^\omega$ and $1^\omega$ may yield $(01)^\omega$, which is not eventually constant.

\item The interleaving of two bounded real-valued sequences with mean payoff in $[0,1]$ has also mean payoff in $[0,1]$. (The mean payoff of a bounded sequence $(x_n)_{n \in \N}$ is, e.g. $\liminf_{n \to \infty}\sum_{i = 0}^nx_i$.) However, interleaving $2^{\omega}$ and $(-2)^\omega$ may yield $(2 \cdot -2)^\omega$, which has mean payoff $0$.

\item Similarly, the interleaving of two sequences of bounded (in some metric space) partial sum has bounded partial sums. However, interleaving $1^{\omega}$ and $(-1)^\omega$, whose partial sums are $(n)_{n\ in \N}$ and  $(-n)_{n\ in \N}$, may yield $(1 \cdot -1)^\omega$, which has  bounded partial sums.
\end{enumerate}
\end{example}

Example~\ref{ex:ic-not-comp} below shows that interleaving-closeness may be closed under complementation.

\begin{example}\label{ex:ic-comp}
\begin{enumerate}
\item The first element of the sequence is in some fixed set.

\item A least one element of the sequence is in some fixed set. (Reachability.)

\item Assuming that the sequences are over a finite subsets of $\N$: the least element occurring infinitely many times is even. (Parity.)

\item Assuming that the sequences are real-valued: the partial sum has a lower bound. (Energy.)

\item Assuming that the sequences are real-valued: the sequence has positive meanpayoff.

\end{enumerate}
\end{example}

Interleaving closeness of the winning condition of a player will lead to a simple characterization of the existence of winning strategy for the player. First question, is interleaving closeness necessary for such a characterization. Second question, is there an informative characterization of

By Lemma~\ref{lem:bool-combo}.\ref{lem:bool-combo2}, if $W$, $\Col^\omega \setminus W$, $W'$, and $\Col^\omega \setminus W'$ are all interleaving-closed, so are their pairwise intersections and unions. A natural question is whether the union of two interleaving-closed sets can be interleaving-closed despite their complements not being so.

\begin{example}
Let $W$ be the subset of non-eventually constant sequences over some set $\Col$, and let $W'$ be a prefix-independant interleaving-closed set over $\Col$. Then $W \cup W'$ is also interleaving-closed and prefix-independent.

\begin{proof}
If the non-eventually constant sequence is fully ``consumed'', the interleaving is non-eventually constant. If the non-eventually constant sequence is not fully ``consumed'', the interleaving is in $W'$.
\end{proof}
\end{example}

\end{document}